\definecolor{goldenpoppy}{rgb}{0.99, 0.76, 0.0}
\definecolor{glaucous}{rgb}{0.38, 0.51, 0.71}
\definecolor{blue-violet}{rgb}{0.54, 0.17, 0.89}
\crefname{algocf}{Algorithm}{Algorithm}
\newtheorem{theorem}{Theorem}
\newtheorem{lemma}[theorem]{Lemma}
\newtheorem{proposition}[theorem]{Proposition}
\newtheorem{corollary}[theorem]{Corollary}
\newtheorem{definition}{Definition}
\begin{document}

\newcommand{\diag}{\textup{diag}}
\newcommand{\td}{\textup{TD}}
\newcommand{\E}{\mathbb{E}}
\newcommand{\R}{\mathbb{R}}

\newcommand{\algoname}{\textsc{SAC}}
\newcommand{\algonamefull}{Scalable Actor Critic}

\newcommand{\khop}{{ \kappa}}
\newcommand{\rhok}{{ \rho^{\khop+1}}}
\newcommand{\fk}{{ f(\khop)}}
\newcommand{\nik}{{ N_i^{\khop}}}
\newcommand{\njk}{{ N_j^{\khop}}}
\newcommand{\nminusik}{{ N_{-i}^{\khop}}}
\newcommand{\nminusjk}{{ N_{-j}^{\khop}}}

\newcommand{\final}[1]{{\color{red}#1}}

\title{Thinking Beyond Visibility: A Near-Optimal Policy Framework for Locally Interdependent Multi-Agent MDPs}

\author{Alex DeWeese\thanks{Department of Electrical and Computer Engineering, Carnegie Mellon University. Correspondence to: \url{mdeweese@andrew.cmu.edu}}\qquad Guannan Qu\footnotemark[1]}
\date{}
\maketitle

\begin{abstract}
Decentralized Partially Observable Markov Decision Processes (Dec-POMDPs) are known to be NEXP-Complete and intractable to solve. However, for problems such as cooperative navigation, obstacle avoidance, and formation control, basic assumptions can be made about local visibility and local dependencies. The work \cite{deweese2024locally} formalized these assumptions in the construction of the Locally Interdependent Multi-Agent MDP. In this setting, it establishes three closed-form policies that are tractable to compute in various situations and are exponentially close to optimal with respect to visibility. However, it is also shown that these solutions can have poor performance when the visibility is small and fixed, often getting stuck during simulations due to the so called ``Penalty Jittering'' phenomenon. In this work, we establish the Extended Cutoff Policy Class which is, to the best of our knowledge, the first non-trivial class of near optimal closed-form partially observable policies that are exponentially close to optimal with respect to the visibility for any Locally Interdependent Multi-Agent MDP. These policies are able to remember agents beyond their visibilities which allows them to perform significantly better in many small and fixed visibility settings, resolve Penalty Jittering occurrences, and under certain circumstances guarantee fully observable joint optimal behavior despite the partial observability. We also propose a generalized form of the Locally Interdependent Multi-Agent MDP that allows for transition dependence and extended reward dependence, then replicate our theoretical results in this setting.

\end{abstract}

\section{Introduction}
\label{intro}
Tasks such as cooperative navigation, obstacle avoidance, and formation control are important in applications like autonomous driving, UAVs, and robot navigation. These problems are often characterized by having local dynamic interactions (such as collisions) and partially observable regions that extend from each agent. The partial observability makes optimal navigation problems difficult which is further exacerbated by the so-called curse of dimensionality, the phenomenon where the number of possible states and actions increases exponentially with the number of agents. 

Theoretically, much of the prior literature models cooperative agents in this situation using a Decentralized Partially Observable Markov Decision Process (Dec-POMDP). Unfortunately, Dec-POMDPs are known to be NEXP-Complete (\cite{oliehoek2012decentralized}; \cite{goldman2004decentralized}) and are generally considered intractable to solve. 
In response to this, \cite{deweese2024locally} proposed the Locally Interdependent Multi-Agent MDP that exploits structural properties of these specific problems %
in a way that is amenable to theoretical analysis. In particular, for the proposed setting, it is shown that there exist three closed-form solutions to this partially observable problem called the Amalgam policy, Cutoff policy, and First Step Finite Horizon Optimal policy. These policies are provably exponentially close to optimal with the increase in visibility and are not just more tractable than Dec-POMDP solutions but in many cases are more tractable to compute than solutions to general multi-agent MDPs without visibility restrictions. Partial observability helps with tractability in this case, since agents only need to compute coordination behavior for local agents (see \cref{scalable}). Furthermore, this exponentially decreasing bound is matched with a lower bound up to constant factors, showing that these bounds are near optimal and cannot be improved significantly for this general setting. 

However, for many applications of interest, these few policies with basic asymptotic guarantees may not always be sufficient. In this work, we delineate two types of partial observability assumptions that can arise from real-world problems: (1) tunable observability which is not limited by our application of interest and (2) a small and fixed observability that is enforced by the application (see \cite{han2020cooperative}, \cite{long2018towards}). In scenario (1) we may choose a visibility by using asymptotic results, such as those shown in \cite{deweese2024locally} that will reduce computation time (see \cref{scalable}) and still guarantee the performance of the solution. However, in scenario (2), the asymptotic theoretical guarantees are loose for small and fixed visibilities and cannot be improved beyond what is established in (1) because of the theoretical lower bound. In this case, we are interested in not only the asymptotic results but also the performance of the policy beyond what we can guarantee theoretically for general Locally Interdependent Multi-Agent MDPs. In other words, we are interested in the ``implicit bias'' of the near optimal solutions when the visibility is small and fixed. Unfortunately, the three closed-form policies can often have poor performance in this case due to a fundamental lack of memory of other agents that cause ``forgetfulness'' in the policies resulting in the Penalty Jittering phenomenon (see simulations in \cite{deweese2024locally} and \cref{simulations}). It is unclear whether these exponentially close to optimal policies are viable in this scenario and would require selection across a larger set of these policies that integrate a mechanism for agents to remember other agents beyond their own visibilities. This motivates the following question:
\vspace{1ex}\\
\textit{Can we find a class of partially observable policies for Locally Interdependent Multi-Agent MDPs that are theoretically near optimal and improve performance in small visibility settings?}
\vspace{1ex}\\
\noindent We answer this affirmatively with the following contributions.
\subsection{Contributions}
\label{contributions}
In this work, we propose and provide theoretical guarantees for the Extended Cutoff Policy Class which include partially observable policies that have the ability to remember agents beyond their visibility. We show that these policies can resolve fundamental issues for the small visibility setting and will provide a wider selection of policies to choose from for any given application. This class will also include the original stationary policies (Amalgam Policy, Cutoff Policy, and First Step Finite Horizon Optimal Policy) presented in \cite{deweese2024locally} and reveal the underlying theoretical connection between them. 
We will answer the following questions about the Extended Cutoff Policy Class corresponding to the two scenarios presented in the introduction.
\vspace{1ex}\\
\noindent\textit{(1) Are policies in this class near optimal theoretically?}
In this work, we show that all the partially observable policies contained in the Extended Cutoff Policy Class are exponentially close in performance to the fully observable joint optimal policy with respect to the visibility. Performance will be constant factors away from the theoretical lower bound in \cite{deweese2024locally}, establishing near optimality. Further, we propose the Generalized Locally Interdependent Multi-Agent MDP framework that loosens the assumptions in the setting to allow for transition dependence and extended reward dependence between agents. We also show in this generalized setting that the partially observable policies in the Extended Cutoff Policy Class match the lower bound up to constant factors and are exponentially close in performance to the fully observable joint optimal policy with respect to the visibility.

\noindent\textit{(2) Can we improve performance using this policy class when visibility is small and fixed?}
In the small and fixed visibility setting, the Extended Cutoff Policy Class we propose contains a rich class of partially observable policies characterized by a ``thinking radius'' that can be arbitrarily larger than the visibility radius. Despite the small visibility, policies in our class can reason about the other agents beyond the visibility by predicting the position of other agents, either through random placement or through estimates based on past observations. Despite having the same worst-case guarantee,
the incorporation of this thinking process creates solutions with favorable implicit biases and performance in many small visibility environments. 

This leads to at least the following concrete benefits: (i) The Extended Cutoff Policy Class provides a large body of non-trivial partially observable solutions that are available for any Locally Interdependent Multi-Agent MDP instance with nuanced behavior in the small and fixed visibility setting. Previously, only three options were available from \cite{deweese2024locally} that had known performance issues in the small visibility regime.
(ii) The class includes policies that can resolve the Penalty Jittering issue seen in \cite{deweese2024locally} which causes agents to get stuck frequently in small visibility settings due to a lack of memory beyond its visibility (see \cref{penalty_oscillation}) (iii) Under very basic conditions, a specific instance of the Extended Cutoff Policy Class will achieve fully observable joint optimal behavior. That is, despite the policy being partially observable with a small visibility, it will be theoretically guaranteed to attain the best possible expected discounted rewards that can be achieved even without visibility constraints.

\subsection{Related Work}
This work will primarily build on the Locally Interdependent Multi-Agent MDP that is established in \cite{deweese2024locally}. It is a special case of the general Dec-POMDP (\cite{oliehoek2012decentralized}; \cite{oliehoek2016concise}) with additional assumptions that model partially observable optimal navigation problems appearing in applications such as robot navigation (\cite{han2020cooperative};\cite{long2018towards}), autonomous driving (\cite{palanisamy2020multi}; \cite{aradi2020survey}), and UAVs  (\cite{baldazo2019decentralized}; \cite{batra2022decentralized}). Unfortunately, Dec-POMDPs are known to be NEXP-Complete and much of the successful literature in this area is empirical. There have been other theoretical approaches that make assumptions on the Dec-POMDP structure such as the Interaction Driven Markov Game (IDMG) (\cite{spaan2008interaction}; \cite{melo2009learning}) or the Network Distributed POMDP (\cite{nair2005networked}; \cite{zhang2011coordinated}) but to the best of our knowledge these methods inherit hardness from the general Dec-POMDP and are intractable to solve in general (\cite{goldman2004decentralized}; \cite{bernstein2002complexity}; \cite{allen2009complexity}). This work can be viewed as a part of the scalable RL literature (\cite{qu2022scalable},\cite{qu2020scalableaverage}, \cite{qu2020scalablepolicy}) that make various assumptions on the general multi-agent MDP structure to improve scalability and establish theoretical performance guarantees. However, this literature did not attempt to model the dynamic dependencies that appear in problems such as cooperative navigation, obstacle avoidance, and formation control until the work by \cite{deweese2024locally}. What we present will also have resemblances to the ``centralized training decentralized execution'' concept introduced in \cite{lowe2017multi} but our work more intricately explores this idea in a theoretical context (see \cref{extended_cutoff}).

\section{Background}
\subsection{Standard Locally Interdependent Multi-Agent MDP}
\label{limdp}

\noindent The Locally Interdependent Multi-Agent MDP setting, proposed in \cite{deweese2024locally}, is specifically designed to model systems that perform local tasks such as cooperative navigation, obstacle avoidance and formation control. 

\textbf{State and Action:} We begin with a set of $\mathcal N = \{1,..., n\}$ agents, each of which lies in a common metric space $\mathcal X$ with an associated distance metric $d$ that can be interpreted as the ``environment'' for the agents. Each agent $i \in \mathcal N$ will have a corresponding position $x_i \in \mathcal X$ in this environment, as well as an internal state $y_i\in \mathcal Y_i$. The state of agent $i$ is then defined as $(x_i, y_i) \in \mathcal S_i$ where $ \mathcal S_i = \mathcal X \times \mathcal Y_i$. The distance notation $d$ will be overloaded to define a distance over states $d((x_i, y_i), (x_j, y_j)) = d(x_i, x_j)$ for $(x_i, y_i) \in \mathcal S_i$ and $(x_j, y_j) \in \mathcal S_j$.

This distance metric can be used to define a partition over agents $\mathcal N$ as a function of $s \in \mathcal S$ where $ \mathcal S = \mathcal S_1\times\mathcal S_2 \times ... \times \mathcal S_n$. Specifically for some $\mathcal K>0$ we say that two agents are ``related'' or ``in the same partition'' if there exists a sequence of agents $k_0, k_1, ..., k_\ell \in \mathcal N$ such that $d(s_{k_i}, s_{k_{i + 1}}) \leq \mathcal K$  for $i \in \{1,2,..., \ell - 1\}$. Formally, this defines an equivalence relation and thus defines a partition. As part of the setting, we will define a dependence radius $\mathcal R$ and in the case $\mathcal K = \mathcal R$, we will denote this as the dependence partition $D(s)$. Later we will define the visibility constant $\mathcal V > \mathcal R$ and setting $\mathcal K = \mathcal V$ will define the communication partition $Z(s)$. %

\textbf{Reward:} An agent $i$ lying in the partition $g \in D(s)$ will attain a reward $r_i(s_g, a_g) = \sum_{j \in g} \overline r_{i,j}(s_i, a_i, s_j, a_j)$ for some predefined pairwise reward function $\overline{r}_{i,j}$ where $s_g, a_g$ are the states and actions of agents within the group $g$. In the standard version, we assume $ \overline r_{i,j}(s_i, a_i, s_j, a_j) = 0$ if $d(s_i, s_j) > \mathcal R$ and therefore even if they are in the same $D(s)$ partition, $\overline r_{i,j}$ may be enforced to be $0$ if the agents are farther than $\mathcal R$ apart. This assumption will hold when modeling local phenomena like collisions and is helpful analytically because if $\overline r_{i,j}(s_i, a_i, s_j, a_j)$ is non-zero, it will contribute that value to the reward regardless of $D(s)$ and will be zero otherwise. That is, $\sum_{g \in D(s)} r_g(s_g, a_g) = \sum_{i \in \mathcal N, j \in \mathcal N}\overline r_{i,j}(s_i, a_i, s_j, a_j)$ where $r_g(s_g, a_g) = \sum_{i \in g} r_i(s_g, a_g)$. We loosen this assumption in the Generalized Locally Interdependent Multi-Agent MDP shown in \cref{generalized}. %

\textbf{Transition:} Agents in this setting will move with independent transitions $P_i(s'_i \lvert s_i, a_i)$. Crucially, we assume that if $d(s'_i, s_i) > 1$ then $P_i(s'_i \lvert s_i, a_i) = 0$. In other words, agents are not allowed to move more than a unit distance at each step, serving as a ``speed limit'' for the environment. We loosen the transition independence requirement in the generalized version in \cref{generalized} where local transition dependence is allowed. %

\begin{definition}
(Locally Interdependent Multi-Agent MDP) Assume a set of agents $\mathcal N$ and a set of states in a common metric space $(\mathcal X, d)$. Furthermore, for $i \in \mathcal N$ assume a set of internal states $\mathcal Y_i$, a set of actions $\mathcal A_i$, a transition function $P_i$ such that $P_i(s'_i \lvert s_i,a_i) = 0$ if $d(s_i, s'_i) > 1$, and a reward function $r_{i,j}$ such that $ \overline r_{i,j}(s_i, a_i, s_j, a_j) = 0$ if $d(s_i, s_j) > \mathcal R$ for $j \in \mathcal N$. The Locally Interdependent Multi-Agent MDP $\mathcal{M} = (\mathcal{S}, \mathcal A, P, r, \gamma, \mathcal R)$ is defined as follows:
\begin{itemize}
\item $\mathcal S := \times_{i \in \mathcal N}\mathcal S_i$ where $\mathcal S_i = (\mathcal X, \mathcal Y_i)$
\item $\mathcal A := \times_{i \in \mathcal N}\mathcal A_i$ 
\item $ P(s' \lvert s, a) = \prod_{i\in \mathcal N} P_i(s'_i \lvert s_i,a_i)$ 
\item $ r(s, a) = \sum_{g \in D(s)} r_g(s_g,a_g)$
\end{itemize}
where $r_g(s_g, a_g) = \sum_{i, j\in g} \overline r_{i,j}(s_i, a_i, s_j, a_j)$
\end{definition}

\textbf{Partial Observability and Group Decentralized Policies:} Partial observability is introduced by allowing agents within distance $\mathcal V > \mathcal R$ to communicate and coordinate. Communication is transitive and if there exist two agents separated by a chain of agents within distance $\mathcal V$, we allow those agents to communicate (see \cref{circles} for an illustration). Concisely, agents can communicate and act together if they are within the same communication partition $Z(s)$ defined previously.

Formally, partially observability is captured by the class of group decentralized policies which take the form $\pi(a\lvert s) = \prod_{z \in Z(s)} \pi_z(a_z \lvert s_z)$ or in the deterministic case, $\pi(s) = (\pi_z(s_z): z \in Z(s))$ which represents a concatenation of deterministic sub-policies. Finite horizon group decentralized policies are simply sequences of these stationary group decentralized policies. In this work, we will also introduce non-Markovian group decentralized policies that allow agents to communicate prior observations within their communication group. Specifically, a non-Markovian group decentralized policy will decompose as $\pi'(a\lvert s) = \prod_{z \in Z(s(t))} \pi_{I(\tau^t, z)}'(a_z \lvert s_z(t))$. Here $\tau^t$ represents the trajectory up to time $t$ and $I(\tau^t, z)$ is a set that includes all group states and time step tuples $(s_{z'}(t'), t')$ for some $z' \in Z(s(t'))$ and $t' \in \{0, \ldots, t\}$ such that there is a sequence of group states $s_{z_{t'}}(t'), s_{z_{t' + 1}}(t' + 1), \ldots  s_{z_{t}}(t)$ with $z_{t'} = z'$, $z_{t} = z$ such that $z_{t''} \cap z_{t'' + 1} \neq \varnothing,\forall t'' \in \{t',t' + 1, \ldots, t-1 \}$ and $z_{t''} \in Z(s(t'')),\forall t'' \in \{t' + 1, \ldots, t -1 \}$. In other words, agents may communicate within their visibility group about states for other agents that have been seen in the past. $I(\tau^t, z)$ then contains all the information we allow the group $z$ to use when deciding on an action.

\textbf{Applications:} Notice that this framework can be used to model cooperative navigation, obstacle avoidance, and formation control problems. We can express collisions in cooperative navigation by designing $\overline r_{i,j}(s_i, a_i, s_j, a_j)$ to be negative penalties for some $s_i, s_j$ with $d(s_i,s_j) \leq \mathcal R$ to disincentivize agents from getting close to one another.  Obstacles can be modeled by agents which are unable to take any actions ($\mathcal A_i$ is a singleton) and penalize other agents that are too close. We may also encourage formation control by making $\overline r_{i,j}(s_i, a_i, s_j, a_j)$ to be a positive reward that incentivize agents to move to specific relative locations within $\mathcal R$ of each other. If more complex local behavior is required (such as collisions stopping agent movement or formations that only reward agents when they are all in position), local transition dependence and extended reward dependence are allowed with the generalized form introduced in \cref{generalized}. The group decentralized policies can model the partial observability of the agents that must act only based on their immediate communication group and information that has been seen in the past.

\textbf{Objective:} 
Our goal is to find group decentralized polices $\pi$ that are defined for all Locally Interdependent Multi-Agent MDP instances for any visibility and have high expected discounted rewards 
$V^{\pi} (s) = \mathbb{E}_{\tau \sim \pi\lvert_{s}} \bigg[\sum_{t = 0}^{\infty} \gamma^{t} r(s(t), a(t))\bigg]$. Here, $\tau \sim \pi\lvert_s$ indicates a trajectory in the Locally Interdependent Multi-Agent MDP starting with state $s$ and transitioning with dynamics $P$ with policy $\pi$.

\textbf{Additional Notation:}
A partition $\mathcal P$ is said to be finer than a partition $\mathcal P'$ if for every $p \in \mathcal P$ there exists some $p' \in \mathcal P'$ such that $p \subset p'$. In this case, we also say that $\mathcal P'$ is coarser than $\mathcal P$. Since $D(s)$ is finer than $Z(s)$, we may create the equivalent expression for the reward $r(s,a) = \sum_{g \in D(s)} r_g(s_g,a_g) = \sum_{z \in Z(s)} r_z(s_z,a_z)$ where $ r_z(s_z, a_z) = \sum_{g \in D(s_z)} r_g(s_g,a_g)$. We will also denote an upper bound on the reward at each step as $\tilde r = \sum_{i,j\in \mathcal N} \max_{s_i, a_i, s_j, a_j}\lvert \overline r_{i,j}(s_i, a_i, s_j, a_j)\rvert$.

\begin{figure*}[t!]
    \centering
    \begin{subfigure}[b]{0.45\textwidth}
        \raggedleft
        \includegraphics[width=0.8\textwidth]{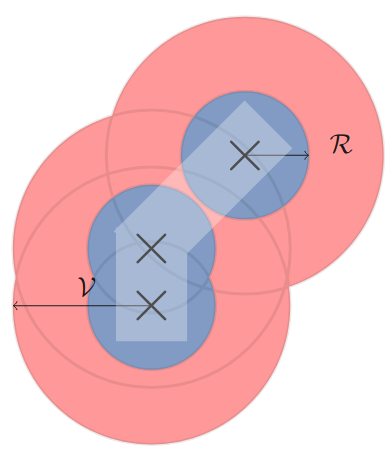}
        \caption{Illustration of 3 agents in $\mathbb R^2$ in the Locally Interdependent Multi-Agent MDP taken from \cite{deweese2024locally}. The top and bottom agent communicate transitively.\label{circles}}
        \label{fig:sub1}
    \end{subfigure}
    \hfill
    \begin{subfigure}[b]{0.45\textwidth}
        \centering
        \includegraphics[width=0.8\textwidth]{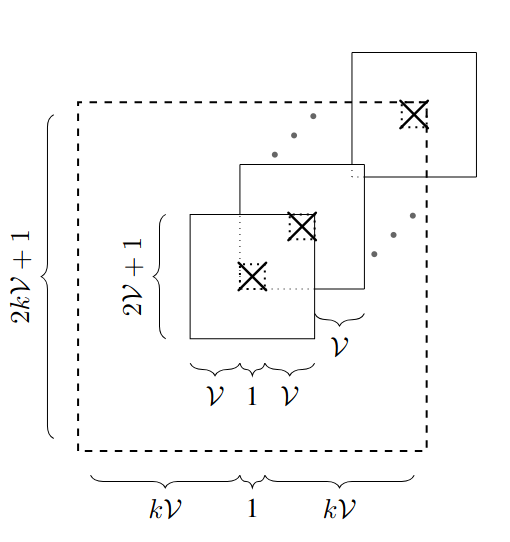}
        \caption{Square of length $\ell = 2k\mathcal V + 1$ of possible locations for $k$ other agents in the example presented in \cref{scalable}.\label{squares}\\}
        \label{fig:sub2}
    \end{subfigure}
\end{figure*}

\subsection{Scalability}
\label{scalable}
In many discrete environments, the number of possible group states $\big\lvert\{s_z: z \in Z(s), s \in \mathcal S\}\big\rvert$ is significantly smaller than the number of possible states $\lvert\mathcal S\rvert$. Since group decentralized policies are defined over group states, this difference in size can improve the computation time in algorithms and reduce storage requirements.

In order to concretely demonstrate the improvement, we will investigate an illustrative example. Consider an environment on a large square grid with $M = L^2$ spaces. We will use the Chebyshev distance metric  $d((c_1, c_2), (c_1', c_2')) = \max (\lvert c_1 - c_1' \rvert, \lvert c_2 - c_2'\rvert)$ and assume that the visibility of the $n$ agents is $\mathcal V$. Notice that for any group state with $k+1$ agents in this environment, if a particular agent is placed at a position, the other $k$ agents cannot be more than $k \mathcal V$ away from that position (see \cref{squares}). Therefore, the placement of the other agents will be restricted to a square of length $\ell = 2(n - 1) \mathcal V + 1$. Letting $m = \ell^2$ and assuming $\mathcal V \geq \frac{1}{2}$, notice that this leads to the upper bound on the number of group states $nM \sum_{k = 0}^{n - 1} {n - 1\choose k} (2k\mathcal V + 1)^{2k} \leq nM \sum_{k = 0}^{n -1 } {n - 1\choose k}  m^k = nM(m + 1)^{n - 1} \leq M (m + 1)^n$ using the binomial theorem. Compared to the total number of states $M^n$ the exponential dependence now is on $m$ instead, and this can be a considerable improvement when $L\gg \ell$. Simply due to the exponent, if $\ell \leq L/r$ for some $r > 1$, then $(m+1)^n \approx m^n$ is at least a factor of $(1/r)^{2n}$ smaller than $M^n$.

Although this can greatly reduce the computation time, we have technically not eliminated the curse of dimensionality because of the persistence of the exponent $n$. This can be resolved by using heuristics or approximations when group sizes $n$ exceed some threshold. See the Random Navigation With Many Agents simulation in \cref{b_many} where large group heuristics and locality are used to tractably learn a navigation strategy for 100 agents that achieve random objectives and do not collide, without the use of deep learning. For this representative example, the heuristics are relied upon minimally because the group sizes remain small most of the time.

\subsection{Cutoff Multi-Agent MDP}
\label{cutoff}
Before constructing the Extended Cutoff Policy Class, we will introduce a companion to the Locally Interdependent Multi-Agent MDP called the Cutoff Multi-Agent MDP. Intuitively, this is an analogous setting where agents that disconnect are not allowed to reconnect. This will allow us to talk about independent trajectories for groups without interactions with other groups.
 
 Defining the intersection of partitions for $\mathcal P_1$ and $\mathcal P_2$ over $\mathcal N$ to be $\mathcal P_1 \cap \mathcal P_2  = \{p_1 \cap p_2 \lvert p_1 \in \mathcal P_1, p_2 \in \mathcal P_2\} \setminus \{\varnothing\}$, the following is the definition of the Cutoff Multi-Agent MDP.
\begin{definition}
(Cutoff Multi-Agent MDP) Given a Locally Interdependent Multi-Agent MDP $\mathcal{M} = (\mathcal{S}, \mathcal A, P, r, \gamma, \mathcal R)$, we may define the Cutoff Multi-Agent MDP $\mathcal{C} = (\mathcal{S}^\mathcal C, \mathcal A^\mathcal C, P^\mathcal C, r^\mathcal C, \gamma, \mathcal R, \mathcal V)$ as follows:
\begin{itemize}
\item $\mathcal S^\mathcal C := \{(s, \mathcal P) : s \in \mathcal S, \mathcal P \in F(Z(s))\}$ 
\item $\mathcal A^\mathcal C := \mathcal A$
\item $ P^\mathcal C((s', Z(s') \cap \mathcal P) \lvert (s, \mathcal P), a) = P(s' \lvert s,a)$ and $0$ otherwise
\item $ r^\mathcal C((s, \mathcal P), a) = \sum_{p \in \mathcal P}r_p(s_p,a_p)$
\end{itemize}
where $F(Z(s))$ represents all possible partitions finer than $Z(s)$ and $r_p(s_p, a_p) = \sum_{g \in D(s_p)} r_{g}(s_{g}, a_{g})$.
\end{definition}

Notice that the definition is similar to the Locally Interdependent Multi-Agent MDP, except the state now includes a partition over the agents. Agents in different partitions are not allowed to interact with each other in either the transition or reward, and this partition only gets finer with time as it is intersected with $Z(s')$ as part of its transition. In other words, when agents are disconnected, they will never reconnect again.

\textbf{Proper Cutoff Policies:} The analogous policy class to the group decentralized policy class in this setting are the proper cutoff policies defined as $\pi(a\lvert (s, \mathcal P)) = \prod_{p \in \mathcal P} \pi_p(a_p \lvert s_p)$ or in the deterministic case, $\pi(s, \mathcal P) = (\pi_p(s_p): p \in \mathcal P)$. In other words, the visibility is also ``cut off'' when agents disconnect. We also define finite horizon proper cutoff policies to be sequences of these stationary proper cutoff policies.

\textbf{Objective:} For some horizon $H$ and non-stationary policy $\pi = (\pi_0, \pi_1, \ldots, \pi_H)$, the notation $\tau_h \sim \pi\lvert_{s, \mathcal P}$ will denote the trajectory $((s(t), \mathcal P(t)), a(t))$ for $t \in \{h, h + 1, ..., H\}$ starting at $s(h) = s$ and $\mathcal P(h) = \mathcal P$ while taking policy $\pi_t$ at time step $t$. We will define the (discounted) finite horizon values as
$V^{\pi}_h (s, \mathcal P) = \mathbb{E}_{\tau_h \sim \pi\lvert_{s,\mathcal P}} \bigg[\sum_{t = h}^{H} \gamma^{t-h} r^\mathcal C((s(t), \mathcal P(t)), a(t))\bigg]$. We will be interested in maximizing $V^\pi_0(s, \mathcal P)$ for various horizons $H$ and various visibilities $\mathcal V$. 
This value function is disambiguated from the value function for the Locally Interdependent Multi-Agent MDP by its two parameters as opposed to one. It can also be shown that the optimal policy  for the Cutoff Multi-Agent MDP is a proper cutoff policy (see \cite{deweese2024locally}). %

\textbf{Computation and Storage:} In the case $\pi$ is a proper cutoff policy, once agents are disconnected, they never interact again. This allows for the following decomposition on the value function $V^\pi_h(s, \mathcal P) = \sum_{p\in \mathcal P} V^{\pi_p}_h(s_p, \{p\})$ where $V^{\pi_p}_h(s_p, \{p\})$ is the value function for the Cutoff Multi-Agent MDP using $p$ in place of $\mathcal N$ as the set of agents with their corresponding state spaces, action spaces, transition functions, and reward functions. 
This permits the storage space improvement described in \cref{scalable} for the value function, as the values only need to be stored for each $(s_p, \{p\})$. When agents are homogeneous, the requirement decreases even further because all value functions of the form $V^{\pi_p}_h(s_p, \{p\})$ will be the same when $\lvert p\rvert$ is the same. 

As mentioned previously, the optimal policy for the Cutoff Multi-Agent MDP is a proper cutoff policy (See \cite{deweese2024locally}), and this improvement will also extend to computation time in the Bellman Optimality Equations for the Cutoff Multi-Agent MDP:
\begin{align*}
&V^*_h(s_p, \{p\}) = \max_{a_p} Q^*_h((s_p, \{p\}), a_p)\\
Q_h^*((s_p, \{p\}), &a_p) = r^{\mathcal C}((s_p, \{p\}), a_p) + \gamma \mathbb{E}_{s_p' \sim P(\cdot\lvert s_p, a_p)}\left[\sum_{z \in Z(s_p')} V_{h + 1}^*(s_z, \{z\})\right]
\end{align*}
Since the values only need to be computed for each $(s_p, \{p\})$ and the equations query the value at other states with similar form, we attain a computational improvement like what is presented in \cref{scalable}. These improvements also extend to the Bellman Consistency Equations and common algorithms such as policy iteration and policy gradients. These properties are why we favor computation in the Cutoff Multi-Agent MDP as opposed to directly solving the Locally Interdependent Multi-Agent MDP.

\section{Main Results}
\subsection{Extended Cutoff Policy Class}
\label{extended_cutoff}
As discussed in the Introduction in \cref{intro}, our goal is to find a policy class that is defined for all Locally Interdependent Multi-Agent MDP instances, is exponentially close optimal with respect to the visibility, and performs well under the small and fixed visibility regime. The policy class we propose in this paper is called the Extended Cutoff Policy Class. 
Intuitively, these are the solutions to the Cutoff Multi-Agent MDP with some horizon and extended computational visibility that are converted into a group decentralized policies for the Locally Interdependent Multi-Agent MDP. This extended computational visibility can be thought of as the ``thinking radius'' and the extended visibility solution can allow a group of agents to act based on an estimation of the position of other agents. We may increase this ``thinking radius'' to make the computation phase more connected until it is fully centralized. In this fully centralized regime, it resembles the ``centralized training decentralized execution'' framework described in \cite{lowe2017multi} which was introduced in an empirical context. In our case, however, we are able to control the level of connectivity during the computation phase, trading off with the computation time improvements described in \cref{scalable}.

\textbf{Extended Cutoff Multi-Agent MDP:} To begin our construction for the Extended Cutoff Policy class, we will consider a Locally Interdependent Multi-Agent MDP with visibility $\mathcal V_{exec}$. During the computation phase, we will solve the corresponding Cutoff Multi-Agent MDP with extended visibility $\mathcal V_{comp} = \mathcal V_{exec} + \xi$ and horizon $c + \eta$ where $c = \lfloor \frac{\mathcal V_{exec} - \mathcal R}{2}\rfloor$, $\xi\geq 0$, and $\eta \in \{0, 1, ...\}$.  We let $Z_{exec}$ and $Z_{comp}$ to be the corresponding visibility partitions of $\mathcal V_{exec}$ and $\mathcal V_{comp}$. In addition, we define the possible group states in the execution phase $\mathcal S_{exec} = \{s_{z}:z \in Z_{exec}(s), s\in \mathcal S\}$ and the computation phase $\mathcal S_{comp} = \{(s_{p}, \{p\}): p \in \mathcal P, (s, \mathcal P)\in\mathcal S^{\mathcal C}\}$ . In terms of $\mathcal S_{comp}$, for any state during the computation phase $(s, \mathcal P) \in \mathcal S^{\mathcal C}$, all agents within the same partition $\mathcal P$ are within the same partition $Z_{comp}(s)$ according to the definition of $\mathcal S^\mathcal C$. We consider all partitions $\mathcal P$ that are finer than $Z_{comp}(s)$ and so $\mathcal S^\mathcal C$ contains the group states $s_{z'}$ for $z' \in Z_{comp}(s)$ which will be split further by every $\mathcal P$, and will include all possible ``subgroups'' that may emerge from the ``permanent disconnection'' in the Cutoff Multi-Agent MDP. Therefore, $\mathcal S_{comp}$ becomes the analog to $\mathcal S_{exec}$ which describes all possible communication groups and their corresponding group states that may appear within the extended visibility Cutoff Multi-Agent MDP. We will solve and find the discounted finite horizon solution to the  extended Cutoff Multi-Agent MDP problem $\pi^{\xi, \eta}_{comp} = (\pi_0^{\xi, \eta}, \pi_1^{\xi, \eta}, ..., \pi_{c + \eta}^{\xi, \eta})$. Notice that as $\xi$ increases, the setting becomes progressively more centralized and the calculation of this solution becomes more expensive (see \cref{scalable}). Although $\pi^{\xi, \eta}_{comp}$ is defined over states in $\mathcal S$ and partitions over $\mathcal N$, it will be understood in the section below that $\pi^{\xi, \eta}_h(\cdot \lvert (s_p, \{p\}))$ for any $(s_p, \{p\}) \in \mathcal S_{comp}$ will be $[\pi^{\xi, \eta}_h]_p (\cdot \lvert (s_p, \{p\}))$ which is the partial policy obtained from the definition of the proper cutoff policy (see \cref{cutoff}).
 
\textbf{Extraction Method:} Now that we have solved an extended visibility Cutoff Multi-Agent MDP, we would like to convert this solution into a policy that may be executed in our original Locally Interdependent Multi-Agent MDP setting with the original visibility $\mathcal V_{exec}$. During execution time at time step $t$, we will obtain some state $s(t)$ in the Locally Interdependent Multi-Agent MDP and need to decide on an action to take for $s_z(t)$ with $z \in Z_{exec}(s(t))$. We construct some extraction method $\rho(I(\tau^t, z))$ from the computation to the execution phase which is a random variable that depends on $s_z(t) \in \mathcal S_{exec}$ as well as the entire observation history of group $z$ during the execution phase, and takes values in the set of group states $\mathcal S_{comp}$ that arise during the computation phase (notation for $I(\tau^t, z)$ is introduced in \cref{limdp}). Since $\mathcal V_{comp}\geq \mathcal V_{exec}$, this extraction method can intuitively be viewed as a ``belief'' of the position of other agents if the visibility were $\mathcal V_{comp}$. Along these lines, denoting $(s_p, \{p\})$ as the outcome of $\rho(I(\tau^t, z))$,  we require for all $z \in Z_{exec}(s(t))$ that $z\in Z_{exec}(s_p)$ and $[s_p]_z = s_z(t)$ . %
In other words, the ``belief'' that a particular group has about the position of the other agents does not contradict the current observations of the visibility group and its immediate vicinity. 
 We will then take the policy $\pi^{\xi,\eta}_{exec}(a \lvert s(t)) = \prod_{z\in Z_{exec}(s(t))}[\pi_{0}^{\xi, \eta}]_z(a_z \lvert \rho(I(\tau^t,z)))$ where $[\pi_{0}^{\xi, \eta}]_z$ is the first step of the finite horizon optimal policy solution to our extended Cutoff Multi-Agent MDP evaluated at the current ``belief'' $I(\tau^t, z)$, restricted to the actions for agents in group $z$. Notice that this is a non-Markovian group decentralized policy introduced in \cref{limdp}.

\begin{definition} (Extended Cutoff Policy Class)
We define the Extended Cutoff Policy Class to be all group decentralized policies of the form $\pi^{\xi,\eta}_{exec}(a \lvert s(t)) = \prod_{z\in Z_{exec}(s(t))}[\pi_{0}^{\xi, \eta}]_z(a_z \lvert \rho(I(\tau^t,z)))$ with any valid extraction method $\rho(I(\tau^t,z))$ for all $\xi \geq 0$, $\eta \in \{0,1, ...\}$.  
\end{definition}
\textbf{Extraction Examples:} In this work, we show that every element of the Extended Cutoff Policy Class is exponentially close to optimal with respect to the visibility $\mathcal V_{exec}$ (see \cref{main_results}).%
\\
Here, we will primarily refer to three types of extraction methods:
\begin{itemize}
    \item Trivial Extraction: $\rho (I(\tau^t, z)) = (s_z(t), \{z\})$ for $z \in Z_{exec}(s(t))$
    \item Aggregate Extraction: Any $\rho (I(\tau^t, z))$ with a distribution that only depends on $s_z(t)$
    \item Simple Memory Based Extraction: The first component of the output by \cref{simple_algo} in \cref{simple_mem}.
\end{itemize}
Intuitively, Trivial Extraction directly queries the value function for the Cutoff Multi-Agent MDP with the current group state during execution. Essentially, the agents ``believe'' that no other agents are in the vicinity. The Amalgam Policy, the Cutoff Policy, and the First Step Finite Horizon Optimal Policy introduced in \cite{deweese2024locally} are types of Trivial Extraction with various $\xi$ and $\eta$. See \cref{corners} and \cref{journey}.
Aggregate Extraction describes extraction methods that place the belief of the agents in a location according to a distribution that does not depend on the previous history.  Notice that Trivial Extraction is a type of Aggregate Extraction. 

A walk-through of the complete Simple Memory Based Extraction algorithm is shown in \cref{simple_mem}. For each agent, the algorithm maintains estimates for the position of previously seen agents. These predictions are shared with the current visibility group and then used to construct a belief state which will be the output of the algorithm.

Out of the three extraction methods we present, the Simple Memory Based Extraction method performs well in deterministic cases when the visibility is fixed, and resolves many of the empirical problems seen in \cite{deweese2024locally} (see \cref{simulations} for simulations and the Results section in \cref{main_results} for more discussion).

\begin{figure}[H]
    \centering
    \includegraphics[width=0.35\linewidth]{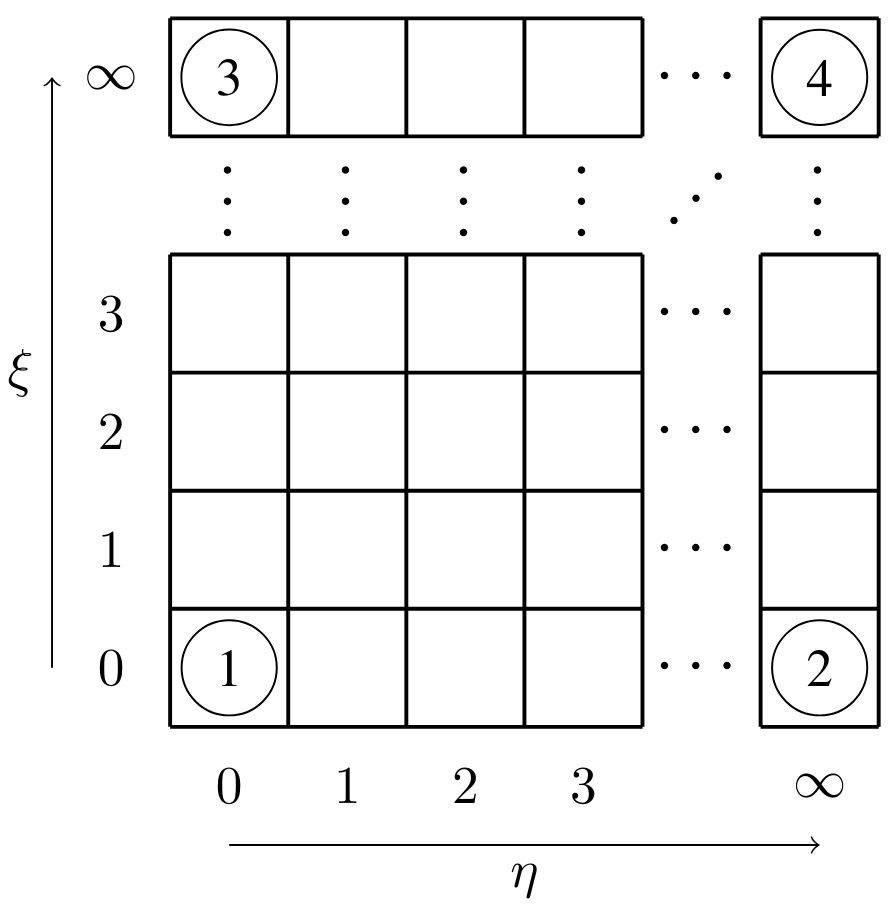}
    \caption{A grid of Trivial Extraction possibilities with $\xi \in \{0,1, \ldots\}$. (1) is the First Step Finite Horizon Optimal Policy ($\eta = 0$, $\xi = 0$),  (2) is the Cutoff Policy ($\eta \rightarrow \infty$, $\xi = 0$), and (4) is the Amalgam Policy ($\eta \rightarrow\infty$, $\xi \rightarrow \infty$) from \cite{deweese2024locally} which are special cases of our constructed Extended Cutoff Policy Class. We call (3) the Finite Amalgam Policy ($\eta = 0$, $\xi \rightarrow \infty$).
 }
    \label{corners}
\end{figure}

\subsection{Results}
Recall from \cref{contributions} that our contributions will be split into two parts. (1) Establishing asymptotic results for variable visibility $\mathcal V_{exec}$ and (2) investigating its performance in the fixed visibility $\mathcal V_{exec}$ setting.

\textbf{Generalized Locally Interdependent Multi-Agent MDP:} In addition to studying the above two questions, we also propose a generalized version of the Locally Interdependent Multi-Agent MDP, for which our main theoretical results to be introduced in this section also hold. Specifically, the generalized setting allows for local transition dependence and extended reward dependence. It assumes arbitrary local transitions $P_g(s'_g\lvert s_g, a_g)$ and arbitrary $\overline r_{i,j} (s_i,a_i, s_j, a_j)$ for $s_i,s_j \in g$ with $g \in D(s)$. This setting can be used to model more complex interactions between agents such as collisions that restrict mobility and more intricate reward functions such as an environment that rewards agent formations only when all agents are in position. The full formal definition is presented in \cref{generalized}. 
\label{main_results}
\subsubsection{Variable Visibility}
In this work, we establish the following guarantee on the asymptotic performance of policies in the Extended Cutoff Policy Class.
\begin{theorem}
\label{extract_opt_bound}
    Assume a (Generalized) Locally Interdependent Multi-Agent MDP $\mathcal{M} = (\mathcal{S}, \mathcal A, P, r, \gamma, \mathcal R)$ with visibility $\mathcal V_{exec}$, and an optimal discounted finite horizon solution $\pi^{\xi, \eta}_{comp} = (\pi_0^{\xi, \eta}, \pi_1^{\xi, \eta} , ..., \pi_{c + \eta}^{\xi, \eta})$ in its corresponding Extended Cutoff Multi-Agent MDP $\mathcal{C} = (\mathcal{S}^\mathcal C, \mathcal A^\mathcal C, P^\mathcal C, r^\mathcal C, \gamma, \mathcal R, \mathcal V_{comp})$ where $\mathcal V_{comp} = \mathcal V_{exec} + \xi$. For its corresponding extracted policy $\pi^{\xi,\eta}_{exec}(a \lvert s(t)) = \prod_{z\in Z_{exec}(s(t))}[\pi_{0}^{\xi, \eta}]_z(a_z \lvert \rho(I(\tau^t,z))$ according to a valid extraction strategy $\rho(I(\tau^t,z))$ for $z \in Z_{exec}(s)$, $s \in \mathcal S$, and $t \in \{0, 1, \dots\}$ we have
    $$V^*(s) - V^{\pi^{\xi, \eta}_{exec}}(s)\leq \beta\frac{\gamma^{c - 1}}{1 - \gamma}$$
where $c = \lfloor \frac{\mathcal V_{exec} - \mathcal R}{2}\rfloor$ and $c' = \lfloor \frac{\mathcal V_{comp} - \mathcal R}{2}\rfloor$. Here, $\beta = \gamma^{c' - c + 1} + \gamma^2 + \gamma^{\eta + 1} + \frac{4 + \gamma + 5\gamma^2}{1 - \gamma}$ for Standard Locally Interdependent Multi-Agent MDPs and $\beta = 2\gamma^{c' - c + 1} + 2\gamma^2 + \gamma^{\eta + 1} + \frac{2\gamma( 1+ \gamma) + 4n(1 + \gamma^2)}{1 - \gamma} $ for General Locally Interdependent Multi-Agent MDPs.
\end{theorem}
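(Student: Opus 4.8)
The plan is to chain three comparisons: $V^*(s)$ against the optimal finite-horizon value $V^*_{\mathcal C,0}$ of the extended Cutoff Multi-Agent MDP (with visibility $\mathcal V_{comp}$ and horizon $c+\eta$); that value against the value obtained by repeatedly applying the first step $\pi_0^{\xi,\eta}$ with the \emph{true} current group state as the belief; and finally the latter against $V^{\pi^{\xi,\eta}_{exec}}(s)$, the value of the actual extracted policy with an arbitrary valid $\rho$. The mechanism common to all three comparisons is the ``speed limit'': since each agent moves at most one unit per step, two agents more than $\mathcal V_{exec}$ apart cannot come within the dependence radius $\mathcal R$ in fewer than $c=\lfloor(\mathcal V_{exec}-\mathcal R)/2\rfloor$ steps, and two agents more than $\mathcal V_{comp}$ apart need more than $c'=\lfloor(\mathcal V_{comp}-\mathcal R)/2\rfloor$ steps. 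Hence over those windows every pairwise reward $\overline r_{i,j}$ between agents in different communication groups is identically zero, so the true reward decomposes exactly along the relevant partition and coincides with the reward of the Cutoff Multi-Agent MDP, where cross-group interaction is forbidden by construction.

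\textbf{Step 1 (the Cutoff optimum dominates $V^*$ up to a $\gamma^c$-order tail).} Take the fully observable optimal policy, truncate its value at horizon $c+\eta$, and note that over the first $\min(c+\eta,c')$ steps the speed-limit observation makes its reward split exactly as a sum over $p\in Z_{comp}(s)$ of the reward among pairs originally in $p$, with each block's trajectory reproducible in the extended Cutoff Multi-Agent MDP. Using the block decomposition $V^*_{\mathcal C,0}(s,\mathcal P)=\sum_{p\in\mathcal P}V^*_{\mathcal C,0}(s_p,\{p\})$ recalled in \cref{cutoff} and bounding any leftover horizon steps crudely yields $V^*(s)\le\sum_{p\in Z_{comp}(s)}V^*_{\mathcal C,0}(s_p,\{p\})+E_1$ with $E_1$ of order $(\gamma^{c'}+\gamma^{c+\eta})/(1-\gamma)$ times the per-step reward bound. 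Once the global factor $\gamma^{c-1}/(1-\gamma)$ is pulled out, these pieces become the $\gamma^{c'-c+1}$ and $\gamma^{\eta+1}$ contributions in $\beta$.

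\textbf{Step 2 (belief consistency is harmless for $c$ steps).} Fix a time $t$ and a group $z\in Z_{exec}(s(t))$ with $\rho(I(\tau^t,z))=(s_p,\{p\})$. Validity forces $[s_p]_z=s_z(t)$ and $z\in Z_{exec}(s_p)$, so every believed agent outside $z$ lies more than $\mathcal V_{exec}$ from $z$; hence in the $(c+\eta)$-horizon Cutoff problem launched from $(s_p,\{p\})$ those agents cannot reach $z$'s dependence range for $c$ steps. Consequently $[\pi_0^{\xi,\eta}]_z(\cdot\lvert s_p,\{p\})$ is a near-optimal first move for $z$ taken in isolation, the optimality gap being of order $\gamma^c$. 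This is exactly what makes the bound uniform over all valid extraction methods (trivial, aggregate, memory-based alike): $\rho$ can only perturb quantities the speed limit has already quarantined.

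\textbf{Step 3 (receding-horizon telescoping).} This is the core of the argument and the main obstacle. Running $\pi^{\xi,\eta}_{exec}$ in the true MDP from $s$, set $\Phi_t=\sum_{z\in Z_{exec}(s(t))}V^*_{\mathcal C,0}(s_z(t),\{z\})$. Combining the Cutoff Bellman optimality equation for the action actually taken (near-greedy by Step 2), the speed-limit fact that any two communication groups that merge along the true trajectory could not have interacted for $c$ steps beforehand, and the elementary estimate relating $V^*_{\mathcal C,1}$ to $V^*_{\mathcal C,0}$, I would establish a one-step inequality $\E[r(s(t),a(t))+\gamma\Phi_{t+1}\mid\tau^t]\ge\Phi_t-\delta$ with per-step error $\delta$. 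Multiplying by $\gamma^t$, summing, and adding the finite-horizon tail gives $V^{\pi^{\xi,\eta}_{exec}}(s)\ge\Phi_0-\frac{\gamma^{c-1}}{1-\gamma}(\text{const})$; since $\Phi_0\ge\sum_{p\in Z_{comp}(s)}V^*_{\mathcal C,0}(s_p,\{p\})$ up to a further speed-limit-controlled term (the initial beliefs are consistent), combining with Step 1 produces the claimed inequality. The delicate part is the bookkeeping: $Z_{exec}(s(t))$ splits and, crucially, merges along the trajectory while the Cutoff model forbids merges, so one must show every such discrepancy lands in a $\gamma^c$-order remainder and that the accumulated per-step errors sum to exactly $\beta\,\gamma^{c-1}/(1-\gamma)$. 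The Generalized Locally Interdependent Multi-Agent MDP is handled by the identical scheme, except that extended reward dependence lets a single mispredicted pair $\overline r_{i,j}$ leak through, which is what doubles several constants and introduces the factor $n$ in $\beta$.
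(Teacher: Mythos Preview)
Your three-part plan mirrors the paper's closely: Step~1 is \cref{same_policy}; Step~3's receding-horizon telescoping is the identity $\hat r_t=v_t-\gamma v_{t+1}'$, $\sum_t\gamma^t\hat r_t=v_0-\sum_{t\ge 1}\gamma^t(v_t'-v_t)$ that drives the proof of \cref{extract_bound} (your potential-function inequality is this identity rearranged), and the merge bookkeeping you flag is exactly the term the paper isolates as $\Delta_t^\tau$ and controls via \cref{z_intersect}.

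The substantive gap is the inference in Step~2. From ``believed agents outside $z$ cannot reach $z$'s dependence range for $c$ steps'' you jump to ``$[\pi_0^{\xi,\eta}]_z(\cdot\mid s_p,\{p\})$ is a near-optimal first move for $z$ taken in isolation.'' The conclusion is correct, but the speed limit alone does not deliver it: the Dependence Time Lemma only says $z$'s \emph{rewards} over the first $c$ steps are unaffected by $p\setminus z$; it says nothing about the \emph{action}, since $\pi^{\xi,\eta}_0$ plans over the full horizon $c+\eta$ and may have the $z$-agents sacrifice short-term isolated reward to position for a later profitable interaction with $p\setminus z$. The paper closes this by proving a \emph{value} comparison rather than an action comparison: \cref{make_coarser} (built on \cref{z_intersect}) shows that the $z$-restricted value, when the policy pretends the partition is the coarser $\{p\}$ while $z$ is in fact isolated, differs from the isolation optimum $V^*_{\mathcal C,0}(s_z,\{z\})$ by $O(\gamma^c/(1-\gamma))$. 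That argument uses the Dependence Time Lemma for the first $c$ steps but then crucially invokes the Constructive/Deconstructive Improvement inequalities of a Consistent Performance Policy (trivially held by the optimal $\pi^{\xi,\eta}_{comp}$) to pass from ``policy acting under the wrong partition'' to ``optimal policy for the correct partition.'' This monotonicity step, not the speed limit by itself, is the missing ingredient; once supplied, it plugs directly into your one-step inequality in Step~3 (which is precisely the content of \cref{tele_condition}) and the remainder of your outline goes through.
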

The standard version is proved in \cref{extract_opt_bound_proof} and the generalized proof is available in \cref{general_extract_opt_bound_proof}. Both guarantees match the lower bound presented in \cite{deweese2024locally} up to constant factors. The lower bound still holds for the generalized setting, since the standard setting is a special case of the generalized setting. We see that increasing $\mathcal V_{exec}$ brings the policy closer to the fully observable joint optimal policy exponentially fast, trading off with the computation time improvements described in \cref{scalable}. 
 
\Cref{extract_opt_bound} is a special case of a more general theorem that establishes a closeness between extended Cutoff Multi-Agent MDP values and the values of the extracted policy in the Locally Interdependent Multi-Agent MDP. Specifically, for any Consistent Performance Policy defined in \cref{monotone}, we have the following.
\begin{theorem}
\label{extract_bound} 
In the same setting as \cref{extract_opt_bound}, any Consistent Performance Policy $\pi_{comp} = (\pi_0, \pi_1, ..., \pi_{c + \eta})$ and its corresponding extracted policy $\pi_{exec}(a \lvert s(t)) = \prod_{z\in Z_{exec}(s(t))}[\pi_{0}]_z(a_z \lvert \rho(I(\tau^t,z))$ will have
$$\lvert V^{\pi_{comp}}_0(s, Z_{comp}(s)) - V^{\pi_{exec}} (s) \rvert \leq \beta'\frac{\gamma^{c - 1}}{1 - \gamma}.$$
where $c = \lfloor \frac{\mathcal V_{exec} - \mathcal R}{2}\rfloor$. Here, $\beta' = \gamma^2 + \gamma^{\eta + 1} + \frac{4 + \gamma + 5\gamma^2}{1 - \gamma}$ for Standard Locally Interdependent Multi-Agent MDPs and $\beta' = 2\gamma^2 + \gamma^{\eta + 1} + \frac{2\gamma( 1+ \gamma) + 4n(1 + \gamma^2)}{1 - \gamma}$ for General Locally Interdependent Multi-Agent MDPs. 
\end{theorem}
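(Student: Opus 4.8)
The plan is to bound the difference between two value functions that "should" be close: the value $V^{\pi_{comp}}_0(s, Z_{comp}(s))$ of the Consistent Performance Policy in the Extended Cutoff Multi-Agent MDP with horizon $c+\eta$ and visibility $\mathcal V_{comp}$, and the value $V^{\pi_{exec}}(s)$ of the extracted non-Markovian group decentralized policy in the original (infinite horizon) Locally Interdependent Multi-Agent MDP with visibility $\mathcal V_{exec}$. The central idea is a \emph{coupling} argument: run both processes from the same $s$ with synchronized per-agent transition noise, and argue that, as long as no two agents that are in \emph{distinct} communication groups in the cutoff process come within distance $\mathcal R$ of each other, the instantaneous rewards $r^{\mathcal C}((s(t),\mathcal P(t)), a(t))$ and $r(s(t),a(t))$ coincide along the coupled trajectory. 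Because of the unit-distance "speed limit" and because the extraction method $\rho$ is required to be consistent with the current group state (i.e. $z\in Z_{exec}(s_p)$ and $[s_p]_z = s_z(t)$), two agents that are separated by more than $\mathcal V_{exec}$ in the true state need at least $c = \lfloor (\mathcal V_{exec}-\mathcal R)/2\rfloor$ steps to get within $\mathcal R$; so for the first $\approx c$ steps the two reward streams agree regardless of what the "belief" $\rho$ hallucinates about far-away agents. This is exactly the mechanism already used in \cite{deweese2024locally}, and I would lean on it.

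The key steps, in order: (i) Set up the coupling and define the "good event" $E_t$ = "through time $t$, every pair of agents lying in different partitions of $\mathcal P(t)$ in the cutoff trajectory has been at distance $>\mathcal R$ at all earlier steps"; show that on $E_t$ the two reward sequences match up to time $t$ and, crucially, that the two policies pick the \emph{same} action distributions (the extracted policy uses $[\pi_0]_z(\cdot\mid \rho(I(\tau^t,z)))$, and on the good event the cutoff policy restricted to group $z$ evaluates to the same thing by the consistency requirement on $\rho$ together with the Consistent Performance Policy property defined in \cref{monotone}). (ii) Quantify $\Pr[E_t^c]$: using the speed limit, a violation before time $c$ is impossible, so $\Pr[E_t^c]=0$ for $t<c$, and we only start paying for $t\geq c$. (iii) Split the infinite sum $\sum_{t\geq 0}\gamma^t r$: the first $c$ (or so) terms are handled by the matched reward streams, leaving a tail $\sum_{t\geq c}\gamma^t(\cdot)$ bounded by $\tilde r$ (or $n\tilde r$ in the general case) times $\gamma^c/(1-\gamma)$, and the cutoff value itself has a horizon-truncation error of order $\gamma^{c+\eta}$ — this is where the $\gamma^{\eta+1}$ term in $\beta'$ comes from. (iv) Invoke the Consistent Performance Policy definition (\cref{monotone}) to control the discrepancy introduced at the "seam" when groups in the execution process merge but in the cutoff process stay split — this is the source of the $\gamma^2$ and the $\frac{4+\gamma+5\gamma^2}{1-\gamma}$ (resp.\ the $n$-dependent) terms. (v) Collect constants into $\beta'$ and factor out $\gamma^{c-1}/(1-\gamma)$.

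The main obstacle I expect is step (iv): carefully accounting for what happens when two groups that are separate in the cutoff world come within communication range $\mathcal V_{exec}$ (but still $>\mathcal R$) in the real world, so the extracted policy suddenly conditions on a \emph{merged} group state that the cutoff solution never saw. The consistency constraint on $\rho$ guarantees the merged real observation is compatible with each group's belief, but one must show the resulting joint action does not cost more than $O(\gamma^{c})$ in value — this needs the Consistent Performance Policy property (presumably: the policy's value is monotone / stable under such regroupings, established in \cref{monotone}) and a Bellman-type telescoping that charges each "bad" timestep $\tilde r \cdot \gamma^{t}$ and sums the geometric tail. A secondary technical nuisance is making the coupling precise for non-Markovian policies, since $\pi_{exec}$ conditions on the whole history $I(\tau^t,z)$; but since the extracted action only ever depends on $\rho(I(\tau^t,z))$, which lives in $\mathcal S_{comp}$, the history-dependence is harmless once the good event is in force, and the bound reduces to essentially the same estimate as for a stationary extraction.
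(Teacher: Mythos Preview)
There is a genuine gap in your step (i). The claim that ``on the good event the cutoff policy restricted to group $z$ evaluates to the same thing'' as $[\pi_0]_z(\cdot\mid\rho(I(\tau^t,z)))$ is false, and the coupling cannot be made to synchronize the two trajectories. Two separate issues break it. First, $\pi_{comp}=(\pi_0,\ldots,\pi_{c+\eta})$ is a \emph{non-stationary} finite-horizon policy: in the cutoff process, at time $t$ the action is drawn from $\pi_t$, whereas $\pi_{exec}$ always applies $[\pi_0]_z$ at every step. Even if no groups merge and the belief were perfect, $\pi_t\neq\pi_0$ in general, so the action distributions diverge from $t=1$ onward. Second, the extraction $\rho(I(\tau^t,z))$ is allowed to be \emph{any} element of $\mathcal S_{comp}$ consistent with the current group observation; it need not equal (or even be close to) the actual state in the cutoff trajectory. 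The Consistent Performance Policy conditions in \cref{monotone} are inequalities on \emph{values} (Constructive/Deconstructive Improvement, 1-Step Displaced/Contracted Improvement), not statements that the policies coincide, so they cannot rescue the action-matching claim. Once the actions differ at $t=1$, the states diverge, and the ``rewards match on the good event'' argument collapses.

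The paper's proof avoids trajectory coupling entirely. The central device is a one-step Bellman identity (\cref{tele_condition}): under $\pi_{exec}$, the expected reward at step $t$ is rewritten as
\[
\mathbb E\big[V^{\pi_{comp}}_0(s(t),Z_{exec}(s(t)))\big]\;-\;\gamma\,\mathbb E\big[V^{\pi_{comp}}_0(s(t+1),Z_{comp}(s(t+1))\cap Z_{exec}(s(t)))\big]\;+\;O(\gamma^{c-1}),
\]
which works precisely \emph{because} $\pi_{exec}$ uses $\pi_0$ at every step: one unrolls $V_0^{\pi_{comp}}$ by one Bellman step to get $r+\gamma V_1^{\pi_{comp}}$, then replaces $V_1$ by $V_0$ using the 1-Step Displaced/Contracted Improvement properties (\cref{mono_horizon}, cost $\gamma^{c+\eta}$), and absorbs the belief/true-state discrepancy outside group $z$ via \cref{make_coarser} and \cref{z_intersect} (cost $O(\gamma^{c})$), exploiting that $\rho$ must agree with the true state on $z$. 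Summing $\sum_t\gamma^t(\cdot)$ then telescopes to $V_0^{\pi_{comp}}(s,Z_{exec}(s))$ minus a drift $\sum_{t\geq1}\gamma^t\Delta_t^\tau$, and each $\Delta_t^\tau$ is bounded by $O(\gamma^c)$ via \cref{z_intersect}. The Consistent Performance Policy assumptions enter only to stabilize values under partition refinement and horizon shift, never to equate actions. Your step (iv) gestures at a Bellman telescoping, but in the paper this is the \emph{entire} argument, not a cleanup; you would need to abandon the coupling and rebuild around that identity.
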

The standard version is proved in \cref{extract_bound_proof} and the generalized proof is available in \cref{general_extract_bound_proof}. Intuitively, Consistent Performance Policies satisfy conditions that allow for a ``good extraction.'' It is shown in \cref{monotone} that the optimal discounted finite horizon solution $\pi_{comp}^{\xi, \eta}$ is a Consistent Performance Policy. %

In short, \cref{extract_opt_bound} with the help of \cref{extract_bound} answers our initial question ``\textit{are policies in this class near optimal theoretically?}'' from the Contributions section (\cref{contributions}) by creating a performance guarantee for all elements in the Extended Cutoff Policy Class which is exponentially close to optimal with respect to the visibility $\mathcal V_{exec}$ and matches the lower bound in \cite{deweese2024locally} up to constant factors.

\subsubsection{Small and Fixed Visibility}
To investigate the nuances of the performance in the fixed visibility setting, we provide 11 grid world simulations in \cref{simulations} that compare the performance of Simple Memory Based Extraction, the Amalgam Policy, Cutoff Policy, and the fully observable joint optimal policy in various environments. We primarily focus on environments that are deterministic similar to navigation problems in the real world. Across these simulations, we see three common observations that correspond to our outline made in \cref{contributions}. (i) Simple Memory Based Extraction out performs the Amalgam Policy and Cutoff Policy for most simulations except for edge cases that ``trick'' the policy into taking suboptimal behavior. (ii) Simple Memory Based Extraction resolves the Penalty Jittering issue for sufficiently large $\mathcal V_{comp}$ and agents no longer get stuck in situations that the Amalgam Policy and Cutoff Policy struggle with (see \cref{penalty_oscillation} for the full discussion). (iii) looking at \cref{simple_algo} in \cref{simple_mem}, when the setting is completely deterministic and all agents start within view, for sufficiently large $\mathcal V_{comp}$, every agent will have the correct estimation of every other agent at all times. This leads to the following proposition:
\begin{proposition}
For any Locally Interdependent Multi-Agent MDP and an initial state $s$ where all agents are within view, the Simple Memory Based Extraction policy $\pi^{\xi, \eta}_{simple}$ will converge to the fully observable joint optimal solution despite being a partially observable group decentralized policy. That is, $V^{\pi^{\xi, \eta}_{simple}}(s) \rightarrow V^*(s)$ as $\xi \rightarrow \infty$, $\eta \rightarrow \infty$.
\end{proposition}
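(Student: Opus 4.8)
The plan is to reduce this Proposition to \cref{extract_opt_bound} together with a deterministic-tracking argument about Simple Memory Based Extraction. Since \cref{extract_opt_bound} gives $V^*(s) - V^{\pi^{\xi,\eta}_{exec}}(s) \le \beta \frac{\gamma^{c-1}}{1-\gamma}$ with $c = \lfloor \frac{\mathcal V_{exec} - \mathcal R}{2}\rfloor$ and $\beta = \gamma^{c'-c+1} + \gamma^2 + \gamma^{\eta+1} + \frac{4+\gamma+5\gamma^2}{1-\gamma}$, the first move is to observe that as $\xi \to \infty$ we have $\mathcal V_{comp} = \mathcal V_{exec} + \xi \to \infty$, hence $c' = \lfloor \frac{\mathcal V_{comp} - \mathcal R}{2}\rfloor \to \infty$. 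But note that $\mathcal V_{exec}$, and therefore $c$, is \emph{fixed} as $\xi,\eta$ grow, so the factor $\frac{\gamma^{c-1}}{1-\gamma}$ does not vanish and the generic bound of \cref{extract_opt_bound} alone does not suffice — we only get $V^* - V^{\pi^{\xi,\eta}_{exec}} \le \big(\gamma^2 + \frac{4+\gamma+5\gamma^2}{1-\gamma}\big)\frac{\gamma^{c-1}}{1-\gamma}$ in the limit, which is not zero. So the crux must be a \emph{specific} property of Simple Memory Based Extraction in the deterministic, all-agents-initially-visible regime, not the worst-case guarantee.

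The key structural claim I would establish is: in a deterministic Locally Interdependent Multi-Agent MDP, if all agents start within one communication group at $s(0) = s$, and $\mathcal V_{comp}$ is taken large enough (concretely, larger than the total distance any pair of agents can drift apart over the horizon $c+\eta$, which is bounded since the speed limit is $1$), then along the trajectory induced by $\pi^{\xi,\eta}_{simple}$ every agent's maintained estimate of every other agent (the bookkeeping in \cref{simple_algo}) is \emph{exactly correct} at every time step, and moreover no permanent disconnection ever occurs within the extended-visibility Cutoff MDP the policy is solving. The argument is inductive on $t$: at $t=0$ all positions are mutually observed, so estimates are exact; determinism means the policy is a deterministic map, so each group can forward-simulate the exact future of the agents it is tracking; since $\mathcal V_{comp}$ exceeds any achievable pairwise separation, the belief state $\rho(I(\tau^t,z)) = (s_p,\{p\})$ returned by the extraction has $p = \mathcal N$ and $s_p = s(t)$ exactly — i.e., the ``belief'' coincides with the true global state. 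Consequently $[\pi_0^{\xi,\eta}]_z$ is being evaluated at the true state, and stitching these across $z \in Z_{exec}(s(t))$ reproduces exactly the first action of the finite-horizon-$(c+\eta)$ optimal policy of the \emph{fully connected} Cutoff MDP evaluated at $s(t)$.

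With that claim, the trajectory of $\pi^{\xi,\eta}_{simple}$ from $s$ coincides, step by step, with the greedy trajectory of the finite-horizon-$(c+\eta)$ optimal value function of the (undisconnected) Cutoff Multi-Agent MDP with visibility $\mathcal V_{comp}$, which when $\mathcal V_{comp}$ is large enough to contain all agents is simply the finite-horizon optimal policy of the original (fully observable) joint MDP rerun in receding-horizon fashion. Then I would invoke the standard receding-horizon / value-iteration convergence fact: as the horizon $c+\eta \to \infty$, the value of repeatedly applying the first action of the $H$-horizon optimal policy converges to $V^*(s)$ (with error $O(\gamma^{H})$ on each application, which can be made to telescope to $0$ — this is where I would either cite the Bellman-optimality contraction or reuse the finite-horizon-to-infinite-horizon estimate already implicit in \cite{deweese2024locally}). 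Taking $\xi \to \infty$ first to force exact tracking and full connectivity, then $\eta \to \infty$ to drive the receding-horizon error to zero, gives $V^{\pi^{\xi,\eta}_{simple}}(s) \to V^*(s)$.

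The main obstacle I anticipate is the first inductive claim — specifically making precise that the estimates in \cref{simple_algo} stay exact and that $\mathcal V_{comp}$ ``large enough'' genuinely forces the belief partition to be the trivial one-group partition with the true state. This requires (a) pinning down exactly what \cref{simple_algo} does with past observations under determinism, (b) a quantitative bound on maximal pairwise separation over $c+\eta$ steps (at most $2(c+\eta)$ by the unit speed limit, so $\mathcal V_{comp} > 2(c+\eta) + \text{diam of initial group}$ suffices, and since $c$ is fixed and $\eta$ is sent to infinity \emph{after} $\xi$, one must be careful about the order of limits or instead let $\xi$ grow fast enough relative to $\eta$), and (c) checking the consistency condition $z \in Z_{exec}(s_p)$, $[s_p]_z = s_z(t)$ required of a valid extraction is met by this exact-global-state belief — which is immediate once the estimates are exact. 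A secondary subtlety is ensuring the order of limits $\xi\to\infty$ then $\eta\to\infty$ is the intended reading of the statement; I would state explicitly that $\mathcal V_{comp}$ must be chosen as a function of $\eta$ (e.g.\ $\xi \ge 2(c+\eta)+\mathrm{diam}$) so that exact tracking holds uniformly as both parameters grow.
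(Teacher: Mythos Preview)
Your approach is correct and is essentially the same as the paper's: the paper does not give a formal proof of this proposition, only the one-sentence observation immediately preceding it that under determinism with all agents initially visible and $\mathcal V_{comp}$ sufficiently large, Simple Memory Based Extraction maintains exact estimates of every agent at all times, and your proposal is precisely the natural fleshing-out of that observation into a receding-horizon argument. You are in fact more careful than the paper in flagging the order-of-limits subtlety between $\xi$ and $\eta$ (i.e., that $\mathcal V_{comp}$ must dominate the achievable pairwise drift), which the paper leaves implicit.
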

Examples in which this assumption holds are shown in \cref{aisle} and \cref{journey}.

An added benefit to using the Extended Cutoff Policy Class is that once the Cutoff Multi-Agent MDP with extended visibility is solved, we may plug various extraction methods into it to attain many exponentially close to optimal group decentralized policies without additional computation (see \cref{extended_cutoff}).

These improvements to performance in the small visibility setting from the Extended Cutoff Policy Class answer our initial motivation from \cref{contributions}, ``\textit{can we improve performance using this policy class when visibility is small and fixed?}''

\section{Conclusion}
We have introduced the Extended Cutoff Policy Class which, to the best of our knowledge, is the first class of non-trivial partially observable policies that are theoretically guaranteed to be exponentially close to optimal with respect to the visibility for all Locally Interdependent Multi-Agent MDPs. These results were replicated in a new setting we propose called the Generalized Locally Interdependent Multi-Agent MDP which allows for transition dependence and extended reward dependence.

In the regime that the visibility is small and fixed where asymptotic guarantees are less helpful, the Extended Cutoff Policy Class presents a rich class of partially observable policies with favorable implicit biases in Locally Interdependent Multi-Agent MDPs. In particular, we show a specific instance of the Extended Cutoff Policy Class called Simple Memory Based Extraction can resolve the Penalty Jittering phenomenon, which prevents the solutions in \cite{deweese2024locally} from performing effectively in the small and fixed visibility regime. Further, in the case of  a deterministic environment where all agents start within view, Simple Memory Based Extraction will achieve fully observable joint optimality.

 \section*{Acknowledgments}
 This research was supported by NSF Grants 2154171, 2339112, CMU CyLab Seed Funding, C3 AI Institute. In addition, Alex DeWeese is supported by Leo Finzi Memorial Fellowship in Electrical \& Computer Engineering and the David H. Barakat and LaVerne Owen-Barakat CIT Dean's Fellowship.
 
\bibliographystyle{abbrvnat} %
\bibliography{refs} %

\begin{thebibliography}{20}
\providecommand{\natexlab}[1]{#1}
\providecommand{\url}[1]{\texttt{#1}}
\expandafter\ifx\csname urlstyle\endcsname\relax
  \providecommand{\doi}[1]{doi: #1}\else
  \providecommand{\doi}{doi: \begingroup \urlstyle{rm}\Url}\fi

\bibitem[Allen and Zilberstein(2009)]{allen2009complexity}
M.~Allen and S.~Zilberstein.
\newblock {Complexity of decentralized control: Special cases}.
\newblock \emph{Advances in neural information processing systems}, 22, 2009.

\bibitem[Aradi(2020)]{aradi2020survey}
S.~Aradi.
\newblock {Survey of deep reinforcement learning for motion planning of autonomous vehicles}.
\newblock \emph{IEEE Transactions on Intelligent Transportation Systems}, 23\penalty0 (2):\penalty0 740--759, 2020.

\bibitem[Baldazo et~al.(2019)Baldazo, Parras, and Zazo]{baldazo2019decentralized}
D.~Baldazo, J.~Parras, and S.~Zazo.
\newblock {Decentralized multi-agent deep reinforcement learning in swarms of drones for flood monitoring}.
\newblock In \emph{2019 27th European Signal Processing Conference (EUSIPCO)}, pages 1--5. IEEE, 2019.

\bibitem[Batra et~al.(2022)Batra, Huang, Petrenko, Kumar, Molchanov, and Sukhatme]{batra2022decentralized}
S.~Batra, Z.~Huang, A.~Petrenko, T.~Kumar, A.~Molchanov, and G.~S. Sukhatme.
\newblock {Decentralized control of quadrotor swarms with end-to-end deep reinforcement learning}.
\newblock In \emph{Conference on Robot Learning}, pages 576--586. PMLR, 2022.

\bibitem[Bernstein et~al.(2002)Bernstein, Givan, Immerman, and Zilberstein]{bernstein2002complexity}
D.~S. Bernstein, R.~Givan, N.~Immerman, and S.~Zilberstein.
\newblock {The complexity of decentralized control of Markov decision processes}.
\newblock \emph{Mathematics of operations research}, 27\penalty0 (4):\penalty0 819--840, 2002.

\bibitem[DeWeese and Qu(2024)]{deweese2024locally}
A.~DeWeese and G.~Qu.
\newblock {Locally Interdependent Multi-Agent MDP: Theoretical Framework for Decentralized Agents with Dynamic Dependencies}.
\newblock In \emph{Forty-first International Conference on Machine Learning}, 2024.

\bibitem[Goldman and Zilberstein(2004)]{goldman2004decentralized}
C.~V. Goldman and S.~Zilberstein.
\newblock {Decentralized control of cooperative systems: Categorization and complexity analysis}.
\newblock \emph{Journal of artificial intelligence research}, 22:\penalty0 143--174, 2004.

\bibitem[Han et~al.(2020)Han, Chen, and Hao]{han2020cooperative}
R.~Han, S.~Chen, and Q.~Hao.
\newblock {Cooperative multi-robot navigation in dynamic environment with deep reinforcement learning}.
\newblock In \emph{2020 IEEE International Conference on Robotics and Automation (ICRA)}, pages 448--454. IEEE, 2020.

\bibitem[Long et~al.(2018)Long, Fan, Liao, Liu, Zhang, and Pan]{long2018towards}
P.~Long, T.~Fan, X.~Liao, W.~Liu, H.~Zhang, and J.~Pan.
\newblock {Towards optimally decentralized multi-robot collision avoidance via deep reinforcement learning}.
\newblock In \emph{2018 IEEE international conference on robotics and automation (ICRA)}, pages 6252--6259. IEEE, 2018.

\bibitem[Lowe et~al.(2017)Lowe, Wu, Tamar, Harb, Pieter~Abbeel, and Mordatch]{lowe2017multi}
R.~Lowe, Y.~I. Wu, A.~Tamar, J.~Harb, O.~Pieter~Abbeel, and I.~Mordatch.
\newblock {Multi-agent actor-critic for mixed cooperative-competitive environments}.
\newblock \emph{Advances in neural information processing systems}, 30, 2017.

\bibitem[Melo and Veloso(2009)]{melo2009learning}
F.~S. Melo and M.~Veloso.
\newblock {Learning of coordination: Exploiting sparse interactions in multiagent systems}.
\newblock In \emph{Proceedings of The 8th International Conference on Autonomous Agents and Multiagent Systems-Volume 2}, pages 773--780. Citeseer, 2009.

\bibitem[Nair et~al.(2005)Nair, Varakantham, Tambe, and Yokoo]{nair2005networked}
R.~Nair, P.~Varakantham, M.~Tambe, and M.~Yokoo.
\newblock {Networked distributed POMDPs: A synthesis of distributed constraint optimization and POMDPs}.
\newblock In \emph{AAAI}, volume~5, pages 133--139, 2005.

\bibitem[Oliehoek(2012)]{oliehoek2012decentralized}
F.~A. Oliehoek.
\newblock {Decentralized POMDPS}.
\newblock In \emph{Reinforcement Learning: State-of-the-Art}, pages 471--503. Springer, 2012.

\bibitem[Oliehoek et~al.(2016)Oliehoek, Amato, et~al.]{oliehoek2016concise}
F.~A. Oliehoek, C.~Amato, et~al.
\newblock \emph{{A Concise Introduction to Decentralized POMDPs}}, volume~1.
\newblock Springer, 2016.

\bibitem[Palanisamy(2020)]{palanisamy2020multi}
P.~Palanisamy.
\newblock {Multi-agent connected autonomous driving using deep reinforcement learning}.
\newblock In \emph{2020 International Joint Conference on Neural Networks (IJCNN)}, pages 1--7. IEEE, 2020.

\bibitem[Qu et~al.(2020{\natexlab{a}})Qu, Lin, Wierman, and Li]{qu2020scalableaverage}
G.~Qu, Y.~Lin, A.~Wierman, and N.~Li.
\newblock {Scalable Multi-Agent Reinforcement Learning for Networked Systems with Average Reward}.
\newblock \emph{Advances in Neural Information Processing Systems}, 33, 2020{\natexlab{a}}.

\bibitem[Qu et~al.(2020{\natexlab{b}})Qu, Wierman, and Li]{qu2020scalablepolicy}
G.~Qu, A.~Wierman, and N.~Li.
\newblock {Scalable reinforcement learning of localized policies for multi-agent networked systems}.
\newblock In \emph{Learning for Dynamics and Control}, pages 256--266. PMLR, 2020{\natexlab{b}}.

\bibitem[Qu et~al.(2022)Qu, Wierman, and Li]{qu2022scalable}
G.~Qu, A.~Wierman, and N.~Li.
\newblock {Scalable reinforcement learning for multiagent networked systems}.
\newblock \emph{Operations Research}, 70\penalty0 (6):\penalty0 3601--3628, 2022.

\bibitem[Spaan and Melo(2008)]{spaan2008interaction}
M.~T. Spaan and F.~S. Melo.
\newblock {Interaction-driven Markov games for decentralized multiagent planning under uncertainty}.
\newblock In \emph{Proceedings of the 7th international joint conference on Autonomous agents and multiagent systems-Volume 1}, pages 525--532. Citeseer, 2008.

\bibitem[Zhang and Lesser(2011)]{zhang2011coordinated}
C.~Zhang and V.~Lesser.
\newblock {Coordinated multi-agent reinforcement learning in networked distributed POMDPs}.
\newblock In \emph{Proceedings of the AAAI Conference on Artificial Intelligence}, volume~25, pages 764--770, 2011.

\end{thebibliography}
\newpage
\appendix
\section{Simulations}
The following are simulations performed in the fixed visibility $\mathcal V_{exec}$ setting. Unless otherwise stated, the Manhattan distance metric is used. Similar to \cite{deweese2024locally} the First Step Finite Horizon Optimal Policy is omitted from these simulations because the behavior is greedy and will often have poor performance in the small and fixed visibility $\mathcal V_{exec}$ regime.

Intuitively, the Amalgam policy (Trivial Extraction with $\xi \rightarrow \infty$, $\eta \rightarrow \infty$ proposed in \cite{deweese2024locally}) will take the optimal action for every group, pretending that no other agents exist. The Cutoff Policy (Trivial Extraction with $\xi = 0$, $\eta \rightarrow \infty$ proposed in \cite{deweese2024locally}) will act as though disconnections in the environment are permanent. In other words, if an agent leaves the communication group, they will never be seen again. In the following simulations, we will treat these previously established policies as a baseline comparison for Simple Memory Based Extraction.
 
Note that poor performance in the fixed visibility $\mathcal V_{exec}$ regime does not necessarily contradict our theoretical results. What is established in the Variable Visibility section of \cref{main_results} is an asymptotic result that does not provide a strong guarantee when $\mathcal V_{exec}$ is small and fixed.
\label{simulations}
\subsection{Aisle Walk Problem}
\label{aisle}
\begin{figure}[H]
\begin{center}
\vspace{-2ex}
\includegraphics[width=0.40\linewidth]{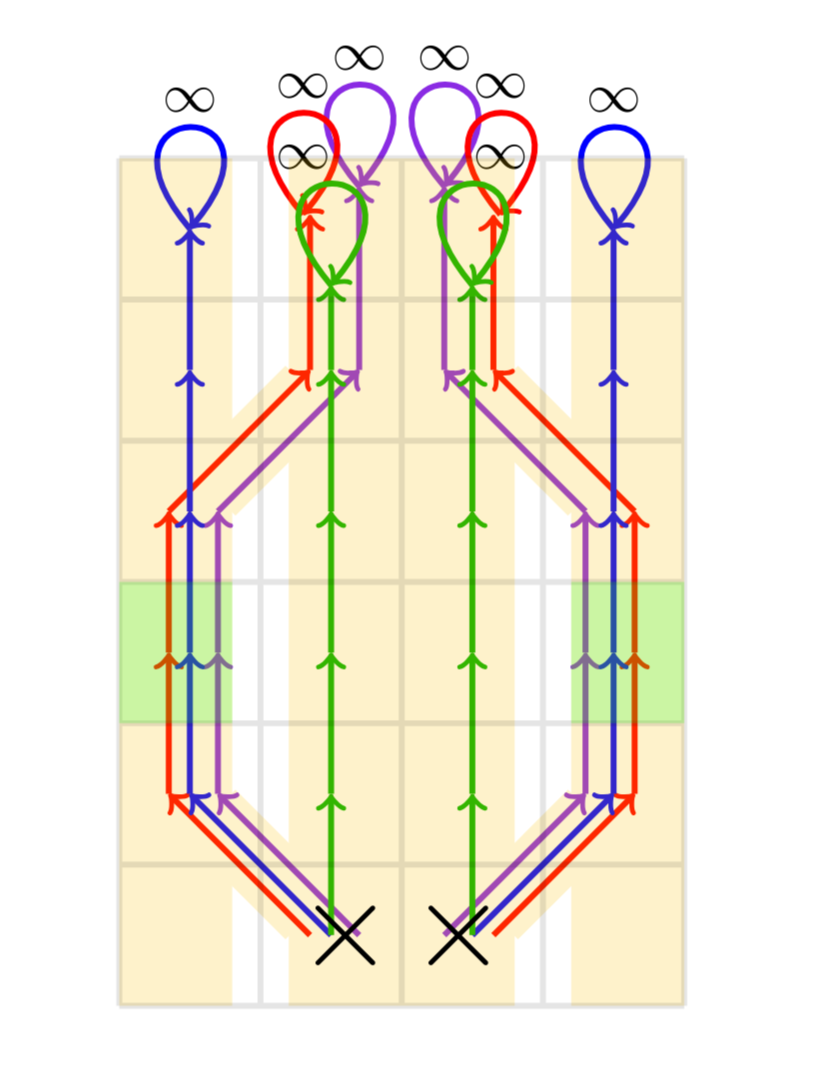}
\vspace{-2ex}

\end{center}
\caption{(Aisle Walk Problem) In red and purple is the roll out for the fully observable joint optimal policy and
the Simple Memory Based Extraction ($\mathcal V_{comp} = 3$) trajectory with discounted sum of rewards 464.44.
In blue is the trajectory for the Amalgam Policy with total discounted rewards 202.00 and in green is the the Cutoff Policy with discounted reward 400.00. All values have been rounded to the second decimal place. Note that these values are different than what is presented in \cite{deweese2024locally} because the rewards in the problem definitions have been modified.}
\label{aisle_fig}
\end{figure}
\noindent Here, we run a Simple Memory Based Extraction policy in the Aisle Walk Problem shown in  \cite{deweese2024locally}.  The purpose of this simulation is to compare the performance of Simple Memory Based Extraction with existing methods and specifically to show that when the assumptions made in \cref{main_results} are held, Simple Memory Based Extraction achieves optimality.
\\\\
The Aisle Walk Problem shown in \cref{aisle_fig} has a visibility of $\mathcal V_{exec} = 2$ and when the agents are within $\mathcal R = 1$ of each other, they each obtain a $+20$ reward. When agents are on the green squares shown in the diagram, they also obtain a reward of $+100$. Agents are required to move forward 1 step at each time step until they reach the top and can maneuver across columns only at specific positions indicated on the diagram. When agents reach the top, they are left to interact for the remainder of the time steps with any nearby agents. The discount factor used is $\gamma = 0.9$.
\\\\
The fully observable joint optimal policy in this situation is to split the agents apart to obtain the single agent rewards $+100$ in the first and last columns and then return together to attain the $+20$ reward for being close to each other. We see that the Amalgam Policy initially recognizes this and splits the agents apart. However, once the agents are out of view, the Amalgam Policy forgets the existence of the other agents and continues going forward. The Cutoff Policy, on the other hand, assumes that it will not have the mechanism to return to the center and compares either going in a straight line or splitting apart and not being able to reconvene. It decides to go in a straight line and achieves a higher reward than the Amalgam Policy. However, Simple Memory Based Extraction with $\mathcal V_{comp} = 3$ is able to remember the position of the agent, and since the assumptions stated in \cref{main_results} are held, the policy is able to achieve fully observable joint optimal behavior.

\subsection{Bullseye Problem}
\label{bullseye}

\begin{figure}[H]
    \centering

\includegraphics[width=0.62\linewidth]{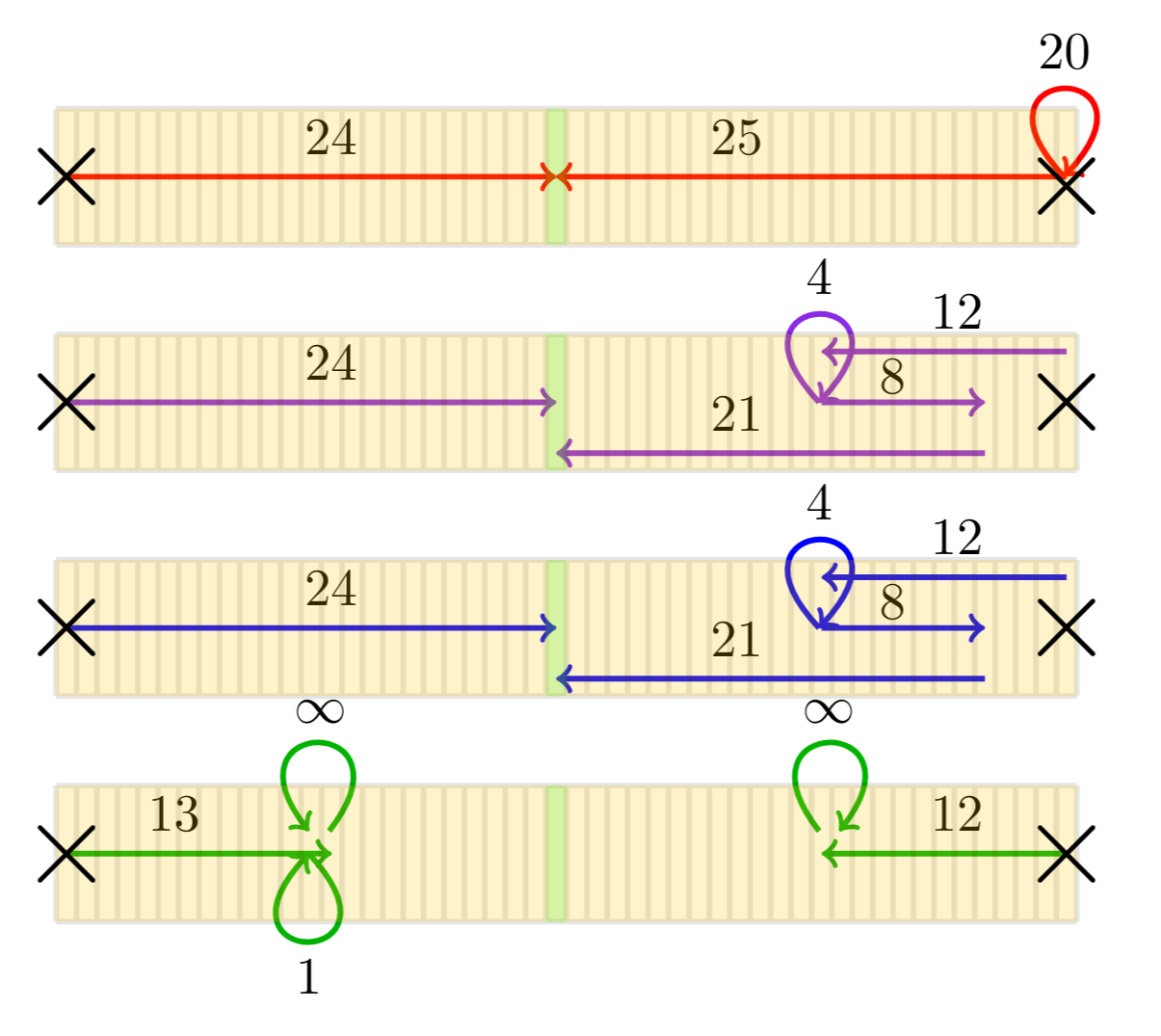}
    \caption{(Bullseye Problem) In red is the fully observable joint optimal policy with discounted sum of rewards 8.85. In purple and blue is the Amalgam and Simple Memory Based Extraction ($\mathcal V_{comp} = 30$) with discounted sum of rewards 6.74. In green is the Cutoff Policy with discounted sum of rewards -5.38. All values are rounded to the second decimal place.}
    \label{bullseye_fig}
\end{figure}
\noindent We will now evaluate the performance of Simple Memory Based Extraction in the Bullseye Problem in \cite{deweese2024locally}. 
This simulation will provide a situation where the Simple Memory Based Extraction may not achieve optimality, introduce the Penalty Jittering phenomenon discussed in \cref{penalty_oscillation}, and set up the simulation introduced in the next section.
\\\\
Each rollout in \cref{bullseye_fig} shows two agents. One agent begins 24 spaces to the left and the other agent 25 spaces to the right from a central bullseye. Agents will receive a $-2$ penalty for moving away from the bullseye, a penalty of $-500$ per agent if they are within $\mathcal R = 20$ spaces of one another, and a reward of $+100$ if they reach the central bullseye. The goal of the agents is to reach the central bullseye without collision with a visibility of $\mathcal V_{exec} = 25$. Once an agent reaches the bullseye, the agent will no longer interact with others and receive no interdependent penalties with any other agent (formally it will be sent along a long line of states starting at the bullseye moving away from the grid). The discount factor used is $\gamma = 0.9$.
\\\\
In this case, the fully observable joint optimal policy is to give the left agent that is closer to the bullseye $20$ steps right of way before following behind with the right agent. However, since the visibility is limited, the agents will not be able to achieve this behavior and will need to backtrack. Once the agents are in view, the Amalgam Policy and Simple Memory Based Extraction with $\mathcal V_{comp} = 30$ are able to perform back tracking to give the closer agent the right of way and eventually bring both agents to the center. The Cutoff Policy on the other hand, believes that if the agent leaves its visibility, will disappear completely because of the permanent disconnection property of the Cutoff Multi-Agent MDP (see \cref{cutoff}). So, the policy moves the agents outwards and tries to go towards the center again. This repeats indefinitely, causing a continual back tracking. This is an example of the Penalty Jittering phenomenon we discuss in \cref{penalty_oscillation}.

\subsection{Modified Bullseye Problem}
\label{bullseye_mod}
\begin{figure}[H]
\centering

\includegraphics[width=0.60\linewidth]{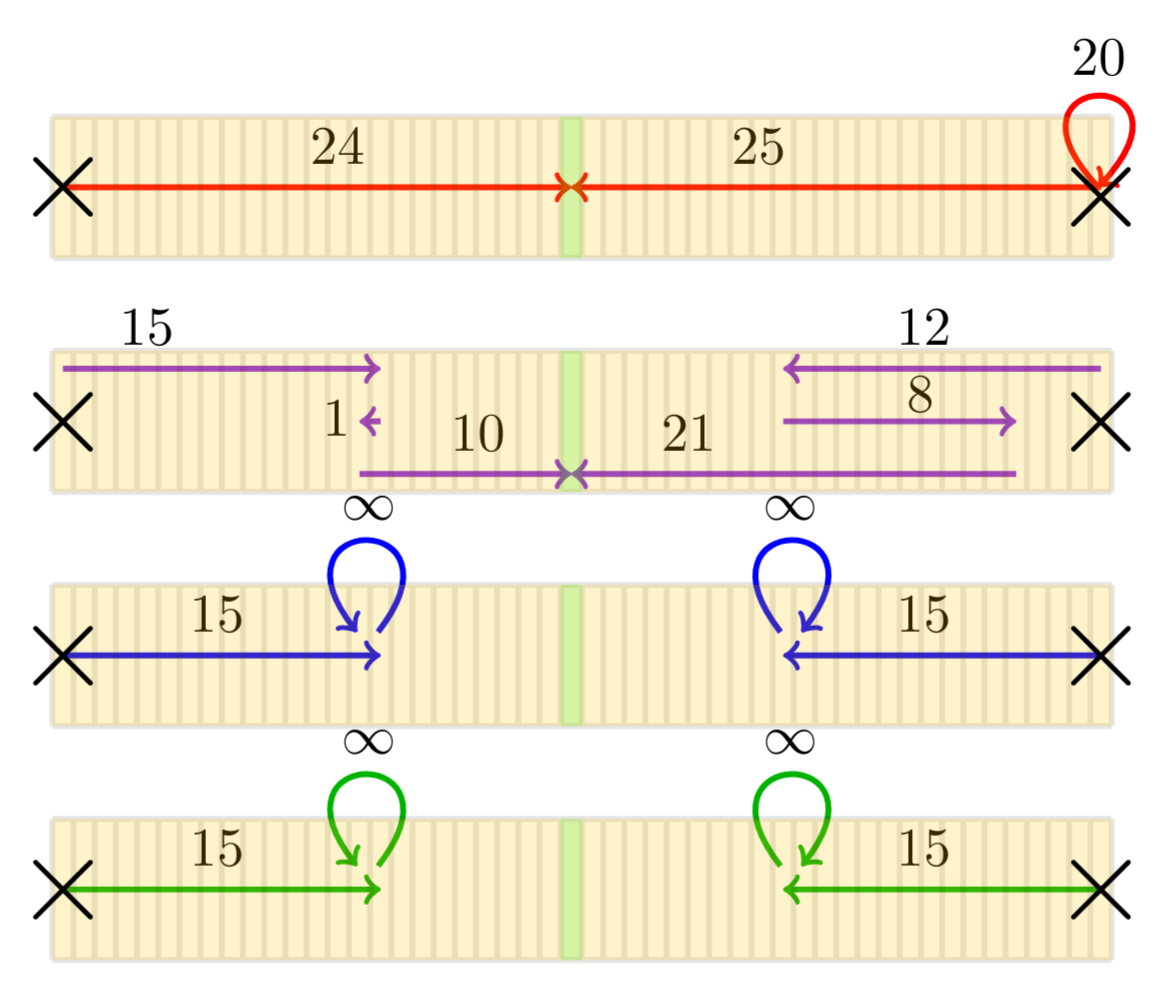}

\caption{
(Modified Bullseye Problem) Again, in red is the rollout for the fully observable joint optimal policy with discounted sum of rewards 8.85. In purple is Simple Memory Based Extraction ($\mathcal V_{comp} = 30$) with discounted sum of rewards -201.96. The Amalgam Policy and Cutoff Policy in blue and green respectively have value -1087.97. All values are rounded to the second decimal place.
}
\label{bullseye_mod_fig}
\end{figure}
\noindent Here, we show that with a minor modification to the Bullseye Problem (\cref{bullseye}), Penalty Jittering can be induced in the Amalgam Policy as well. On the other hand, Simple Memory Based Extraction will continue to perform reasonably in this scenario.
\\\\
Shown in \cref{bullseye_mod_fig} is the same scenario as the Bullseye Problem (\cref{bullseye}) except the visibility is set to $\mathcal V_{exec} = 20.5$. Since the step size is always $1$, agents will not be able to see each other until they collide. This is equivalent to setting $\mathcal V_{exec} = 20$ but adheres to $\mathcal V_{exec} > \mathcal R$. The fully observable joint optimal policy in this situation remains to give the left agent 20 steps right of way. The Cutoff Policy will also fail for the same reason as in the Bullseye Problem (\cref{bullseye}). The Amalgam Policy, however, will also fail in this situation. The optimal action that will be followed by the Amalgam Policy (see beginning of \cref{simulations} for intuition) 
pushes the agents apart when they collide, but when this is done, the agents are pushed out of view, and they will forget about each other. The optimal policy for each individual agent on its own is to move towards the center. This repeats indefinitely, producing the jittering behavior described in \cref{penalty_oscillation}. Simple Memory Based Extraction (still with $\mathcal V_{comp} = 30$) is able to remember agents out of view and is able to successfully backtrack and bring both agents to the center while adhering to our visibility constraints.

\subsection{Highway Problem}
\label{highway}

\begin{figure}[H]
\centering

\includegraphics[width=0.32\linewidth]{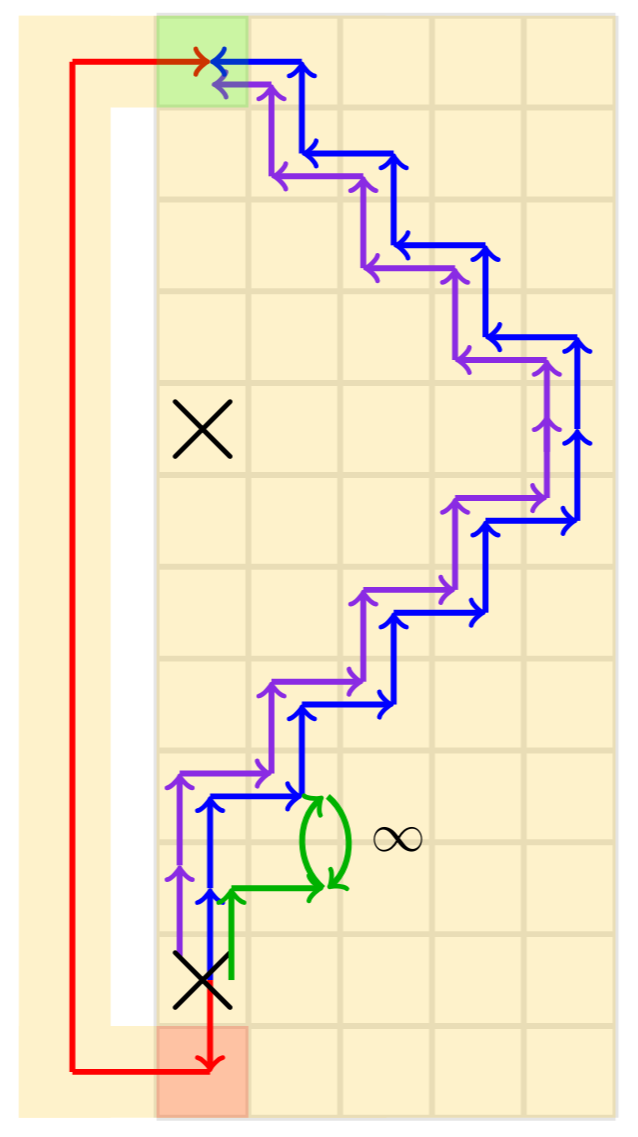}

\caption{
(Highway Problem) In red is the rollout for the fully observable joint optimal policy with discounted sum of rewards 73.50. In blue and purple is the Amalgam Policy and Simple Memory Based Extraction ($\mathcal V_{comp} = 9$) with value 70.93 and in green is the Cutoff Policy with value 0. All values are rounded to the second decimal place.
}
\label{highway_fig}
\end{figure}

This simulation will again evaluate the performance of Simple Memory Based Extraction in a previously established environment from \cite{deweese2024locally} and set up the next section.

Illustrated in \cref{highway_fig}, agents will receive a reward of $+100$ just once when they reach the square at the top left indicated in green. Agents must reach this point while avoiding getting within $\mathcal R = 3$ spaces of each other in which they will incur a penalty of $-500$. There is also a square on the bottom left indicated in red where agents can be transported to the green square on the top left, but will incur a penalty of $-25$. One agent will be placed towards the center and will not be able to move. The other agent will start towards the bottom left and will decide whether to take the quicker ``highway'' incurring the penalty or travel there using the longer path to get to the green square on the top left. Agents will act according to a visibility $\mathcal V_{exec} = 5$. For the distance metric, to ensure that the highway does not violate our step movement of at most 1, we take the graph distance while treating each square as a node in a graph and possible movement directions (up, down, left, and right) as edges. The discount factor used is $\gamma = 0.98$.
 
The fully observable joint optimal solution is to take the highway, as the agent blocking the path will require more time to maneuver around, discounting our $+100$ reward. However, this simulation is designed so that if the highway is not taken initially, the optimal path is to maneuver around the center agent and continue taking the longer route.  The partially observable policies will not initially see the center agent and therefore will take the longer path. The Amalgam Policy and Simple Memory Based Extraction with $\mathcal V_{comp} = 9$ are able to successfully maneuver around the center agent and move to the green square on the top left. The Cutoff Policy on the other hand, once the center agent is in view, acts as though disconnections are permanent. So the agent will decide to move one step away anticipating a permanent disconnection and then once again moves forward. This repeats indefinitely resulting in the Penalty Jittering phenomenon described in \cref{penalty_oscillation}.

\subsection{Modified Highway Problem}
\label{highway_mod}
\begin{figure}[H]
\centering
\label{modified_highway}

\includegraphics[width=0.30\linewidth]{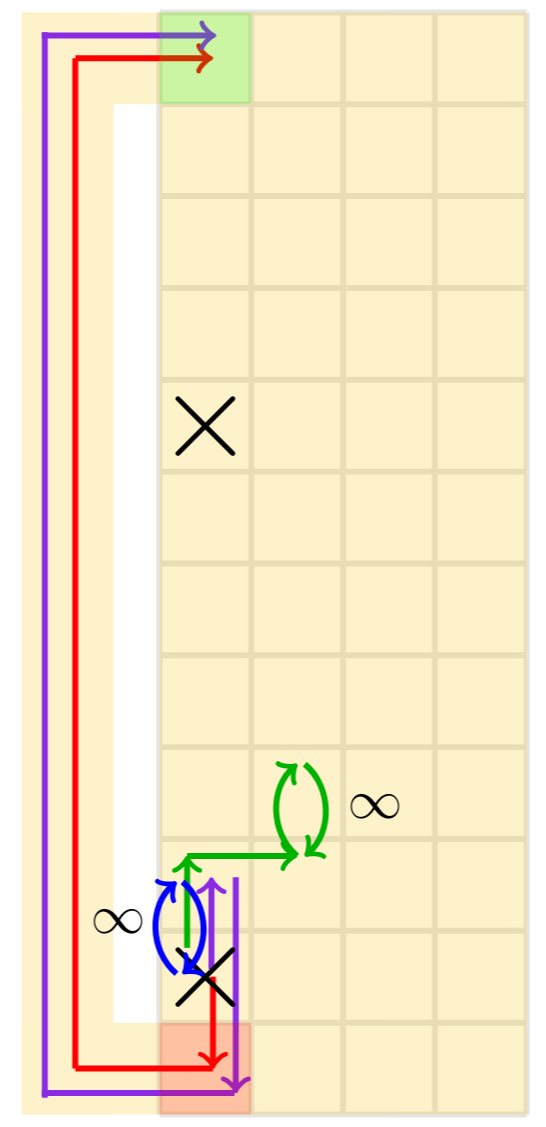}

\caption{
(Modified Highway Problem) In red is the rollout for the fully observable joint optimal policy with discounted sum of rewards 73.50. Simple Memory Based Extraction ($\mathcal V_{comp} = 9$) is in purple with value 70.59. The Cutoff Policy and Amalgam Policy are in green and blue respectively with value 0. All values are rounded to the second decimal place.
}
\label{highway_mod_fig}
\end{figure}

We will make a minor modification to the Highway Problem (\cref{highway}) and once again induce Penalty Jittering in the Amalgam Policy. Simple Memory Based Extraction is again able to succeed in producing reasonable results in this environment.
 
For this simulation shown in \cref{highway_mod_fig}, we take the Highway Problem (\cref{highway}) and reduce the length of the grid from 5 to 4. Since $\mathcal R = 3$ and the center agent allocates a square, agents are not able to pass in this direction. In other words, the highway is the only path to the $+100$ reward in the top left. The fully observable joint optimal policy remains taking the highway. The Amalgam policy takes a step forward and sees the agent and moves back to take the highway. However, when the agent back tracks, the agent forgets about the agent in the center and takes a step forward again which continues indefinitely. The Cutoff Policy fails for a similar reason to the Highway Problem (\cref{highway}). The Simple Memory Based Extraction on the other hand is able to remember the placement of the agent and succeed in taking the highway to achieve the $+100$ reward on the top left.

\subsection{Penalty Jittering}
\label{penalty_oscillation}
\begin{figure}[H]
\centering

\includegraphics[width=0.75\linewidth]{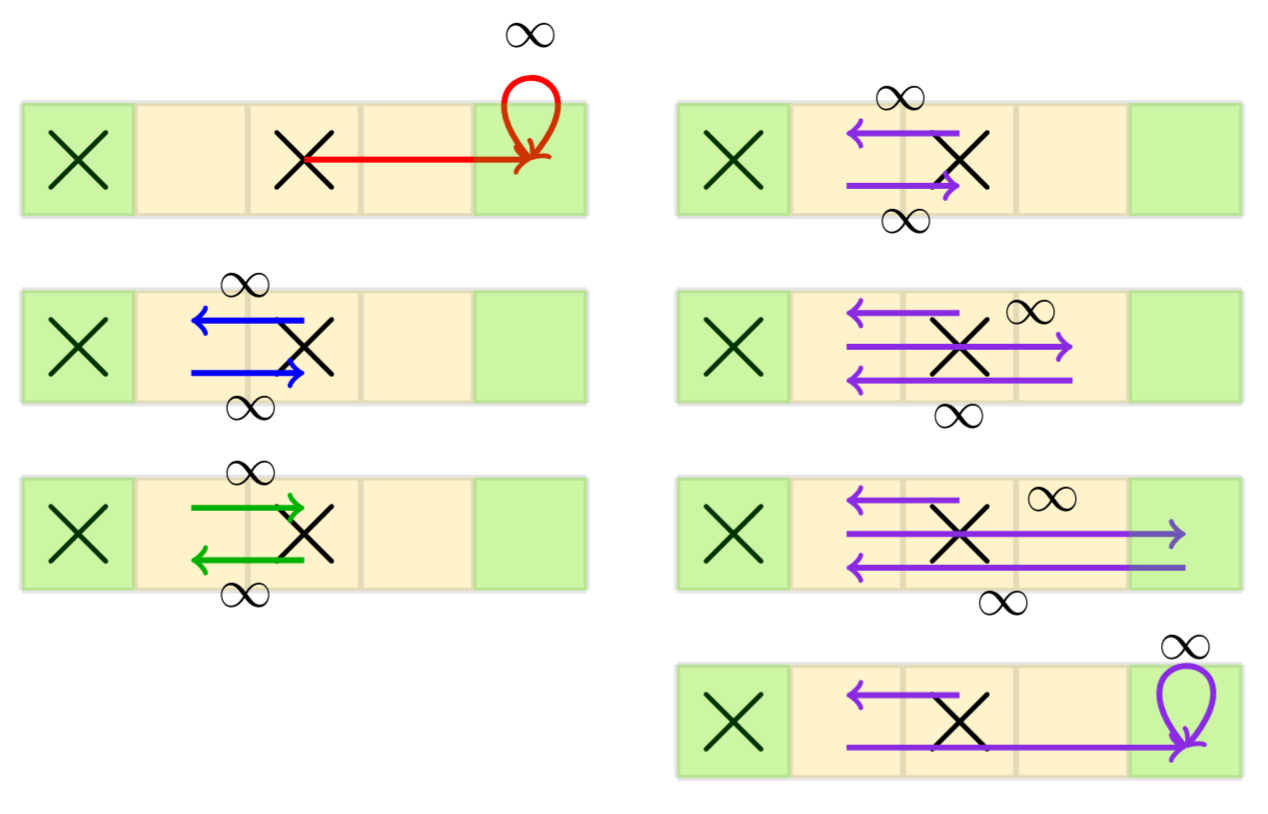}

\caption{
(Penalty Jittering) In red is the rollout for the fully observable joint optimal policy with discounted sum of rewards 2405.00. In blue and green is the Amalgam Policy and Cutoff Policy respectively with value 2000.00. In purple is Simple Memory Based Extraction with increasing $\mathcal V_{comp}$. Starting at the top $\mathcal V_{comp} = 1$ and $\mathcal V_{comp} = 2$ has value 2000.00. $\mathcal V_{comp} = 3$ has value 2070.01 and $\mathcal V_{comp} = 4$ has value 2328.05. All values are rounded to the second decimal place.
}
\label{penalty_jit_fig}
\end{figure}

This example will demonstrate the Penalty Jittering phenomenon and show an illustrative example of how Penalty Jittering is overcome by Simple Memory Based Extraction. 
 
Intuitively, Penalty Jittering occurs when the partially observable agent is pushed out of a communication group towards some main objective. When the agent leaves the communication group, the agent forgets about the communication group it was previously a part of and acts according to its own interest towards another objective, which happens to be in the direction of the communication group it just left. It goes back and forth entering and exiting this communication group creating a ``jittering'' behavior and completely stopping the movement of this agent. This primarily occurs when the interdependent dependencies are negative penalties because if the interdependent dependencies are positive, leaving and forgetting the local communication group will only make the original main objective more enticing and the agent will not re-enter the communication group. We demonstrate this occurrence concretely with the following simulation.
 
Shown in \cref{penalty_jit_fig}, agents can obtain a reward of $+200$ for moving to the leftmost square and a reward of $+50$ for moving to the rightmost square and are penalized by $-500$ for overlapping ($\mathcal R = 0$). One agent will be placed in the leftmost square and another two spaces to the right. The visibility of all agents is $\mathcal V_{exec} = 1$ and the discount factor is $\gamma = 0.9$.
 
The fully observable joint optimal policy is to keep the left agent in the leftmost square and move the right agent to the rightmost square. Unfortunately, all Trivial Extraction policies, such as the Cutoff Policy and Amalgam Policy, will struggle in this example because of the memorylessness of the agents. Since the reward for the leftmost state is higher, the right agent will move left. However, the agent will see that the next space is occupied ($\mathcal V_{exec} = 1$) and decide that moving the agent toward the rightmost square will lead to higher discounted rewards. Once the right agent moves one space to the right, the agent will forget about the existence of the left agent. The right agent will then move back to the left. This repeats indefinitely, resulting in the Penalty Jittering phenomenon.
 
This is a fundamental occurrence due to the lack of memory of the agents and stops the movement of the agents (see \cref{bullseye}, \cref{bullseye_mod}, \cref{highway}, and \cref{highway_mod} for more examples). This heavily detracts from the viability of the closed form policies proposed in \cite{deweese2024locally} in the small and fixed $\mathcal V_{exec}$ regime.
 
Simple Memory Based Extraction allows the policy to remember the placement of the other agent and to go closer and closer to the rightmost state until it eventually reaches and stays at the rightmost square. However, notice that for smaller limited computational visibilities $\mathcal V_{comp}$, the back and forth movements are still present. These movements become longer with the increase in the computational visibility $\mathcal V_{comp}$ until it stays at the rightmost square. In other words, the Penalty Jittering becomes more of a ``Penalty Oscillation'' for various computation visibilities in Simple Memory Based Extraction. This kind of oscillation can occur frequently in Simple Memory Based Extraction simulations similar to Penalty Jittering but can often be resolved by simply increasing the computational visibility $\mathcal V_{comp}$.

\subsection{Long Journey}
\label{journey}
\begin{figure}[H]
\centering

\includegraphics[width=0.80\linewidth]{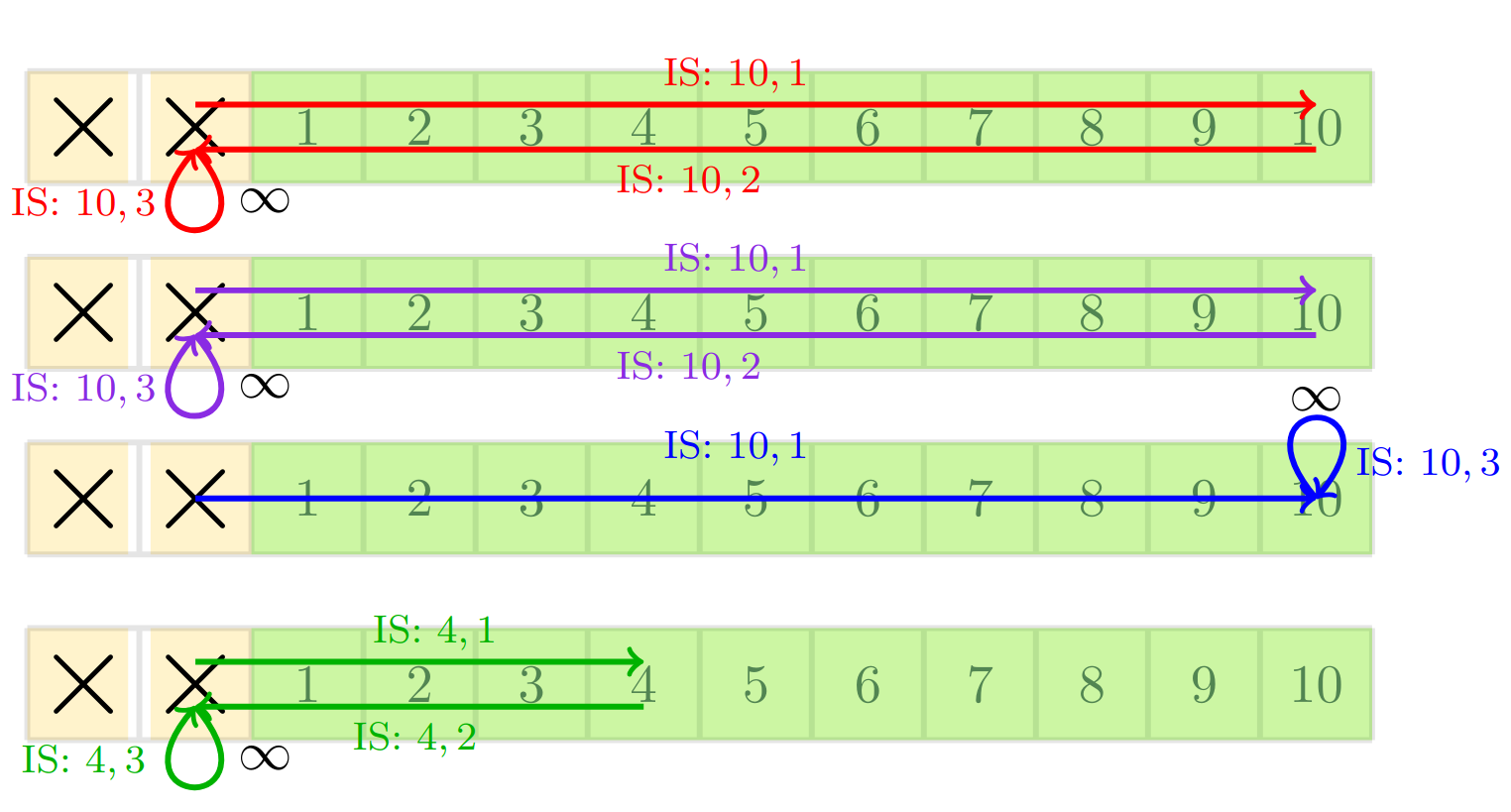}

\caption{
(Long Journey) In red and purple is the trajectory for the fully observable joint optimal policy and Simple Memory Based Extraction ($\mathcal V_{comp} = 11$) respectively with discounted sum of rewards 200. In blue is the rollout for the Amalgam Policy with discounted sum of rewards 10. In green is the trajectory for the Cutoff Policy with discounted sum of rewards 140. 
}
\label{journey_fig}
\end{figure}
This simulation will compare the performance of Trivial Extraction and Simple Memory Based extraction as a whole (for every relevant $\xi, \eta$) on an illustrative example. This is displayed in a table similar to that shown in \cref{corners}, but with a concrete example. This table will also illustrate the effect of the increase in the computational horizon $\eta$.
 
As shown in \cref{journey_fig}, agents will not only have a position in a common state space but also an internal state (see \cref{limdp}) consisting of a tuple of two values that will be specified below.  The discount factor $\gamma$ can be an arbitrary value in $(0, 1)$. One agent will be fixed in the leftmost square and another agent will be placed one space to the right. The free agent on the right may freely move through the numbered squares but must choose a number between $x = 1,\ldots, 10$ as its objective, which will be set as its first internal state. When the agent reaches the numbered square corresponding to its choice, the agent can then choose to stay (preventing further movement) by changing its second internal state to $3$ and obtaining a one-time reward of $\frac{(10 - x + 1)\times 10}{\gamma^x}$. The agent may instead choose an internal state of $2$ that allows a return to its original position to attain a one-time reward of $\frac{100 + 10x}{\gamma^{2x}}$ for being within $\mathcal R = 1$ of the leftmost agent. This will transfer the right agent to a second internal state of 3 preventing further movement. 
Here, the discount factors in the denominator adjust for the discounting that will be placed on the reward for the steps required to move to the square. The visibility in this setting is $\mathcal V_{exec} = 5$.
 
This example is designed so that the highest rewards can be achieved by traveling to and returning from higher-numbered grid spaces. On the other hand, if the free agent fails to make the round trip, attempting higher number grid spaces will diminish the reward. 
 
The fully observable joint optimal policy is to make the round trip to square 10. Simple Memory Based Extraction can achieve this optimal behavior for sufficiently large $\xi$ since the conditions described in \cref{main_results} hold. The Amalgam Policy attempts the highest $10$ square but is unable to complete the round trip because of the memorylessness. The Cutoff Policy is able to take this into account and aims for the largest numbered square that can be completed while staying within $\mathcal V_{exec} = 5$ of the other agent. 
 
A table of performance outcomes for both Trivial Extraction and Simple Memory Based Extraction is provided in \cref{heatmap}. Intuitively, Trivial Extraction in this scenario is not able to effectively utilize the extra computation that is performed as in Simple Memory Based Extraction. Increasing $\eta$ (with horizon $c + \eta$ where $c = \lfloor \frac{\mathcal V_{exec} - \mathcal R}{2}  \rfloor = 2$) allows agents to consider rewards farther away at higher number squares and increasing $\xi$ allows us to compute the behavior of agents farther away ($\mathcal V_{comp} = \mathcal V_{exec} + \xi$). However, Trivial Extraction is not able to effectively use this information for this example because of the lack of memory beyond its visibility. On the other hand, we see that Simple Memory Based Extraction is able to steadily incorporate this information leading to better policies as we increase the computational visibility eventually leading to the fully observable joint optimal policy. The effect of increasing $\xi$ on the computation time is discussed in \cref{scalable}.

\newpage
\begin{figure}[H]
    \centering
    
\includegraphics[width=1.00\linewidth]{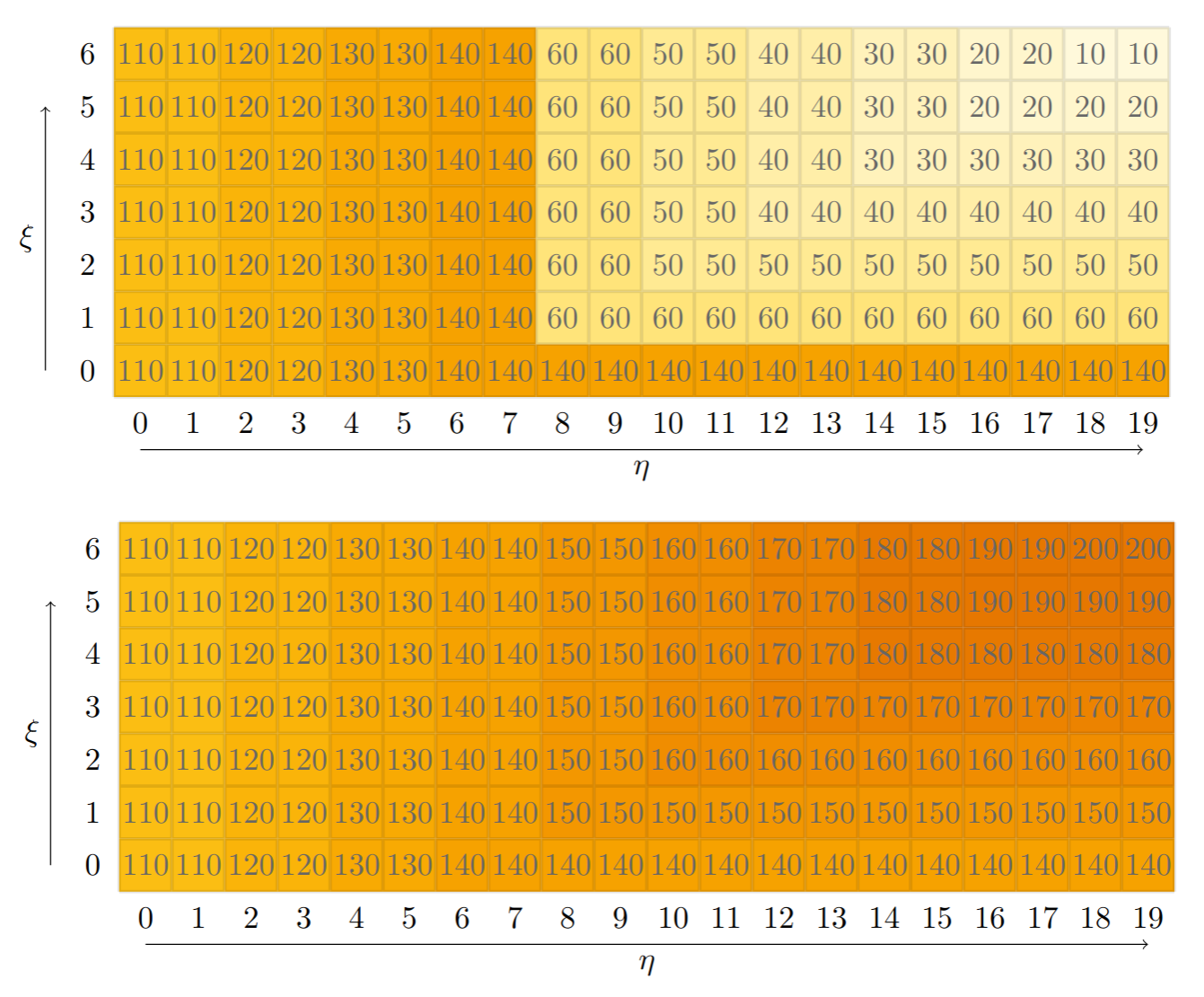}

    \caption{
    A table of performance outcomes in the Long Journey simulation for Trivial Extraction (top) and Simple Memory Based Extraction (bottom). Notice the resemblance of the top Trivial Extraction Table to  \cref{corners}. The corners correspond to the Amalgam Policy (top right), Finite Amalgam Policy (top left), Cutoff Policy (bottom right), First Step Finite Horizon Optimal Policy (bottom left).}
    \label{heatmap}
\end{figure}

\subsection{Random Navigation With Many Agents}
\begin{figure}[H]
    \centering
    \includegraphics[width=130mm]{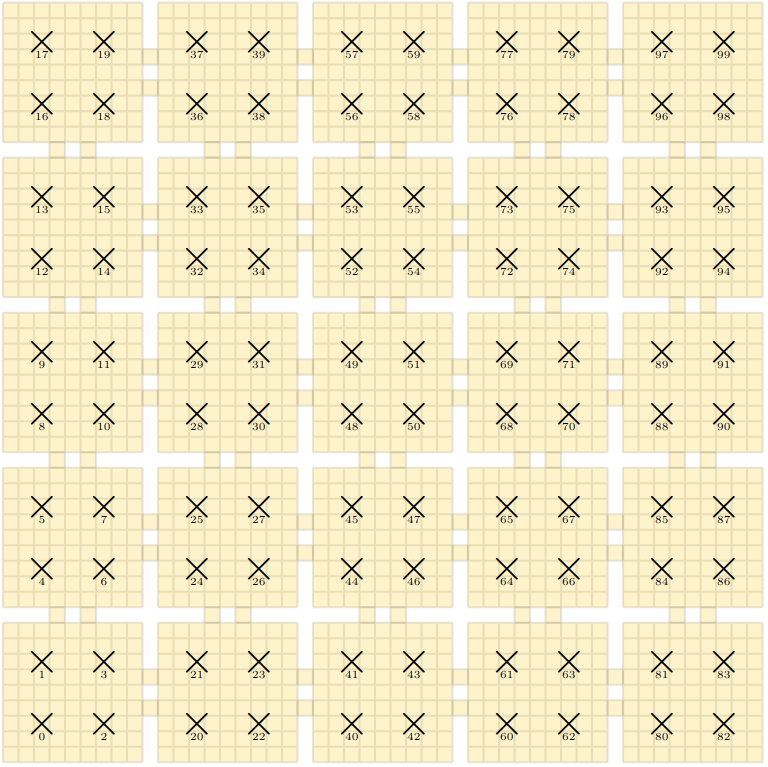}
    \caption{
    (Random Navigation With Many Agents) The environment and the initial placement of the 100 agents for the Random Navigation With Many Agents simulation.
    }
    \label{b_many_start}
\end{figure}
\label{b_many}
This is an example of how large group heuristics can be used to simulate coordination behavior with a many agents. This large-scale simulation does not use deep learning as is often the case in this literature (see \cite{long2018towards} for an example).
 
100 agents will be placed in an environment with various rooms, each aiming for random objectives starting with the placement in \cref{b_many_start}. Every agent in a room is randomly assigned one of the exits to the room. Agents will attempt to claim a reward of $+100$ for reaching the assigned exit, but will receive a penalty of $-500$ if the agent is within $\mathcal R = 1$ of another agent and needs to navigate with a visibility of $\mathcal V_{exec} = 3$. At the exit, the agent is transferred to an intermediate waiting space until there is clearance for the agent in the next room (no agents within $\mathcal V_{exec}$ of the entrance). Agents inside the waiting space will not collide with other agents. Intuitively, this waiting space models a long road which serves as a queue until space is available in the next area. 
 
The rooms in this modular setup act as a second layer of locality. At any moment agents across rooms cannot interact with each other so each agent only needs to consider the agents in the same room when navigating. However, we still need to create a mechanism to handle the potentially large number of agents that may be within a room at a single time, which will still easily make the computation intractable. Here, we will only compute the values for the extended Cutoff Multi-Agent MDP for 3 agents with $\mathcal V_{comp} = 5$ with discount factor $\gamma = 0.9$ and use a stochastic heuristic when group sizes are $\geq 4$. Specifically, we will generate a greedy action for each agent towards their objective with 0.8 probability, and a completely random action otherwise. We will sample this distribution until we have an action that does not result in a collision. If this action is not found after a large number of samples, we take a random action (effectively resulting in a collision). Although this is a fairly simple heuristic, we will soon see that the heuristic is not relied upon significantly.
 
Here, we will use a variation on Simple Memory Based Extraction with visibility $\mathcal V_{comp} = 5$ that has three modifications to \cref{simple_algo} in \cref{simple_mem}. Firstly, when group sizes in memory (belief states $[s^{belief}]_{z^{belief}}$) are 4 agents or larger, we will clear our memory. This is because the stochasticity of our heuristic will interfere with our estimation of other agents. Secondly, if the actual group size is 4 or more agents, we will take the stochastic heuristic. Lastly, if an agent in memory has a distance that exceeds $\mathcal V_{comp}$, we will remove it from memory. This will reduce the number of times our memory is cleared from large belief states.
 
We begin with the initial state shown in \cref{b_many_start} and show snapshots of the placement of the agents and the trajectory for 1000 time steps of two agents (agents 0 and 99) in \cref{b_many_trajectory}. We consider an objective to be complete when an agent successfully exits a room. The following are statistics taken during the rollout.
 
\begin{center}
\begin{tabular}{ |p{5cm}||p{1cm}|p{1cm}|p{1cm}| p{1cm}|  }
 \hline
 \multicolumn{5}{|c|}{Rollout Statistics} \\
 \hline
Timestep & 50 & 150 & 500  & 1000 \\
 \hline
average objectives per agent&  3.58 & 11.28 & 37.92 & 77.05\\
minimum objectives by agent&  2 & 7 & 31 & 67\\
maximum objectives by agent&  5 & 14 & 42 & 86\\
total objectives complete &  358 & 1128 & 3792 & 7705\\
max agents in single room & 10 & 11 & 13 & 13\\
max group size & 6 & 6 &6 & 7\\
average group size & 1.19 & 1.17 & 1.15 & 1.15\\
proportion of group size 1 & 86.7\%&  88.4\% & 89.4\% & 89.4\%\\
proportion of group size 2& 9.13\% & 8.00\% & 7.70\% & 7.60\%\\
proportion of group size 3&  2.83\% & 2.64\% & 2.02\% & 2.07\%\\
proportion of group size $\geq$ 4&  1.28\% & 1.01\% & 0.876\% & 0.890\%\\
collisions &  0 & 0 & 0 & 0\\

 \hline
\end{tabular}
\end{center}
We see that the agents do not not collide at any point. Further, since the minimum number of objectives achieved by an agent increases with time, we can infer that agents do not get stuck with the Penalty Jittering phenomenon or with congestion. As expected, the number of total objectives complete increases steadily with time. We see that the heuristics are taken approximately $0.9\%$ over the course of 1000 time steps and much more often rely on our computed values. This means that our tractably computed extended Cutoff Multi-Agent MDP values account for more than $99\%$ of the group actions taken in this system.
\\

\newpage

\begin{figure}[H]
\centering

\includegraphics[width=155mm]{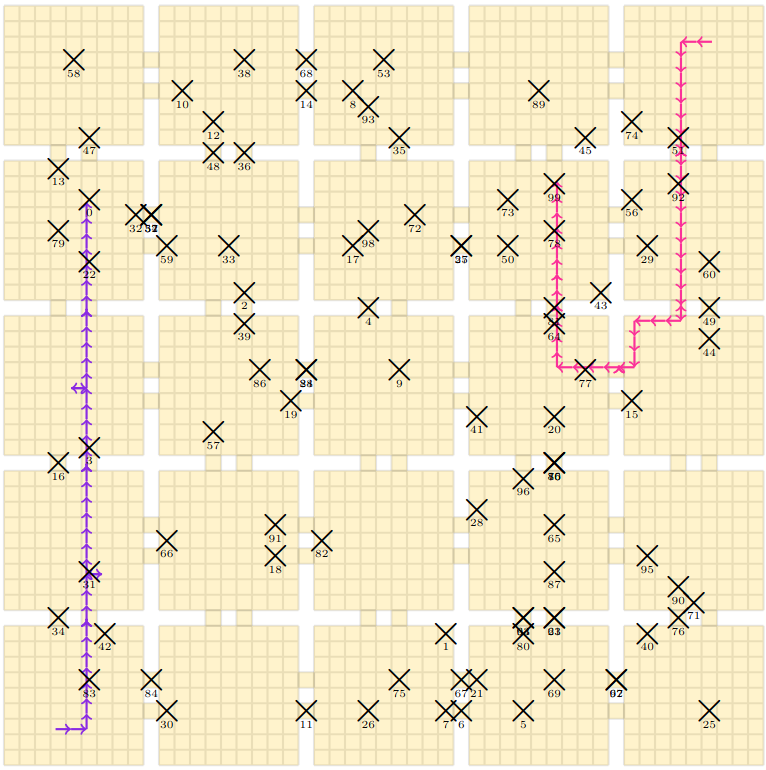}
     \caption{(Random Navigation With Many Agents) Snapshot at time step 50. Trajectories are traced for agents 0 and 99}
     \label{b_many_trajectory}
\end{figure}

\begin{figure}[H]
\centering

\includegraphics[width=155mm]{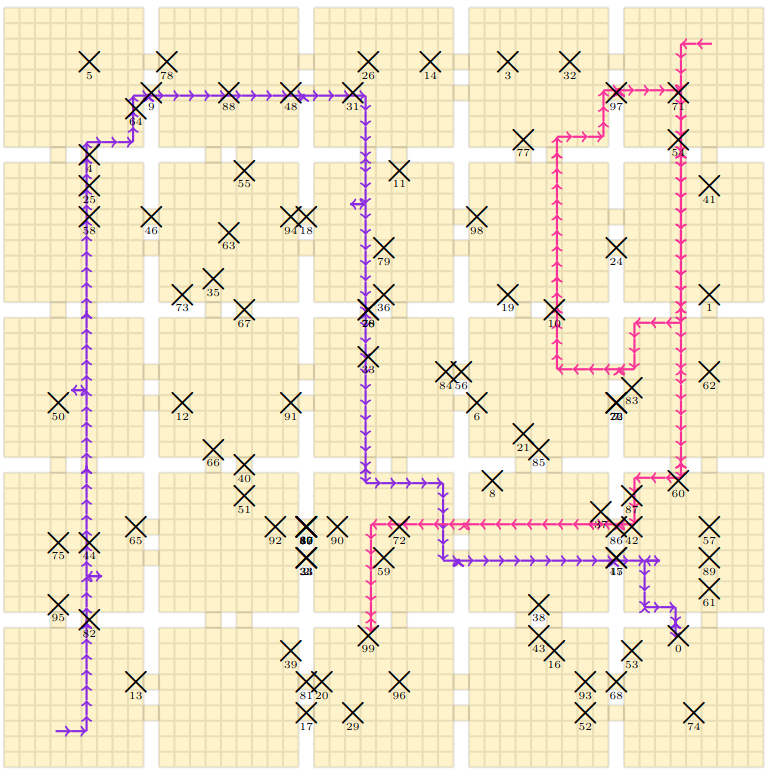}
     \caption{(Random Navigation With Many Agents) Snapshot at time step 150. Trajectories are traced for agents 0 and 99}
     \label{b_many_trajectory}
\end{figure}

\begin{figure}[H]
\centering

\includegraphics[width=155mm]{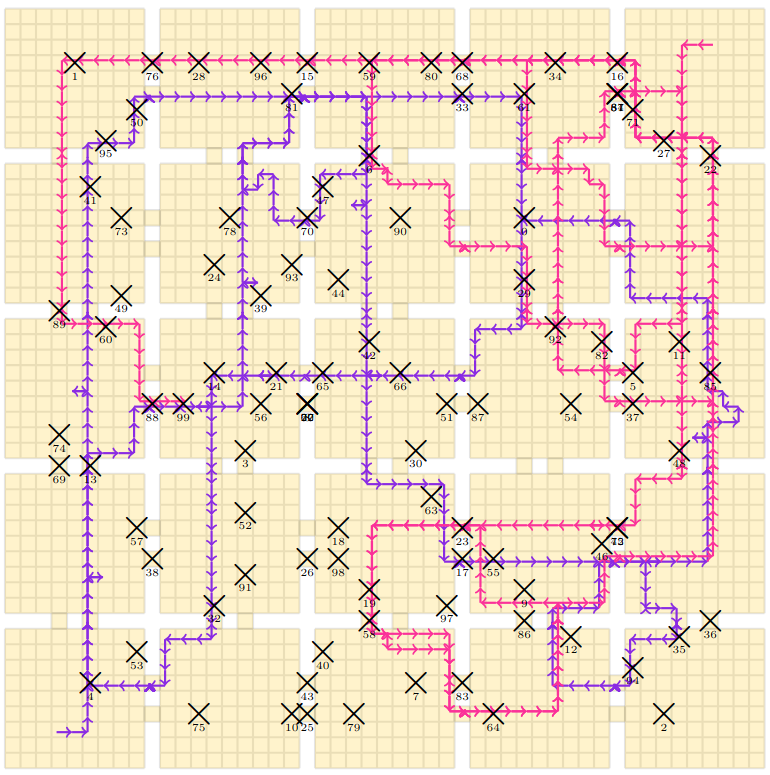}
     \caption{(Random Navigation With Many Agents) Snapshot at time step 500. Trajectories are traced for agents 0 and 99}
     \label{b_many_trajectory}
\end{figure}

\begin{figure}[H]
\centering

\includegraphics[width=155mm]{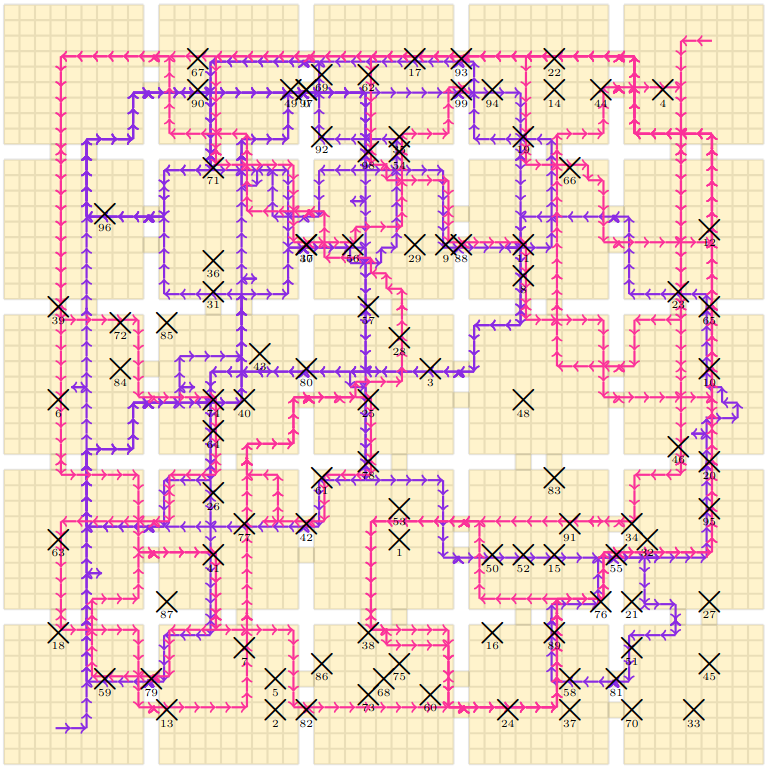}
     \caption{(Random Navigation With Many Agents) Snapshot at time step 1000. Trajectories are traced for agents 0 and 99}
     \label{b_many_trajectory}
\end{figure}

\subsection{Stochastic Transitions}
\label{stochastic}

\begin{figure}[H]
\centering
\includegraphics[width=0.42\linewidth]{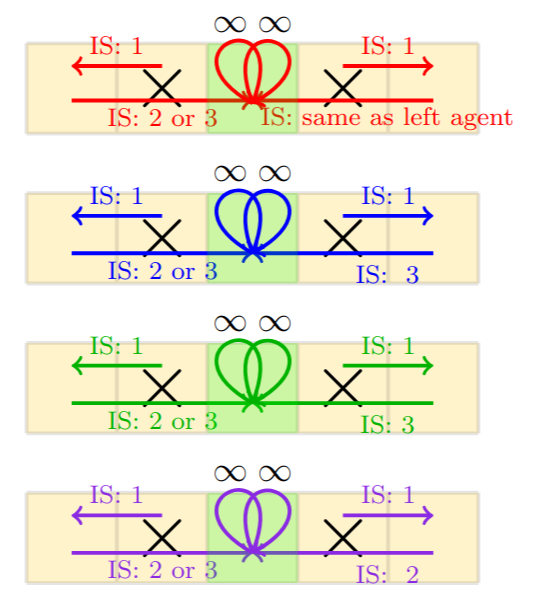}

\caption{(Stochastic Transitions) In red is the rollout for the fully observable joint optimal policy with value 160.09. In blue and green is the Amalgam Policy and Cutoff Policy respectively with value 93.17. Lastly, in purple is the rollout for Simple Memory Based Extraction ($\mathcal V_{comp} = 4$) with value 81.50. All values are rounded to the second decimal place.}
\end{figure}

This simulation is an adversarially chosen example for Simple Memory Based Extraction which introduces stochasticity into the environment in such a way that it results in lower performance than the Amalgam Policy and Cutoff Policy. The stochasticity breaks one of the conditions for the optimality of Simple Memory Based Extraction described in \cref{main_results}. In this case, Trivial Extraction may out-perform Simple Memory Based Extraction even for large computational visibilities.
 
Agents initially start within view of each other with the visibility of $V_{exec} = 3$. Agents will be deterministically pushed one step outward with an internal state 1 and then pushed toward the center with an internal state of 2 or 3. The left agent will be assigned 2 with probability $51\%$ and 3 otherwise. The right agent may choose between internal states 2 and 3 in the rightmost state where the left agent is out of view. In the center, an agent with internal state 3 will obtain a recurring $+2$ reward, and if the two agents overlap in the center $\mathcal R = 0$ with the same internal state, they will both receive an additional recurring $+10$ reward. The discount factor is $\gamma = 0.9$.
 
The fully observable joint optimal policy is to simply match the internal state of the right agent to the left agent. The Amalgam Policy and Cutoff Policy will simply take the greedy option for the right agent and always choose 3. On the other hand, Simple Memory Based Extraction will use that the left agent has a higher probability of being assigned internal state 2, and choose 2 instead. This will result in a lower performance than the Trivial Extraction policies.
 
Note that this can be resolved by modifying Simple Memory Based Extraction to take into consideration stochasticity more delicately such as removing agents from memory when the confidence is low.

\subsection{Unanticipated Out Of View Agent}
\label{oov}

\begin{figure}[H]
\centering

\includegraphics[width=0.70\linewidth]{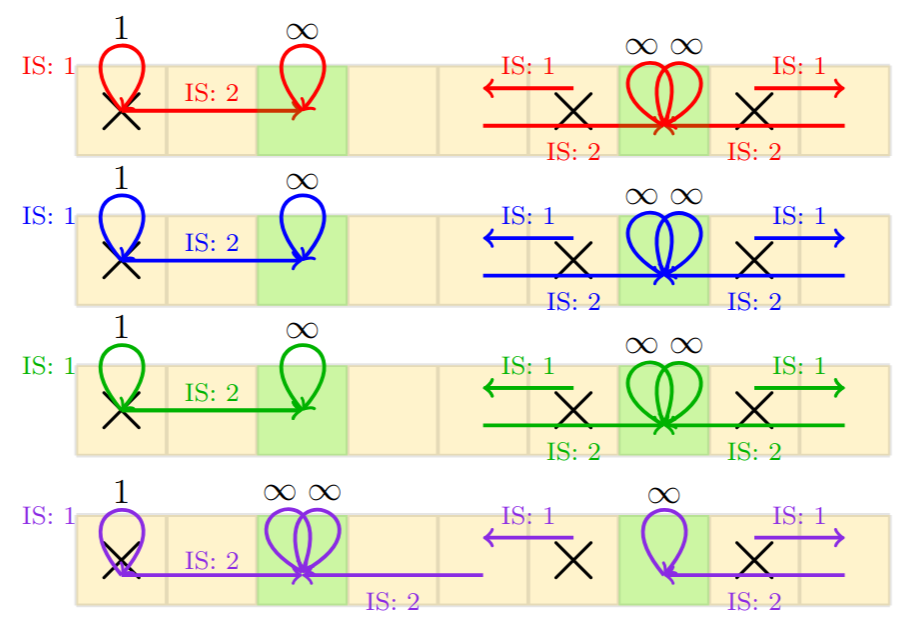}

\caption{(Unanticipated Reward Dependence) In red, blue, and green are rollouts for the trajectories for the fully observable joint optimal policy, Amalgam Policy, and Cutoff Policy respectively with discounted sum of rewards 8748.00. In purple is the trajectory for Simple Memory Based Extraction ($\mathcal V_{comp} = 4$) with value 7932.59. All values are rounded to the second decimal place.}
\end{figure}

This simulation is another adversarially chosen example for Simple Memory Based Extraction which is deterministic but not all agents are initially in view. Specifically, an out-of-view agent will cause an unforeseen penalty. This violates one of the conditions that ensures fully observable joint optimal behavior for Simple Memory Based Extraction. In this example as well, Trivial Extraction out performs Simple Memory Based Extraction even for large computational visibilities.
 
In this simulation, there are two squares indicated in green; the left square gives a recurring $+400$ reward and the right square gives a recurring $+500$ reward. Agents will receive a penalty $-100$ for overlapping $\mathcal R = 0$. The discount factor is $\gamma = 0.9$. The two agents on the right initially start within view $\mathcal V_{exec} = 3$ with internal state 1 and then are forced to move one step outward. They will transition to internal state 2 then move towards one of the squares without any information of the agent on the left. The agent on the left starts in internal state 1 and stays in its position changing to internal state 2, then will move to the closest reward square. 
 
Since collision is inevitable when placing three agents on the two squares, the optimal strategy is to move two agents to the right square and one agent to the left square. The Amalgam Policy and Cutoff Policy assume a 1 agent system for all the agents when they are out of view of each other and greedily send the middle agent towards the right square which happens to be the fully observable joint optimal action. On the other hand, the Simple Memory Based Extraction will send the middle agent towards the left because it is able to coordinate with the agent on the right and assumes a 2 agent system. However, there is an unexpected agent on the left which will result in a lower performance than the greedy Trivial Extraction policies.

\subsection{Coordination With Out Of View Agents}

\begin{figure}[H]
\centering

\includegraphics[width=0.70\linewidth]{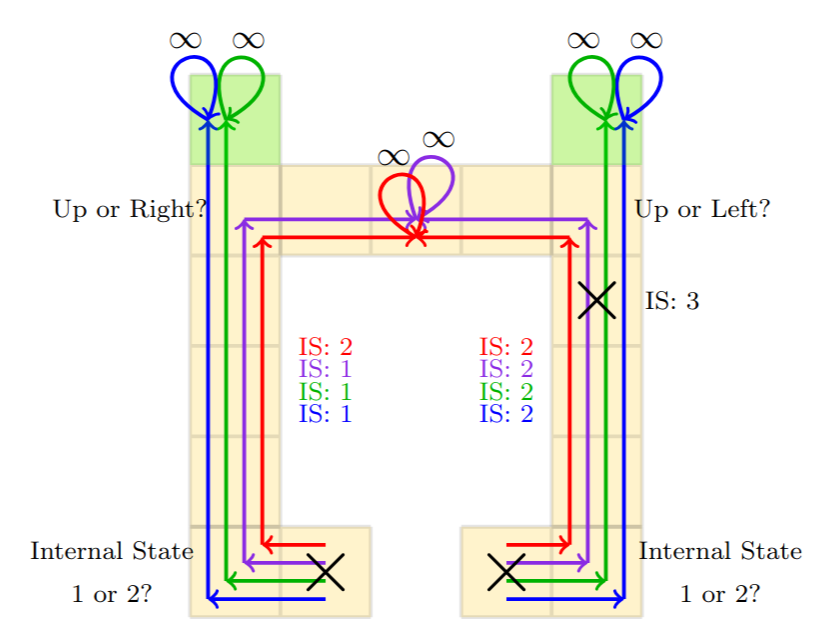}

\caption{
(Out of View Coordination) In red is the trajectory for the fully observable joint optimal policy with discounted sum of rewards 95.66. In blue and green is the Amalgam Policy and Cutoff Policy respectively with value of 2.70. In purple is Simple Memory Based Extraction ($\mathcal V_{comp} = 4$) with value 0. All values are rounded to the second decimal place.
}
\end{figure}

The purpose of this simulation is to show another adversarially chosen case for Simple Memory Based Extraction which is deterministic but not all of the agents are in view at the beginning of the simulation. Specifically, an agent will interact with an agent out of view and will make agent estimations inaccurate. This again violates one of the conditions that ensures fully observable joint optimal behavior for Simple Memory Based Extraction. In this example as well, Trivial Extraction will outperform Simple Memory Based Extraction even for large computational visibilities.
 
For this simulation, similar to \cref{stochastic}, two agents will split apart and come together and try to match their internal states. However, instead of stochasticity, there is a third agent that will force one of the agents to choose a specific internal state. 
 
Specifically, two agents will initially start within the view of each other. Once out of view, agents can decide whether to change their internal state to 1 or 2, with 1 leading to a higher reward if they can match internal states and overlap $\mathcal R = 0$. However, the right agent can see a stationary third agent with $\mathcal V_{exec} = 3$ not previously known which is in internal state 3. When an agent with internal state 1 and internal state 3 overlaps, both agents incur a penalty of $-500$. Therefore, the agent on the right must choose internal state 2 to avoid collision. Once at the top, agents can decide whether to move to a space that gives a recurring $+1$ reward or whether to go to the center. At the upper center square, if both overlap with a matching internal state 1, they will obtain a recurring $+50$ reward or if both overlap with a matching internal state 2, they will obtain a recurring $+10$ reward and otherwise obtain no reward. The discount factor is $\gamma = 0.9$.
 
We see that the fully observable joint optimal solution is to match the internal states 2 at the upper center square. The Amalgam Policy and Cutoff Policy, once the two non-stationary agents are out of view, will go towards the reward squares on the upper left and upper right forgetting about each other. In Simple Memory Based Extraction the left agent, which does not see the stationary agent, will believe the the right agent will not incur any obstacles and will coordinate in moving to the upper center square with internal states 1. The right agent will see the stationary agent and will change their internal state to 2 without being able to communicate with the agent on the left. This results in a mistaken estimation by the left agent of the right agent, and they both go towards the upper center square with mismatching internal states resulting in no reward.

\section{Simple Memory Based Extraction}
\label{simple_mem}

{
\footnotesize
\IncMargin{2em}
\begin{algorithm2e}[H]
\caption{Simple Memory Based Extraction}
\label{simple_algo}
\small
\label{algo:ValueIter}
\KwIn {$I(\tau^t, z)$}
        \vspace{1ex}
        
    $M_{i,j}(t) \leftarrow None$ for $i,j \in \mathcal N$\\
    
    \tcp{Retrieve previous memory / agent position estimates}
\uIf{$t > 0$}{

    \label{recursion}
    \For{$z' \in Z(s(t - 1))$ such that $z'\cap z \neq \varnothing$}{
        Let $\{M_{i,j}(t-1)$: $i \in z'\} \leftarrow$ \cref{simple_algo} $\big(I(\tau^{t - 1}, z')\big)$\\
        Let $M_{i,j}(t) \leftarrow M_{i,j}(t - 1)$ for $i \in z'$%
    }
     }
        
        Let $M_{i,i}(t)\leftarrow (s_i(t), t)$, and $M_{z,i}\leftarrow (s_i(t), t)$ for all $i \in z$\\
        \label{update_curr_groups}
        \vspace{1ex}
        
        \tcp{Obtain most recent observation for out of view agents}
        Let $M_{z,j}\leftarrow \text{argmax}_{M_{i,j}(t), i \in z} M_{i,j}^{time}(t)$ for $j \notin z$ 
        \label{consolidation}%
        
    \vspace{1ex}
    \tcp{Construct belief state}
     Let $s^{belief} \leftarrow (M_{z,j}^{state}: j \in \mathcal N, M_{z,j} \neq  None )$\label{belief_state}%
    
    Let $z^{belief} \leftarrow z'$ where $z' \in Z_{comp}(s^{belief})$ such that $z \subset z'$\label{belief_state_next}\\
    
    \vspace{1ex}

    \vspace{1ex}
    \tcp{Predict next state for belief agents}
    $a^{belief} \leftarrow \pi^{\xi,\eta}_{0}(\cdot \lvert ([s^{belief}]_{z^{belief}}, \{z^{belief}\}))$\\
    
    $s^{belief}_{next} = \text{argmax}_{s'} P(s'\lvert [s^{belief}]_{z^{belief}}, a^{belief})$\label{most_likely}\\
    \vspace{1ex}
    
    \tcp{Update memory for all agents in group}
    $M_{i,j}(t) \leftarrow ([s_{next}^{belief}]_j, M_{z,j}^{time})$ for $i \in z$ and $j \in z^{belief}$\label{before_last_update}\\
    $M_{i,j}(t) \leftarrow None$ for $i \in z$ and $j \not\in z^{belief}$\label{last_update}\\
    \vspace{1ex}
\KwRet{$([s^{belief}]_{z^{belief}}, \{z^{belief}\})$, $\{M_{i,j}(t)$: $i \in z$, $j \in \mathcal N\}$}
    
\end{algorithm2e}
}
\vspace{2ex}
The complete algorithm for Simple Memory Based Extraction is provided in \cref{simple_algo}. Intuitively, for every agent, the algorithm retains estimates of other agents that have been seen. When an agent joins a group, these estimates are shared and consolidated with the group based on the most recent observations. Then a consolidated ``belief state'' is constructed and the next state of this constructed state is predicted using the most likely outcome. This estimate is then used to update the predictions for each agent of the location of agents outside of the group.

Formally, for each agent, the algorithm will maintain memory variables $M_{i,j}(t)$ that contain position estimates for agents outside of view. 
In \cref{recursion}, the algorithm begins by retrieving the memory from the previous time step if it exists and in \cref{update_curr_groups} it will be updated with the current observation of the group. Temporary memory variables $M_{z,j}$ are created which represent ``group memory'' and consolidate the memory among agents in the group in \cref{consolidation} based on estimates made by the most recent observation of each agent. Here, the notation $M_{i,j}^{state}(t)$ will denote the first component of $M_{i,j}$ and $M_{i,j}^{time}(t)$ the second component. Then in \cref{belief_state} and \cref{belief_state_next}, a belief state $s^{belief}$ is constructed and in that belief state the visibility group $z^{belief}$ (with visibility $\mathcal V_{comp}$) that includes $z$ is retrieved. The next predicted state for this belief state visibility group is constructed in \cref{most_likely} and finally the memory is updated for all agents in $z$ in \cref{before_last_update} and \cref{last_update}.

This extraction method is referred to as ``simple'' because there are obvious improvements to this algorithm that can be made, such as removing estimates for agents that have low confidence, adding a more sophisticated consolidation of memories within the group, or more intricate transition updates to the agent estimates (rather than the most likely outcome in \cref{most_likely}). Note that these variations would still be valid extraction methods and would be part of the Extended Cutoff Policy Class.

\section{Generalized Locally Interdependent Multi-Agent MDP}
\label{generalized}
In the generalized setting, an agent $i$ lying in partition $g \in D(s)$ will obtain a reward $r_i(s_g, a_g) = \sum_{j \in g} \overline r_{i,j}(s_i, a_i, s_j, a_j)$ without restrictions on the pairwise reward function $\overline{r}_{i,j}$. Transition dependence is introduced by allowing agents to move according to arbitrary local interdependent transitions $P_i(s'_i \lvert s_g, a_g)$. We still assume that if $d(s'_i, s_i) > 1$ and agent $i$ is in partition $g \in D(s)$, that $P_i(s'_i \lvert s_g, a_g) = 0$ which implies $P_g(s'_g \lvert s_g, a_g) = 0$ where $P_g(s'_g \lvert s_g,a_g) = \prod_{i \in g} P_i(s'_i \lvert s_g,a_g)$. 
 
We now present the definition of the Generalized Locally Interdependent Multi-Agent MDP as follows:

\begin{definition}
(Generalized Locally Interdependent Multi-Agent MDP) Assume a set of agents $\mathcal N$ and a set of states in a common metric space $(\mathcal X, d)$. Furthermore, for $i \in \mathcal N$ assume a set of internal states $\mathcal Y_i$, a set of actions $\mathcal A_i$, a transition function $P_i$ depending on group states such that $P_i(s'_i \lvert s_g,a_g) = 0$ if $d(s_i, s'_i) > 1$, and a reward function $r_{i,j}$. The Locally Interdependent Multi-Agent MDP $\mathcal{M} = (\mathcal{S}, \mathcal A, P, r, \gamma, \mathcal R)$ is defined as follows:
\begin{itemize}
\item $\mathcal S := \times_{i \in \mathcal N}\mathcal S_i$ where $\mathcal S_i = (\mathcal X, \mathcal Y_i)$
\item $\mathcal A := \times_{i \in \mathcal N}\mathcal A_i$ 
\item $ P(s' \lvert s, a) = \prod_{g \in D(s)} P_g(s'_g \lvert s_g,a_g)$
\item $ r(s, a) = \sum_{g \in D(s)} r_g(s_g,a_g)$
\end{itemize}
where $r_g(s_g, a_g) = \sum_{i \in g} r_i(s_g, a_g)$ and $r_i(s_g, a_g) = \sum_{j \in g} \overline r_{i,j}(s_i, a_i, s_j, a_j)$%
\end{definition}
 
Similar to the standard definition, $D(s)$ is finer than $Z(s)$ for visibility $\mathcal V > \mathcal R$ so we may write an equivalent reward and transition definition $P(s'\lvert s,a) = \prod_{z \in Z(s)} P_z(s'_z\lvert s_z, a_z)$ where $P_z(s'_z\lvert s_z, a_z) = \prod_{g \in D(s_z)} P_g(s'_g\lvert s_g, a_g)$ and $ r(s, a) = \sum_{z \in Z(s)} r_z(s_z,a_z)$ where $ r_z(s_z, a_z) = \sum_{g \in D(s_z)} r_g(s_g,a_g)$.
 
In the Cutoff Multi-Agent MDP defined in \cref{cutoff}, we replace the definition of the transition with $ P^\mathcal C((s', Z(s') \cap \mathcal P) \lvert (s, \mathcal P), a) = \prod_{p \in \mathcal P} P(s'_p \lvert s_p,a_p)$ where $P(s'_p \lvert s_p,a_p) = \prod_{g \in D(s_p)}P(s'_{g} \lvert s_{g},a_{g})$ and the definition of the reward $r^\mathcal C((s, \mathcal P), a) = \sum_{p \in \mathcal P} r_p(s_p, a_p)$ uses the new less restricted quantities $\overline r_{i,j}$ defined above. All else remain the same.
 
Other definitions such as group decentralized policies, Extended Cutoff Policy Class, Consistent Performance Policy etc. are defined the same as in the Standard Locally Interdependent Multi-Agent MDP.

\section{Consistent Performance Policies}
\label{monotone}
We denote a sample trajectory in the Cutoff Multi-Agent MDP starting at state $(s, \mathcal P)$ with a non-stationary policy $\pi = (\pi_h)_{h = 0}^{c + \eta}$ to be $\pi\lvert_{s,\mathcal P}$ which is a sequence of elements $((s(t), \mathcal P(t)), a(t))$ for $t \in \{0,1, ..., c + \eta\}$ where $s(0) = s$ and $\mathcal P(0) = \mathcal P$ and $c = \lfloor \frac{\mathcal V - \mathcal R}{2}\rfloor$. This trajectory is sampled according to $a(t) \sim \pi_t(\cdot \lvert (s(t), \mathcal P \cap \bigcap_{t' = 0}^t Z(s(t')))$ %
and $(s(t + 1), \mathcal P(t + 1)) \sim P^{\mathcal C}(\cdot \lvert (s(t), \mathcal P \cap \bigcap_{t' = 0}^t Z(s(t'))), a(t))$. 
Recall from our transition definition in \cref{cutoff} that $\mathcal P(t) = \mathcal P \cap \bigcap_{t' = 0}^t Z(s(t'))$. Going forward, for any $\mathcal P'$ which is either finer or coarser than $\mathcal P$, we will define a trajectory $\pi\lvert_{s,\mathcal P}^{\mathcal P'}$ to be the trajectory $((s'(t), \mathcal P'(t)), a'(t))$ where the policy acts according to $a'(t) \sim \pi_t(\cdot \lvert (s'(t), \mathcal P' \cap \bigcap_{t' = 0}^t Z(s'(t')))$ and the transition the same as above $(s'(t + 1), \mathcal P'(t + 1)) \sim P^{\mathcal C}(\cdot \lvert (s'(t), \mathcal P \cap \bigcap_{t' = 0}^t Z(s'(t'))), a'(t))$ where $s(0) = s$ and $\mathcal P(0) = \mathcal P$. %
Intuitively, the policy acts as if the initial partition was $\mathcal P'$ rather than $\mathcal P$ even though the actual initial partition is $\mathcal P$. In the case where $\mathcal P'$ is coarser, the policy ignores the observability constraint of the Cutoff Multi-Agent MDP and coordinates groups across partitions. On the other hand, if $\mathcal P'$ is finer, this will act as if there are disconnections between agents, even though they are connected.
 
We will also use the notation $\tau_h^{h'} \sim \pi\lvert_{s, \mathcal P}$ to denote the trajectory $((s''(t), \mathcal P''(t)), a''(t))$ for $t \in \{h, h + 1, ..., h'\}$ starting at $s''(h) = s$ and $\mathcal P''(h) = \mathcal P$ while taking policy $\pi_t$ at time step $t$. This is defined similarly for $\tau_h^{h'} \sim\pi\lvert_{s,\mathcal P}^{\mathcal P'}$.
 
Here, the intersection of partitions for any subsets $A, B \subset\mathcal N$, for a partition $\mathcal P_A$ over $A$ and a partition $\mathcal P_B$ over $B$, $\mathcal P_A \cap \mathcal P_B$ will indicate a partition over $A\cap B$ where $\mathcal P_A \cap \mathcal P_B  = \{p_A \cap p_B \lvert p_A \in \mathcal P_A, p_B \in \mathcal P_B\} \setminus \{\varnothing\}$.
 
Then we define a variation on the value function for the Cutoff Multi-Agent MDP. For some $g\subset \mathcal N$%

\begin{center}
$[V^{\pi}_{h, h'}]_g (s,\mathcal P) = \mathbb{E}_{\tau_h^{h'} \sim \pi\lvert_{s,\mathcal P}} \bigg[\sum_{t = h}^{h'} \gamma^{t - h} r^{\mathcal C}_g((s(t), \mathcal P(t)), a(t))\bigg]$ 
\end{center}
\begin{center}
$[V^{\pi\lvert^{P'}}_{h, h'}]_g(s, \mathcal P) = \mathbb{E}_{\tau_{h}^{h'} \sim \pi\lvert_{s,\mathcal P}^{\mathcal P'}} \bigg[\sum_{t = h}^{h'} \gamma^{t-h} r^{\mathcal C}_g((s(t), \mathcal P(t)), a(t))\bigg]$ \\
\end{center}

where 
$r_g^\mathcal C ((s(t), \mathcal P(t)),a(t)) = \sum_{p' \in \mathcal P(t) \cap \{g\}}r_{p'}(s_{p'}(t),a_{p'}(t))$. %
If $h'$ is omitted, it will be assumed that $h'$ is the maximum possible horizon length depending on the context. If $g$ is omitted, we assume $g = \mathcal N$.
 
Suppose we take a non-stationary Proper Cutoff Policy $\pi = (\pi_h)_{h = 0}^{c + \eta}$ and append a stationary Proper Cutoff Policy $\pi_{c + \eta + 1}$ to create a new policy $\pi' = (\pi_h')_{h = 0}^{c + \eta + 1}$. We say $\pi$ is a Consistent Performance Policy if we can find a policy $\pi_{c + \eta + 1}$ such that $\pi$ and $\pi'$ satisfy the following for all $\mathcal P'$ finer than $\mathcal P$:
\begin{itemize}
    \item  (Constructive Improvement) $[V_h^{\pi\lvert^{\mathcal P'}}]_p (s, 
    \mathcal P)\leq [V_h^\pi]_p (s, \mathcal P)$ for $h \in \{0,1\}$ and $p \in \mathcal P$
    \item (Deconstructive Improvement) $[V_h^{\pi\lvert^\mathcal P}]_{p'}(s, \mathcal P') \leq [V_h^\pi]_{p'}(s, \mathcal P')$ for $h \in \{0,1\}$ and $p' \in \mathcal P'$
    \item (1-Step Displaced Improvement) $[V_{1,c + \eta + 1}^{\pi'}]_p(s, \mathcal P) \leq [V_0^{\pi}]_p(s, \mathcal P)$ for $p \in \mathcal P$
    \item (1-Step Contracted Improvement) $[V_{0, c + \eta - 1}^{\pi}]_p(s, \mathcal P) \leq [V_1^{\pi}]_p(s, \mathcal P)$ for $p \in \mathcal P$
\end{itemize}
Notice that these conditions are trivially satisfied by the optimal finite horizon policies $\pi^{\xi,\eta}_{comp}$. 
 
Intuitively, these conditions describe a policy that is effective with respect to its intention and ability. For example, in the case of Constructive and Deconstructive Improvement, the policy on the left-hand side acts as though the visibility partition is finer or coarser than it actually is. We assert that when the policy acts in a congruent way with the environment on the right-hand side, the performance is better. A similar intuition carries over to the other two conditions, but with respect to the horizon. The name ``Consistent Performance Policy'' comes from the formal statement in \cref{z_intersect}. Using $\delta = 0$, the lemma shows that if the partition is further split by the visibility groups, the performance will be similar, give or take a $o(\gamma^c)$ term.

\section{Proofs}

\subsection{Standard Locally Interdependent Multi-Agent MDP}
\subsubsection{Main Proof}
\label{indep_proofs}

Similar to the proofs in \cite{deweese2024locally}, we will rely on a key insight called the Dependence Time Lemma. Roughly, this lemma formalizes the intuition that, since $\mathcal V > \mathcal R$, agents in different visibility partitions $Z(s)$ will have a time and distance buffer before they influence each others reward based on the dependence partition $D(s)$.
 
For the following, we define a local movement trajectory $(s(t), a(t))$ for $t \in \{0, \ldots, \mathcal T\}$ to be any sequence of states and actions such that $d(s_i(t + 1), s_i(t)) \leq 1$ for $t \in \{0,\ldots, \mathcal T - 1\}$ and for all $i\in \mathcal N$.
\begin{lemma}
\label{dtl}
(Dependence Time Lemma for Rewards) For any local movement trajectory $(s(t), a(t))$ for $t \in \{0, \ldots, \mathcal T\}$ in a Locally Interdependent Multi-Agent MDP and a time step $T \in \{0, \ldots, \mathcal T - \delta\}$, $D(s(T + \delta))$ is a finer partition than $Z(s(T))$ and
$$ r(s(T + \delta), a(T + \delta)) = \sum_{z \in Z(s(T))} r_z(s_z(T + \delta),a_z(T + \delta))$$ for $\delta \in \{0, \ldots, c\}$ where $c = \lfloor \frac{\mathcal V - \mathcal R}{2}\rfloor$.
\end{lemma}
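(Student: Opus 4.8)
\textbf{Proof plan for the Dependence Time Lemma (Lemma~\ref{dtl}).}
The plan is to argue purely about distances under the ``speed limit'' $d(s_i(t+1),s_i(t))\le 1$, and to leverage the gap $\mathcal V - \mathcal R$. First I would establish the partition-refinement claim: $D(s(T+\delta))$ is finer than $Z(s(T))$ for $\delta\in\{0,\ldots,c\}$. The key geometric observation is that if two agents $i,j$ are related in $D(s(T+\delta))$, there is a chain $k_0=i,k_1,\ldots,k_\ell=j$ with $d(s_{k_m}(T+\delta),s_{k_{m+1}}(T+\delta))\le \mathcal R$ for each consecutive pair. Each agent can have moved at most $\delta$ in distance between time $T$ and time $T+\delta$ (triangle inequality applied $\delta$ times along its own trajectory), so by the triangle inequality $d(s_{k_m}(T),s_{k_{m+1}}(T)) \le \mathcal R + 2\delta \le \mathcal R + 2c \le \mathcal V$, using $c = \lfloor (\mathcal V-\mathcal R)/2\rfloor$ so that $2c\le \mathcal V - \mathcal R$. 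Hence the same chain witnesses that $i$ and $j$ are related in $Z(s(T))$, which is exactly the statement that $D(s(T+\delta))$ refines $Z(s(T))$.

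Next I would prove the reward decomposition. By definition $r(s(T+\delta),a(T+\delta)) = \sum_{g\in D(s(T+\delta))} r_g(s_g(T+\delta),a_g(T+\delta))$, and using the standard-setting identity noted in Section~\ref{limdp} this equals $\sum_{i,j\in\mathcal N}\overline r_{i,j}(s_i(T+\delta),a_i(T+\delta),s_j(T+\delta),a_j(T+\delta))$ since each pairwise term is $0$ unless $d(s_i(T+\delta),s_j(T+\delta))\le\mathcal R$, in which case $i,j$ lie in a common $D$-group regardless. Now I regroup this pairwise sum according to $Z(s(T))$: a nonzero term $\overline r_{i,j}$ at time $T+\delta$ forces $d(s_i(T+\delta),s_j(T+\delta))\le\mathcal R$, and by the distance argument above this forces $d(s_i(T),s_j(T))\le\mathcal V$, so $i$ and $j$ lie in the same block $z\in Z(s(T))$. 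Therefore every nonzero pairwise term is accounted for inside exactly one $r_z(s_z(T+\delta),a_z(T+\delta)) = \sum_{g\in D(s_z(T+\delta))} r_g(s_g(T+\delta),a_g(T+\delta))$, and conversely each $r_z$ only collects such pairwise terms; summing over $z\in Z(s(T))$ recovers the full reward. This gives the displayed identity.

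I would be careful about one subtlety: the reward grouping $r_z(s_z(T+\delta),\cdot)$ is defined via $D(s_z(T+\delta))$, the dependence partition computed within the sub-configuration restricted to agents of $z$, not $D(s(T+\delta))$ globally. I should check these agree on pairs that matter, i.e. that for $i,j\in z$ with $d(s_i(T+\delta),s_j(T+\delta))\le\mathcal R$ the pairwise term is collected exactly once in $r_z$; this is immediate because such $i,j$ are in a common block of $D(s_z(T+\delta))$, and the cross-block pairwise terms within $z$ vanish by the $\mathcal R$-locality assumption, so the grouping is consistent. I expect the main (though still modest) obstacle to be bookkeeping the equivalence between summing over $D$-groups and summing over ordered pairs $\overline r_{i,j}$, and making sure no pairwise term is double-counted or dropped when I re-index by $Z(s(T))$ blocks; the distance estimates themselves are routine triangle-inequality applications once the chain structure of the equivalence relation is written out.
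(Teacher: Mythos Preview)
Your proposal is correct and follows essentially the same approach as the paper: both arguments use the speed limit $d(s_i(t+1),s_i(t))\le 1$ together with the triangle inequality to show that pairwise distances can change by at most $2\delta\le \mathcal V-\mathcal R$ over $\delta$ steps, which forces $D(s(T+\delta))$ to refine $Z(s(T))$, after which the reward identity is immediate. The only cosmetic difference is direction---the paper argues the contrapositive (different $Z$-blocks at time $T$ implies direct distance $>\mathcal R$ at time $T+\delta$), whereas you argue the forward implication and make the chain structure and the pairwise-reward bookkeeping more explicit.
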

\begin{proof}
Suppose we have two agents $j,k \in \mathcal N$ in different partitions of $Z(s(T))$. This implies $d(s_j(T), s_k(T)) > \mathcal V$. Then in $\delta$ timesteps, $d(s_j(T + \delta), s_j(T + \delta)) \leq \sum_{t = 0}^{\delta - 1}d(s_j(T + t), s_j(T + t + 1)) \leq \delta$ and similarly for agent $k$. Therefore,
\begin{align*}
    d(s_j(T + \delta), s_k(T + \delta)) &\geq d(s_j(T), s_k(T)) - d(s_j(T + \delta), s_j(T)) - d(s_k(T + \delta), s_k(T))\\
    &> \mathcal V - 2 \delta
    \geq \mathcal V - 2c 
    \geq \mathcal V - 2\bigg(\frac{\mathcal V - \mathcal R}{2}\bigg)
    = \mathcal R
\end{align*}
Since $d(s_j(T), s_k(T)) > \mathcal V$ implies $d(s_j(T + \delta), s_k(T + \delta)) > \mathcal R$, we can say $D(s(T + \delta))$ is a finer partition than $Z(s(T))$. And so $r(s(T + \delta),a(T + \delta)) = \sum_{g \in D(s(T+ \delta))} r_g(s_g(T + \delta), a_g(T + \delta)) = \sum_{z \in Z(s(T))} r_z(s_z(T + \delta), a_z(T + \delta))$.
    
\end{proof}
There is also a Dependence Time Lemma for Transitions in \cref{dtl_transition} that is used for General Locally Interdependent Multi-Agent MDPs. 
The Dependence Time Lemma is crucial in proving various lemmas which will be used in the proof below and be deferred to \cref{indep_lemmas}.

The following is the proof for \cref{extract_opt_bound} and \cref{extract_bound} for Standard Locally Interdependent Multi-Agent MDPs. The proof for General Locally Interdependent Multi-Agent MDPs is provided in \cref{dependent_proofs}. Here, $c = \lfloor \frac{\mathcal V_{exec} - \mathcal R}{2}\rfloor$ and $c' = \lfloor \frac{\mathcal V_{comp} - \mathcal R}{2}\rfloor$ and recall $\mathcal V_{comp} \geq \mathcal V_{exec}$ which implies $c' \geq c$.
\\
\begin{proof}\textbf{For \cref{extract_opt_bound} (Standard)}
\label{extract_opt_bound_proof}
\begin{align}
V^*(s) -& V^{\pi^{\xi, \eta}_{exec}}(s)\leq V^*(s) - V^{\pi^{\xi, \eta}}_0(s,Z_{comp}(s)) \label{using_extract}\\
&\hspace{10ex}+ \bigg(\gamma^2 + \gamma^{\eta + 1} + \frac{4 + \gamma + 5\gamma^2}{1 - \gamma}\bigg) \frac{\gamma^{c - 1}}{1 - \gamma}\nonumber\\
&\leq V^*(s) - V^{\pi^*}_0(s,Z_{comp}(s))\label{using_opt_sub}\\
&\hspace{10ex}+ \bigg(\gamma^2 + \gamma^{\eta + 1} + \frac{4 + \gamma + 5\gamma^2}{1 - \gamma}\bigg) \frac{\gamma^{c - 1}}{1 - \gamma}\nonumber\\
&\leq  \bigg( \gamma^{c' - c + 1} + \gamma^2 + \gamma^{\eta + 1} + \frac{4 + \gamma + 5\gamma^2}{1 - \gamma}\bigg) \frac{\gamma^{c - 1}}{1 - \gamma}\label{using_same}
\end{align}
In \cref{using_extract}, \cref{extract_bound} is used. In \cref{using_opt_sub}, since $\pi^{\xi, \eta}$ is the optimal extended Cutoff Multi-Agent MDP solution with horizon $c + \eta$ and visibility $\mathcal V_{comp} = \mathcal V_{exec} + \xi$ %
, any policy will have a lower value. Here we substitute the potentially non-proper policy $\pi^*(\cdot \lvert s, \mathcal P) = \pi^*(\cdot \lvert s)$ and evoke \cref{same_policy} in \cref{using_same}. 
    
\end{proof}
\begin{proof}\textbf{For \cref{extract_bound}  (Standard)}
\label{extract_bound_proof}
Intuitively, this proof revolves around expressing the expected rewards in the value function of $V^{\pi_{exec}}$ using the difference of other well understood value functions. Concretely, our proof uses the following simple fact that for any two sequences $(v_t)$ and $(v_t')$ with bounded magnitude and a sequence $\hat r_t = v_t - \gamma v_{t + 1}'$, we have 
\begin{align}
\sum_{t = 0}^{\infty} \gamma^t \hat r_t = v_0 - \sum_{t = 1}^\infty  \gamma^t(v_{t}' - v_t)\label{telescoping}
\end{align}
for any discount factor $\gamma \in [0,1)$. Then, to bound the difference $\lvert\sum_{t = 0}^{\infty} \gamma^t \hat r_t - v_0\rvert$ it would suffice to bound $\Delta = \lvert v'_{t} - v_t\rvert$. 

In light of the above simple fact, our proof will then consist of three steps. 1) rewrite $V^{\pi_{exec}}$ and apply the decomposition above 2) bound the $\Delta$ difference terms 3) perform additional manipulation to match the objective.

We use the notation $\tau \sim \pi_{exec}\lvert_s$ to denote a trajectory $\tau = (s(t), a(t))$ for $t \in \{0,1, \ldots\}$ that is sampled using policy $\pi_{exec}$ starting at state $s$.
\\\\
\noindent\textbf{Step 1:} We may rewrite the expression for $V^{\pi_{exec}}$ to obtain,
{\small
\begin{align}
    V^{\pi_{exec}}(s) =& \mathbb{E}_{\tau \sim \pi_{exec}\lvert_s}\bigg[\sum_{t = 0}^\infty \gamma^t r(s(t), a(t))\bigg]\nonumber = \sum_{t = 0}^\infty \gamma^t \mathbb{E}_{\tau \sim \pi_{exec}\lvert_s}\bigg[r(s(t), a(t))\bigg]\nonumber\\
     = &\sum_{t = 0}^\infty \gamma^t \bigg(\mathbb{E}_{\tau \sim \pi_{exec}\lvert_s}\bigg[V^{\pi_{comp}}_0(s(t), Z_{exec}(s(t))) \bigg] \nonumber\\
     &\hspace{3.5ex}- \gamma \mathbb{E}_{\tau \sim \pi_{exec}\lvert_s}\bigg[V^{\pi_{comp}}_0(s(t + 1), Z_{comp}(s(t + 1)\cap Z_{exec}(s(t)))  \bigg]\bigg)\label{using_condition}\\
    &\hspace{3.5ex}+ \zeta\bigg(4\frac{(1 + \gamma^2)}{(1 - \gamma)^2}\gamma^{c - 1} \tilde r\bigg) + \zeta\bigg(\frac{\gamma^{c + \eta}}{1 - \gamma} \tilde r\bigg)\nonumber
\end{align}
}
\\
where $\zeta(b)$ denotes some value in $[-b, b]$ for $b > 0$. In \cref{using_condition} we apply \cref{tele_condition} to express the reward in terms of the value functions for the Cutoff Multi-Agent MDP setting with visibility $\mathcal V_{comp} = \mathcal V_{exec} + \xi$. Applying the decomposition from \cref{telescoping} %
, we obtain, %
$$V^{\pi_{exec}}(s) = V_0^{\pi_{comp}}(s, Z_{exec}(s)) - \mathbb{E}_{\tau \sim \pi_{exec}\lvert_s}\bigg[\sum_{t = 1}^\infty \gamma^t \Delta_t^\tau\bigg] + \zeta(4\frac{(1 + \gamma^2)}{(1 - \gamma)^2}\gamma^{c - 1} \tilde r) + \zeta(\frac{\gamma^{c + \eta}}{1 - \gamma} \tilde r)$$
where $
\Delta_t^\tau = V_0^{\pi_{comp}}(s(t), Z_{comp}(s(t))\cap Z_{exec}(s(t - 1))) - V_0^{\pi_{comp}}(s(t), Z_{exec}(s(t)))$. %

\vspace{2ex}
\noindent\textbf{Step 2:} To proceed with the motivation described in our initial discussion, we are interested in bounding $\Delta_t^\tau$. We use \cref{z_intersect} which bounds the difference of values for Consistent Performance Policies when the partition is intersected by the visibility partition. This can be used bound $\Delta_t^\tau$ as follows
\begin{align}
\lvert\Delta_t^\tau \rvert&= \lvert V_0^{\pi_{comp}}(s(t), Z_{comp}(s(t))\cap Z_{exec}(s(t - 1))) - V_0^{\pi_{comp}}(s(t), Z_{exec}(s(t)))\rvert\nonumber\\
&\leq  \lvert V_0^{\pi_{comp}}(s(t), Z_{comp}(s(t))\cap Z_{exec}(s(t - 1))) - V_0^{\pi_{comp}}(s(t), Z_{exec}(s(t))\cap Z_{exec}(s(t - 1))) \rvert \nonumber\\
&\hspace{15ex}+ \lvert V_0^{\pi_{comp}}(s(t), Z_{exec}(s(t))\cap Z_{exec}(s(t - 1)))- V_0^{\pi_{comp}}(s(t), Z_{exec}(s(t)))\rvert\nonumber\\
&\leq  \frac{\gamma^{c + 1}}{1 - \gamma}\tilde r + \frac{\gamma^c}{1 - \gamma}\tilde r = \frac{(1 + \gamma)\gamma^c}{1 - \gamma}\tilde r\label{using_z_intersect}
\end{align}\\
where in \cref{using_z_intersect} we apply \cref{z_intersect} twice. In the application to the first term we can use
$Z_{exec}(s(t)) \cap Z_{exec}(s(t - 1)) = (Z_{exec}(s(t)))\cap (Z_{comp}(s(t)) \cap Z_{exec}(s(t - 1)))$ since $Z_{exec}$ is finer than $Z_{comp}$, and then apply \cref{z_intersect}.
 
We now obtain a bound on the difference of the value functions\\
$$\lvert V^{\pi_{exec}}(s) - V_0^{\pi_{comp}}(s, Z_{exec}(s)) \rvert \leq \frac{(1 + \gamma)\gamma^c}{(1 - \gamma)^2}\tilde r +4\frac{(1 + \gamma^2)}{(1 - \gamma)^2}\gamma^{c - 1} \tilde r + \frac{\gamma^{c + \eta}}{1 - \gamma} \tilde r.$$
\\
\textbf{Step 3:} Finally, we will replace $Z_{exec}(s)$ with $Z_{comp}(s)$ and adjust the bound as follows
\begin{align*}
\lvert V^{\pi_{exec}}(s) - V_0^{\pi_{comp}}(s, Z_{comp}(s)) \rvert &\leq \lvert V^{\pi_{exec}}(s) - V_0^{\pi_{comp}}(s, Z_{exec}(s)) \rvert \\
&\hspace{10ex}+ \lvert V_0^{\pi_{comp}}(s, Z_{exec}(s)) - V_0^{\pi_{comp}}(s, Z_{comp}(s)) \rvert \\
&  \leq \frac{\gamma^{c + 1}}{1 - \gamma}\tilde r + \frac{(1 + \gamma)\gamma^c}{(1 - \gamma)^2}\tilde r +4\frac{(1 + \gamma^2)}{(1 - \gamma)^2}\gamma^{c - 1} \tilde r + \frac{\gamma^{c + \eta}}{1 - \gamma} \tilde r\\
&= \bigg(\gamma^2 + \gamma^{\eta + 1} + \frac{4 + \gamma + 5\gamma^2}{1 - \gamma}\bigg) \frac{\gamma^{c - 1}}{1 - \gamma}
\end{align*}
once again by applying \cref{z_intersect} and using $Z_{exec}(s) = Z_{comp}(s) \cap Z_{exec}(s)$.
\end{proof}

\subsubsection{Lemmas}
\label{indep_lemmas}
In this section, we will prove the lemmas used in the proof of \cref{extract_bound} assuming transition independence shown in \cref{extract_bound_proof}.
 
For the remainder of the proofs in this section, assume a Locally Interdependent Multi-Agent MDP $\mathcal{M} = (\mathcal{S}, \mathcal A, P, r, \gamma, \mathcal R)$, with some Consistent Performance Policy $\pi_{comp} = (\pi_0, \pi_1 , ..., \pi_{c + \eta})$ in its corresponding extended Cutoff Multi-Agent MDP $\mathcal{C} = (\mathcal{S}^\mathcal C, \mathcal A^\mathcal C, P^\mathcal C, r^\mathcal C, \gamma, \mathcal R, \mathcal V + \xi)$. We will also assume a corresponding extracted policy $\pi_{exec}(a \lvert s(t)) = \prod_{z\in Z_{exec}(s(t))}[\pi_{0}]_z(a_z \lvert \rho(I(\tau^t,z))$ according to a valid extraction strategy $\rho(I(\tau^t,z))$ for $z \in Z_{exec}(s)$, $s \in \mathcal S$, and $t \in \{0, 1, \dots\}$.
 
Notationally, for any partition $\mathcal P$ over $\mathcal N$, we will indicate $E(\mathcal P) = \{(j,k): j,k \in p \text{ for some p }\in \mathcal P\}$ and $E^c(\mathcal P) = (\mathcal N \times \mathcal N )\setminus E(\mathcal P)$.
 
Recalling our initial definition $\tilde r = \sum_{i,j\in \mathcal N} \max_{s_i, a_i, s_j, a_j}\lvert \overline r_{i,j}(s_i, a_i, s_j, a_j)\rvert $, we will also define, for any $g \subset \mathcal N$, the following quantity $\tilde r_g = \sum_{i,j\in g} \max_{s_i, a_i, s_j, a_j}\lvert \overline r_{i,j}(s_i, a_i, s_j, a_j)\rvert $ and for any partition $\mathcal P$, $\tilde r_{\mathcal P} = \sum_{i,j\in E(\mathcal P)} \max_{s_i, a_i, s_j, a_j}\overline \lvert r_{i,j}(s_i, a_i, s_j, a_j)\rvert $ and $\tilde r_{\mathcal P}^c = \sum_{i,j\in E^c(\mathcal P)} \max_{s_i, a_i, s_j, a_j}\overline \lvert r_{i,j}(s_i, a_i, s_j, a_j)\rvert$.

We also note the following properties of the value function in the Cutoff Multi-Agent MDP.

\begin{proposition}
For any partition $\mathcal P$ finer than $Z_{comp}(s)$, $p = \bigcup_{i = 0}^{\ell} p_i$ where $p_i \in \mathcal P$, and all $h \in \{0, \ldots, c + \eta\}$,\label{extra_properties}
\begin{itemize}
    \item $[V^{\pi}_h]_{p}(s, \mathcal P) = [V^{\pi}_h]_{p}(s', \mathcal P)$ where $s' = (s_p, s'_{\mathcal N \setminus p})$ and $s'_{\mathcal N \setminus p}$ is some state over the agents in $\mathcal N \setminus p$
    \item $[V^{\pi}_h]_p(s, \mathcal P) = [V^{\pi}_h]_p(s, \mathcal P')$ for any partition $\mathcal P'$ finer than $Z_{comp}(s)$ with $\mathcal P' \cap \{p\} = \mathcal P \cap \{p\}$
    \item $[V^{\pi}_h]_p(s, \mathcal P) = [V^{\pi}_h]_p(s_{\mathcal N_{small}}, \mathcal P\cap \{\mathcal N_{small}\})$ where $\mathcal N_{small} \subset \mathcal N$ and $p \subset \mathcal N_{small}$ 
\end{itemize}
\end{proposition}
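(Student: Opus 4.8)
The plan is to establish all three identities by the same device: show that the law of the ``$p$-restricted'' portion of the Cutoff trajectory --- the sequence $\big(s_p(t),\, \mathcal P(t)\cap\{p\},\, a_p(t)\big)_{t\ge h}$ together with the per-step quantities $r^{\mathcal C}_p((s(t),\mathcal P(t)),a(t))$ --- depends on the starting data $(s,\mathcal P)$ only through $(s_p,\mathcal P\cap\{p\})$ and is otherwise insensitive to the remaining agents. Each bullet is then just the statement that the two value functions being compared are expectations of one and the same random variable. First I would reduce to the case that $p$ is a single block of $\mathcal P$: since $p=\bigcup_i p_i$ with $p_i\in\mathcal P$ and $\mathcal P(t)$ only ever refines $\mathcal P$, we have $\mathcal P(t)\cap\{p\}=\bigcup_i \big(\mathcal P(t)\cap\{p_i\}\big)$ (a disjoint union), hence $r^{\mathcal C}_p=\sum_i r^{\mathcal C}_{p_i}$ along any trajectory and $[V^\pi_h]_p(s,\mathcal P)=\sum_i[V^\pi_h]_{p_i}(s,\mathcal P)$, so it suffices to prove each property with $p$ a single block.

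For a single block $p$, I would argue by induction on $t$ that the conditional law of $\big(s_p(t'),\mathcal P(t')\cap\{p\},a_p(t')\big)_{t'=h}^{t}$ under $\pi\lvert_{s,\mathcal P}$ depends on $(s,\mathcal P)$ only through $(s_p,\mathcal P\cap\{p\})$. Three structural facts feed the inductive step. (i) The Cutoff transition moves agents through $\prod_i P_i(s'_i\mid s_i,a_i)$, so the law of $s_p(t+1)$ given $s_p(t),a_p(t)$ is independent of the other agents and of the partition. (ii) A proper cutoff policy satisfies $\pi_t(s,\mathcal P)=(\pi_{p'}(s_{p'}):p'\in\mathcal P)$, so $a_p(t)$ is sampled using only $s_p(t)$ and the sub-partition $\mathcal P(t)\cap\{p\}$. (iii) Because $p$ is a union of $\mathcal P(t)$-blocks, $\mathcal P(t+1)\cap\{p\}=\big(\mathcal P(t)\cap\{p\}\big)\cap Z(s(t+1))$, and for agents $i,j$ lying in a common block of $\mathcal P(t)\cap\{p\}$ the question of whether $i$ and $j$ fall in the same block of $Z(s(t+1))$ can be answered from $s_p(t+1)$ alone. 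Granting (iii), the per-step reward $r^{\mathcal C}_p((s(t),\mathcal P(t)),a(t))=\sum_{p'\in\mathcal P(t)\cap\{p\}}\sum_{i,j\in p'}\overline r_{i,j}(s_i(t),a_i(t),s_j(t),a_j(t))$ is likewise a function of $s_p(t),a_p(t)$ and $\mathcal P(t)\cap\{p\}$ only, and the induction closes.

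The three bullets then follow. Replacing $s_{\mathcal N\setminus p}$ by any $s'_{\mathcal N\setminus p}$ leaves $(s_p,\mathcal P\cap\{p\})$ fixed, giving the first identity; taking $\mathcal P'$ with $\mathcal P'\cap\{p\}=\mathcal P\cap\{p\}$ gives the second; and running the Cutoff Multi-Agent MDP on the smaller agent set $\mathcal N_{small}\supseteq p$ produces the very same law for the $p$-restricted trajectory --- after the reduction to a single block the transition, policy, and reward never reference agents outside $p$, and by (iii) the visibility partition restricted to $p$ is intrinsic to $p$ --- which gives the third.

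The crux is (iii): I expect the main obstacle to be showing that the restriction of the visibility partition $Z$ to a block of the current Cutoff partition is an intrinsic function of that block's positions, i.e., that a ``bridging'' agent outside $p$ can never change which reward-relevant pairs inside $p$ share a block. I would obtain this from the same distance-buffer bookkeeping as the Dependence Time Lemma (\cref{dtl}) together with the unit speed limit: two agents of $p$ that have been connected inside their Cutoff block at every step up to $t$ and are now within the dependence radius $\mathcal R$ must have stayed within $\mathcal V$ of each other recently enough that no external bridge was ever needed for them, while any pair currently farther than $\mathcal R$ apart contributes $0$ to $r^{\mathcal C}_p$ regardless of how $\mathcal P(t)$ splits them.
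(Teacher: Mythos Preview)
The paper states this proposition without proof; it is meant to be read as an immediate consequence of the ``once disconnected, always disconnected'' design of the Cutoff Multi-Agent MDP. The operational content is already in \cref{cutoff}: the decomposition $V^\pi_h(s,\mathcal P)=\sum_{p\in\mathcal P}V^{\pi_p}_h(s_p,\{p\})$ and the Bellman recursion there are written block-by-block, with the next partition inside a block $p$ taken to be $Z(s'_p)$ (visibility computed over the agents of $p$ only). Under that reading, agents outside $p$ never enter the transition, the policy, or the reward of the $p$-restricted process, so your facts (i)--(iii) hold by construction and the three identities drop out exactly as you outline.

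Your reduction to a single block and the inductive scheme are the right skeleton, but your treatment of (iii) is where the argument goes off. You correctly flag external ``bridging'' as the potential obstruction, yet the Dependence-Time-Lemma-style fix you sketch does not go through if $Z$ is computed globally: two agents $i,j\in p$ can drift beyond $\mathcal V$ of one another, be held in the same $Z$-block solely by an agent $k\notin p$ sitting between them, and later return to distance $\le\mathcal R$. In that scenario the presence or absence of $k$ changes whether $i,j$ remain in the same block of $\mathcal P(t)$ at the reunion time, hence changes $r^{\mathcal C}_p$ there; the unit speed limit does not prevent this excursion. So the claim ``no external bridge was ever needed'' is false in general, and with a global $Z$ the first bullet would actually fail. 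The correct resolution is not a distance argument but the definitional one: in the Cutoff MDP the refinement of each current block uses only that block's positions, as the Bellman equation in \cref{cutoff} makes explicit. Once you adopt that convention, (iii) is tautological and your proof is complete.
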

For the following, recall our definition of a local movement trajectory defined previously. A local movement trajectory $(s(t), a(t))$ for $t \in \{0, \ldots, \mathcal T\}$ is any sequence of states and actions such that $d(s_i(t + 1), s_i(t)) \leq 1$ for $t \in \{0,\ldots, \mathcal T - 1\}$ and for all $i\in \mathcal N$. In the case of a trajectory in the Cutoff Multi-Agent MDP $((s(t), \mathcal P(t)), a(t))$ for $t \in \{0, \ldots, \mathcal T\}$, we also say it is a local movement trajectory if $d(s_i(t + 1), s_i(t)) \leq 1$ for $t \in \{0,\ldots, \mathcal T - 1\}$.
\begin{lemma}
\label{z_intersect}
Assume a local movement trajectory $(s(t), a(t))$ for $t \in \{0, \ldots, \mathcal T\}$ and $\delta \in \{0,\ldots, c\}$, and  some $T$ such that $T, T- \delta \in \{0, \ldots, \mathcal T\}$. For any $\mathcal P$ finer than $Z_{comp}(s(T))$ and a partition $\mathcal P'$ coarser than $Z_{exec}(s(T - \delta))$, for $h \in \{0,1\}$ we have the following:
\begin{align*}
&\bigg\lvert V^{\pi_{comp}}_{h}(s(T), \mathcal P \cap \mathcal P') -V^{\pi_{comp}}_{h}(s(T),\mathcal P) \bigg\rvert \leq \frac{\gamma^{c - \delta -h + 1}}{1 - \gamma} \tilde r_{\mathcal P'}^c\\
&\bigg\lvert V^{\pi_{comp}\lvert^{\mathcal P}}_{h}(s(T),\mathcal P \cap\mathcal P') -V^{\pi_{comp}}_{h}(s(T),\mathcal P) \bigg\rvert \leq \frac{\gamma^{c - \delta -h + 1}}{1 - \gamma} \tilde r_{\mathcal P'}^c\\
&\bigg\lvert V^{\pi_{comp}}_{h}(s(T), \mathcal P \cap\mathcal P') -V^{\pi_{comp}\lvert^{\mathcal P\cap\mathcal P'}}_{h}(s(T),\mathcal P) \bigg\rvert \leq \frac{\gamma^{c - \delta -h + 1}}{1 - \gamma} \tilde r_{\mathcal P'}^c\\
\end{align*}
\end{lemma}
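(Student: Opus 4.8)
The plan is to prove the three estimates together. The second and third are purely ``reward-accounting'' bounds that rest only on the Dependence Time Lemma (\cref{dtl}) and on the partition-independence of the Standard transition (the state part of $P^{\mathcal C}$ is $\prod_i P_i(s'_i\mid s_i,a_i)$, which does not see the partition); the first then follows by combining them with the Constructive Improvement and Deconstructive Improvement properties of the Consistent Performance Policy $\pi_{comp}$. For the key mechanism behind the second and third bounds, I would fix a ``perceived'' partition $\mathcal Q\in\{\mathcal P,\ \mathcal P\cap\mathcal P'\}$ and couple the rollouts $\pi_{comp}\lvert^{\mathcal Q}_{s(T),\mathcal P}$ and $\pi_{comp}\lvert^{\mathcal Q}_{s(T),\mathcal P\cap\mathcal P'}$ through shared action and transition randomness: since the policy consults only $(s'(t),\mathcal Q\cap\bigcap_{t'}Z_{comp}(s'(t')))$ and the transition's state part ignores the partition, the two rollouts follow the \emph{same} state-action path under the coupling, and differ only in which pairwise rewards are accrued --- over $E(\mathcal P\cap\bigcap_{t'}Z_{comp}(s'(t')))$ for the $\mathcal P$-rollout versus $E((\mathcal P\cap\mathcal P')\cap\bigcap_{t'}Z_{comp}(s'(t')))$ for the other. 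Using $E(\mathcal P\cap\mathcal P')=E(\mathcal P)\cap E(\mathcal P')$ and $V^{\pi_{comp}}_h(s,\mathcal Q)=V^{\pi_{comp}\lvert^{\mathcal Q}}_h(s,\mathcal Q)$, this makes the discrepancy in the third bound (take $\mathcal Q=\mathcal P\cap\mathcal P'$) and in the second bound (take $\mathcal Q=\mathcal P$) equal, in each case, to a discounted expectation along the common path of $\sum\overline r_{i,j}$ restricted to pairs $(i,j)\in E(\mathcal P)\cap E^c(\mathcal P')\subseteq E^c(\mathcal P')$.

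To kill those pairs I would invoke \cref{dtl}. For $(i,j)\in E^c(\mathcal P')$, coarseness of $\mathcal P'$ over $Z_{exec}(s(T-\delta))$ gives $d(s_i(T-\delta),s_j(T-\delta))>\mathcal V_{exec}$; concatenating the prescribed local-movement segment $s(T-\delta),\dots,s(T)$ with the speed-limited rollout out of $s(T)$ produces a single local movement trajectory, and \cref{dtl} (applied with visibility $\mathcal V_{exec}$ and offset $\delta+t'$, which is legitimate for $\delta+t'\le c$) forces $d>\mathcal R$, hence $\overline r_{i,j}=0$, at every rollout step $t'\in\{0,\dots,c-\delta\}$. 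So a pair in $E^c(\mathcal P')$ can first contribute only at step $c-\delta+1$, carrying discount $\gamma^{c-\delta+1}$; summing the geometric tail over all such pairs bounds each discrepancy by $\frac{\gamma^{c-\delta+1}}{1-\gamma}\tilde r_{\mathcal P'}^c\le\frac{\gamma^{c-\delta-h+1}}{1-\gamma}\tilde r_{\mathcal P'}^c$ for $h\in\{0,1\}$, which is the stated bound for the second and third inequalities.

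For the first inequality, summing Constructive Improvement over $p\in\mathcal P$ (with superscript $\mathcal P\cap\mathcal P'$ finer than the environment $\mathcal P$) gives $V^{\pi_{comp}\lvert^{\mathcal P\cap\mathcal P'}}_h(s(T),\mathcal P)\le V^{\pi_{comp}}_h(s(T),\mathcal P)$; chaining with the third bound yields $V^{\pi_{comp}}_h(s(T),\mathcal P\cap\mathcal P')\le V^{\pi_{comp}}_h(s(T),\mathcal P)+\frac{\gamma^{c-\delta+1}}{1-\gamma}\tilde r_{\mathcal P'}^c$. Symmetrically, summing Deconstructive Improvement over the blocks of $\mathcal P\cap\mathcal P'$ gives $V^{\pi_{comp}\lvert^{\mathcal P}}_h(s(T),\mathcal P\cap\mathcal P')\le V^{\pi_{comp}}_h(s(T),\mathcal P\cap\mathcal P')$, which with the second bound gives the reverse inequality; together these yield the absolute-value estimate of the first inequality. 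The block decomposition of these value functions into their per-block contributions, needed in order to sum the per-block Improvement inequalities, is immediate from the additive-over-blocks form of $r^{\mathcal C}$ together with the fact that the running partition stays finer than $\mathcal P$.

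I expect the main obstacle to be the bookkeeping in the first step: carrying the running intersections with $Z_{comp}(s(t))$ along the coupled rollout and checking that the symmetric difference of the two reward index sets really stays inside $E^c(\mathcal P')$ at every step, together with the precise stitching of the prescribed trajectory and the rollout so that \cref{dtl} fires with the correct offset (and the boundary cases $\delta=c$, the short horizon $c+\eta-h$, and $h=1$). Everything else --- the appeals to Constructive/Deconstructive Improvement and the block decomposition --- is routine; and note that the argument actually delivers $\gamma^{c-\delta+1}$, so the $-h$ in the exponent is only harmless slack for $h\in\{0,1\}$.
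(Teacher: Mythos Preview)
Your proposal is correct and rests on the same ingredients as the paper's proof: the Dependence Time Lemma applied to the concatenated trajectory, the fact that in the Standard setting the state component of $P^{\mathcal C}$ is partition-blind so rollouts with the same perceived partition can be coupled pathwise, and the Constructive/Deconstructive Improvement properties of the Consistent Performance Policy. The organization differs slightly: the paper derives one-sided inequalities $V^{\pi_{comp}}_h(s(T),\mathcal P)\le V^{\pi_{comp}\lvert^{\mathcal P}}_h(s(T),\mathcal P\cap\mathcal P')+\epsilon$ and $V^{\pi_{comp}}_h(s(T),\mathcal P\cap\mathcal P')\le V^{\pi_{comp}\lvert^{\mathcal P\cap\mathcal P'}}_h(s(T),\mathcal P)+\epsilon$ via the same reward-moving/DTL argument, then interleaves the two Improvement properties to close a four-term chain from which all three bounds are read off simultaneously; you instead use the coupling to get the second and third inequalities two-sidedly first (without CPP), and only then invoke CPP for the first. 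Your framing makes it explicit that the second and third bounds are pure reward-accounting facts, and your observation that the argument actually yields $\gamma^{c-\delta+1}$ (so the $-h$ is slack) is correct --- the paper simply splits at $t'=c-\delta$ rather than the sharper $t'=c-\delta+h$.
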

\begin{proof}
Notice that for a trajectory $((s'(t'), \mathcal P'(t')), a'(t'))$ where $\tau'_h\sim \pi_{comp}\lvert_{s(T), \mathcal P}$ in the Cutoff Multi-agent MDP, if we create a new trajectory $(s(T - \delta), a(T - \delta)), \ldots, (s(T-1), a(T-1)), (s'(h), a'(h)) \ldots, (s'(c + \eta), a'(c + \eta))$ this trajectory will also be a local movement trajectory by definition and we may evoke the Dependence Time Lemma from \cref{dtl}. Therefore, 
 
\begin{align} 
 &V^{\pi_{comp}}_h (s(T), \mathcal P) \nonumber\\
 &= \mathbb{E}_{\tau'_h \sim \pi_{comp}\lvert_{s(T), \mathcal P}}\bigg[\sum_{t' = h}^{c + \eta} \gamma^{t' - h} r^{\mathcal C}((s'(t'), \mathcal P'(t')), a'(t'))\bigg]\nonumber\\
 &= \mathbb{E}_{\tau'_h \sim \pi_{comp}\lvert_{s(T), \mathcal P}}\bigg[\sum_{t' = h}^{c - \delta} \gamma^{t' - h} \sum_{p \in \mathcal P \cap \mathcal P'} r_{p}(s_{p}'(t'), a_{p}'(t'))\bigg] \label{using_dtl_val} \\
 & \hspace{15ex}+ \mathbb{E}_{\tau' \sim \pi_{comp}\lvert_{s(T), \mathcal P}}\bigg[\sum_{t' = c - \delta + 1}^{c + \eta} \gamma^{t' - h}\sum_{p \in \mathcal P} r_{p}(s_{p}'(t'), a_{p}'(t'))\bigg]\nonumber\\
 &= \mathbb{E}_{\tau'' \sim \pi_{comp}\lvert^{\mathcal P}_{s(T), \mathcal P\cap \mathcal P'}}\bigg[\sum_{t' = h}^{c + \eta} \gamma^{t' - h} \sum_{p \in \mathcal P \cap \mathcal P'} r_{p}(s_{p}'(t'), a_{p}'(t'))\bigg] \label{using_move_over_first}\\
 & \hspace{15ex}+ \mathbb{E}_{\tau' \sim \pi_{comp}\lvert_{s(T), \mathcal P}}\bigg[\sum_{t' = c - \delta + 1}^{c + \eta} \gamma^{t' - h}\sum_{(j,k) \in E(\mathcal P) \cap E^c(\mathcal P \cap \mathcal P')} \overline r_{j,k}(s_j'(t'), a_j'(t'), s_k'(t'), a_k'(t'))\bigg]\nonumber\\
 &\leq V_{h}^{\pi_{comp}\lvert^{\mathcal P}}(s(T), \mathcal P \cap \mathcal P') +\frac{\gamma^{c - \delta -h + 1}}{1 - \gamma} \tilde r_{\mathcal P'}^c\label{using_val_def}\\
 &\leq V_{h}^{\pi_{comp}}(s(T), \mathcal P \cap \mathcal P')  +\frac{\gamma^{c - \delta -h + 1}}{1 - \gamma} \tilde r_{\mathcal P'}^c,\label{using_monotone_val}
\end{align}
where in \cref{using_dtl_val} we apply the Dependence Time Lemma from \cref{dtl}. In \cref{using_move_over_first} we move over reward terms from the second term into the first and adjust the trajectory notation to account for the policy taking actions according to the starting partition $\mathcal P$. In \cref{using_val_def}, we use our definitions noticing that all edges $(j,k)\in E(\mathcal P) \cap E^c(\mathcal P\cap \mathcal P')$ are in $E^c(\mathcal P')$ as well.  Lastly, for \cref{using_monotone_val}, we use the definition of the Consistent Performance Policy.
 
We can use similar steps to derive the other direction as follows:
\begin{align*} 
 &V^{\pi_{comp}}_h (s(T), \mathcal P\cap \mathcal P') \\
 &= \mathbb{E}_{\tau' \sim \pi_{comp}\lvert_{s(T), \mathcal P\cap \mathcal P'}}\bigg[\sum_{t' = h}^{c + \eta} \gamma^{t' - h} r((s'(t'), \mathcal P(t')), a'(t'))\bigg]\\
 &= \mathbb{E}_{\tau' \sim \pi_{comp}\lvert_{s(T), \mathcal P\cap \mathcal P'}}\bigg[\sum_{t' = h}^{c - \delta} \gamma^{t' - h} \sum_{p \in \mathcal P} r_p(s_p'(t'), a_p'(t'))\bigg] \\
 & \hspace{15ex}+ \mathbb{E}_{\tau' \sim \pi_{comp}\lvert_{s(T), \mathcal P\cap \mathcal P'}}\bigg[\sum_{t' = c - \delta + 1}^{c + \eta} \gamma^{t' - h}\sum_{p \in P\mathcal \cap \mathcal P'} r_p(s_p'(t'), a_p'(t'))\bigg]\\
 &= \mathbb{E}_{\tau'' \sim \pi_{comp}\lvert^{\mathcal P\cap \mathcal P'}_{s(T), \mathcal P}}\bigg[\sum_{t' = 0}^{c + \eta} \gamma^{t' - h} \sum_{p \in \mathcal P } r_p(s_p'(t'), a_p'(t'))\bigg] \\
 & \hspace{15ex}- \mathbb{E}_{\tau' \sim \pi_{comp}\lvert_{s(T), \mathcal P\cap \mathcal P'}}\bigg[\sum_{t' = c - \delta + 1}^{c + \eta} \gamma^{t' - h}\sum_{(j,k) \in E(\mathcal P)\cap E^c(\mathcal P \cap \mathcal P')} r_{j,k}(s_j'(t'), a_j'(t'), s_k'(t'), a_k'(t'))\bigg]\\
 &\leq V_{h}^{\pi_{comp}\lvert^{\mathcal P\cap \mathcal P'}}(s(T), \mathcal P) + \frac{\gamma^{c - \delta - h + 1}}{1 - \gamma} \tilde r^c_{\mathcal P'}\\
 &\leq V_{h}^{\pi_{comp}}(s(T), \mathcal P)  + \frac{\gamma^{c - \delta - h + 1}}{1 - \gamma} \tilde r^c_{\mathcal P'}.
\end{align*}
To summarize our findings above, we have established
\begin{align*}
V_h^{\pi_{comp}}(s(T), \mathcal P) &\leq V_h^{\pi_{comp}\lvert^\mathcal P}(s(T), \mathcal P\cap \mathcal P') + \frac{\gamma^{c - \delta -h + 1}}{1 - \gamma} \tilde r_{\mathcal P'}^c \\
&\leq V_h^{\pi_{comp}}(s(T), \mathcal P\cap \mathcal P') + \frac{\gamma^{c - \delta -h + 1}}{1 - \gamma} \tilde r_{\mathcal P'}^c\\
&\leq V_h^{\pi_{comp}\lvert^{\mathcal P\cap \mathcal P'}}(s(T), \mathcal P) + \frac{2\gamma^{c - \delta -h + 1}}{1 - \gamma} \tilde r_{\mathcal P'}^c\\
&\leq V_h^{\pi_{comp}}(s(T), \mathcal P) + \frac{2\gamma^{c - \delta -h + 1}}{1 - \gamma} \tilde r_{\mathcal P'}^c.
\end{align*}
Which gives us our desired results
\begin{align*}
&\bigg\lvert V^{\pi_{comp}}_{h}(s(T), \mathcal P \cap \mathcal P') -V^{\pi_{comp}}_{h}(s(T),\mathcal P) \bigg\rvert \leq \frac{\gamma^{c - \delta -h + 1}}{1 - \gamma} \tilde r_{\mathcal P'}^c\\
&\bigg\lvert V^{\pi_{comp}\lvert^{\mathcal P}}_{h}(s(T),\mathcal P \cap\mathcal P') -V^{\pi_{comp}}_{h}(s(T),\mathcal P) \bigg\rvert \leq \frac{\gamma^{c - \delta -h + 1}}{1 - \gamma} \tilde r_{\mathcal P'}^c\\
&\bigg\lvert V^{\pi_{comp}}_{h}(s(T), \mathcal P \cap\mathcal P') -V^{\pi_{comp}\lvert^{\mathcal P\cap\mathcal P'}}_{h}(s(T),\mathcal P) \bigg\rvert \leq \frac{\gamma^{c - \delta -h + 1}}{1 - \gamma} \tilde r_{\mathcal P'}^c\\
\end{align*}
\end{proof}

\begin{corollary}
\label{make_coarser}
Consider a local movement trajectory $(s(t), a(t))$ for $t \in \{0, \ldots, \mathcal T\}$ with some $T \in \{0, \ldots, \mathcal T\}$ and $\delta \in \{0,\ldots, c\}$, such that $T- \delta \in \{0, \ldots, \mathcal T\}$. Let $\mathcal P$ be some partition finer than $Z_{comp}(s(T))$,$z \in Z_{exec}(s(T - \delta))$, 
    and  $h \in \{0,1\}$. Denoting $\mathcal P_z = \{z, \mathcal N \setminus z\}$, we have\\
    $$\bigg\lvert [V_h^{\pi_{comp}}]_z (s(T), \mathcal P\cap \mathcal P_z) - [V_h^{\pi_{comp}\lvert ^{\mathcal P}}]_z(s(T), \mathcal P\cap\mathcal P_z) \bigg\rvert \leq 2\frac{\gamma^{c - \delta -h + 1}}{1 - \gamma} \tilde r_{\mathcal P_z}.$$
\end{corollary}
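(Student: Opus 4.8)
The plan is to obtain this by re-running the argument behind \cref{z_intersect} with the reward accounting restricted to the block $z$, combined with the Deconstructive Improvement property of Consistent Performance Policies. The argument rests on two reductions. First, in the Cutoff Multi-Agent MDP the state transition does not depend on the partition component, and $\pi_{comp}$ is a proper cutoff policy, so a rollout of $\pi_{comp}$ started from actual partition $\mathcal P$ and a rollout of $\pi_{comp}\lvert^{\mathcal P}$ started from actual partition $\mathcal P\cap\mathcal P_z$ can be coupled to carry the same state--action sequence, differing only in their bookkeeping partition. Since for any two agents $j,k$ both lying in $z$ one has ``same $\mathcal P(t)$-block'' if and only if ``same $(\mathcal P(t)\cap\mathcal P_z)$-block'', the reward $r^{\mathcal C}_z$ is blind to this difference; hence $[V^{\pi_{comp}\lvert^{\mathcal P}}_h]_z(s(T),\mathcal P\cap\mathcal P_z) = [V^{\pi_{comp}}_h]_z(s(T),\mathcal P)$, and the claim reduces to bounding $\big\lvert\,[V^{\pi_{comp}}_h]_z(s(T),\mathcal P\cap\mathcal P_z) - [V^{\pi_{comp}}_h]_z(s(T),\mathcal P)\,\big\rvert$ --- the $[\cdot]_z$-analogue of the first inequality of \cref{z_intersect} with $\mathcal P' = \mathcal P_z$, which is legitimate because $z$ is a block of $Z_{exec}(s(T-\delta))$ so $\mathcal P_z$ is coarser than $Z_{exec}(s(T-\delta))$.

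For that bound I would mimic the chain in the proof of \cref{z_intersect}, carrying only $r^{\mathcal C}_z$ in place of $r^{\mathcal C}$. Prepend the real segment $(s(T-\delta),a(T-\delta)),\dots,(s(T-1),a(T-1))$ to a rollout started at $s(T)$, so that the concatenation is a local movement trajectory, and invoke the Dependence Time Lemma (\cref{dtl}): the dependence partitions of the rollout states refine $Z_{exec}(s(T-\delta))$ --- one of whose blocks is $z$ --- for roughly the first $c-\delta$ steps. Over those steps the within-$z$ reward is therefore insensitive to whether the bookkeeping partition cuts the $z$-boundary, so these reward terms can be reorganized and folded into the value of the other policy; what remains is a tail over steps beyond $c-\delta$, and since $r^{\mathcal C}_z$ only involves pairs inside $z$ it is at most $\frac{\gamma^{c-\delta-h+1}}{1-\gamma}\tilde r_z \le \frac{\gamma^{c-\delta-h+1}}{1-\gamma}\tilde r_{\mathcal P_z}$. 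Passing from the reorganized (pretend-partition) value back to the honest value by Deconstructive Improvement --- summed over the blocks $p'$ of $\mathcal P\cap\mathcal P_z$ with $p'\subseteq z$, using $[V^{\pi}_h]_z = \sum_{p'\subseteq z}[V^{\pi}_h]_{p'}$ --- and carrying this out in both directions produces the two error contributions, hence the factor $2$ in the claimed bound.

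The main obstacle is the accounting forced by $\pi_{comp}\lvert^{\mathcal P}$: because that policy may coordinate an agent of $z$ with an agent of $\mathcal N\setminus z$ whenever a $\mathcal P$-block straddles the $z$-boundary, ``restricting attention to $z$'' is not as transparent as in the honest, block-independent case (where \cref{extra_properties} applies directly), so one must check carefully that this cross-boundary coordination influences $r^{\mathcal C}_z$ only through the first $\approx c-\delta$ reorganizable steps --- where \cref{dtl} keeps the agents of $z$ at distance $>\mathcal R$ from the rest --- and through the geometric tail. The remaining pieces (the elementary re-expression of the value functions, the associativity and commutativity of intersection on partitions, and $\tilde r_z\le\tilde r_{\mathcal P_z}$) are routine.
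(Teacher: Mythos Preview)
Your first reduction is correct and sharper than you realize: by transition independence and the identity $(\mathcal P\cap\mathcal P_z\cap Q)\cap\{z\}=(\mathcal P\cap Q)\cap\{z\}$ for any partition $Q$, the coupling gives $[V_h^{\pi_{comp}\lvert^{\mathcal P}}]_z(s(T),\mathcal P\cap\mathcal P_z)=[V_h^{\pi_{comp}}]_z(s(T),\mathcal P)$ exactly, at \emph{every} time step --- so there is in fact no ``tail beyond $c-\delta$'' coming from reward reorganization. The only discrepancy is the policy comparison, and here your plan has a gap in one of the two directions. Deconstructive Improvement, summed over the blocks $p'\in\mathcal P\cap\mathcal P_z$ with $p'\subseteq z$, does give $[V_h^{\pi_{comp}\lvert^{\mathcal P}}]_z\le [V_h^{\pi_{comp}}]_z$ at $(s(T),\mathcal P\cap\mathcal P_z)$; combined with your coupling this yields one side (with zero error, in fact). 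But the reverse side would require $[V_h^{\pi_{comp}\lvert^{\mathcal P\cap\mathcal P_z}}]_z(s(T),\mathcal P)\le [V_h^{\pi_{comp}}]_z(s(T),\mathcal P)$, which is the \emph{Constructive} Improvement inequality restricted to $z$ --- and Constructive Improvement is asserted only for $p\in\mathcal P$. Nothing in the hypotheses prevents a $\mathcal P$-block from straddling the $z$-boundary (take $\xi$ large so that $Z_{comp}(s(T))$ merges agents from different $Z_{exec}(s(T-\delta))$-blocks), so $z$ need not be a union of $\mathcal P$-blocks and you cannot recover $[\cdot]_z$ by summing. Your ``main obstacle'' paragraph flags exactly this cross-boundary coordination, but the DTL-plus-tail mechanism you propose controls reward bookkeeping, not the policy comparison, so it does not close the gap.

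The paper avoids this by postponing the restriction to $[\cdot]_z$ until the very end. It first bounds the \emph{full} difference $V_h^{\pi_{comp}}(s(T),\mathcal P\cap\mathcal P_z)-V_h^{\pi_{comp}\lvert^{\mathcal P}}(s(T),\mathcal P\cap\mathcal P_z)$ via the triangle inequality through $V_h^{\pi_{comp}}(s(T),\mathcal P)$ and two black-box applications of \cref{z_intersect}. Then, since that full difference splits as the $z$-part plus the $(\mathcal N\setminus z)$-part and each part is nonnegative by Deconstructive Improvement (each being a sum over blocks of $\mathcal P\cap\mathcal P_z$), the $z$-part alone is trapped between $0$ and the full bound --- without ever needing a $[\cdot]_z$-version of Constructive Improvement.
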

\begin{proof}
Using the triangle inequality and \cref{z_intersect}, recognizing $\mathcal P_z$ is coarser than $Z_{exec}(s(T - \delta))$, we may establish
\begin{align*}
 & \bigg\lvert V_h^{\pi_{comp}} (s(T), \mathcal P\cap \mathcal P_z) - V_h^{\pi_{comp}\lvert ^{\mathcal P}}(s(T), \mathcal P\cap \mathcal P_z) \bigg\rvert \\
 & \leq \bigg\lvert V_h^{\pi_{comp}} (s(T), \mathcal P\cap \mathcal P_z) - V_h^{\pi_{comp}}(s(T), \mathcal P) \bigg\rvert+ \bigg\lvert  V_h^{\pi_{comp}}(s(T), \mathcal P)- V_h^{\pi_{comp}\lvert ^{\mathcal P}}(s(T), \mathcal P\cap \mathcal P_z) \bigg\rvert \\
 & \leq 2\frac{\gamma^{c - \delta -h + 1}}{1 - \gamma} \tilde r_{\mathcal P_z}.
\end{align*}
Using the definition of the Consistent Performance Policy, we also have:
\begin{align}
 & V_h^{\pi_{comp}} (s(T), \mathcal P\cap \mathcal P_z) - V_h^{\pi_{comp}\lvert ^{\mathcal P}}(s(T), \mathcal P\cap \mathcal P_z) \nonumber\\
&= [V_h^{\pi_{comp}}]_{z} (s(T), \mathcal P\cap \mathcal P_z) + [V_h^{\pi_{comp}}]_{\mathcal N\setminus z} (s(T), \mathcal P\cap \mathcal P_z)\nonumber \\
&\hspace{10ex} - [V_h^{\pi_{comp}\lvert ^{\mathcal P}}]_z(s(T), \mathcal P\cap \mathcal P_z) - [V_h^{\pi_{comp}\lvert ^{\mathcal P}}]_{\mathcal N\setminus z}(s(T), \mathcal P\cap \mathcal P_z)\nonumber\\
  &\geq [V_h^{\pi_{comp}}]_{z} (s(T), \mathcal P\cap \mathcal P_z) - [V_h^{\pi_{comp}\lvert^\mathcal P}]_{z} (s(T), \mathcal P\cap \mathcal P_z).\nonumber
\end{align}
Combining the results above and using the lower bound from the definition of Consistent Performance Policy, we have
$0 \leq [V_h^{\pi_{comp}}]_{z} (s(T), \mathcal P\cap \mathcal P_z) - [V_h^{\pi_{comp}\lvert^\mathcal P}]_{z} (s(T), \mathcal P\cap \mathcal P_z) \leq 2\frac{\gamma^{c - \delta -h + 1}}{1 - \gamma} \tilde r_{\mathcal P_z}$.
\end{proof}

\begin{lemma}
\label{mono_horizon}
For any state $s$, partition $\mathcal P$ finer than $Z_{comp}(s)$, and $p \in \mathcal P$, 
$$\bigg\lvert [V_0^{\pi_{comp}}]_p(s,\mathcal P) - [V_1^{\pi_{comp}}]_p(s,\mathcal P)\bigg\rvert \leq \gamma^{c + \eta} \tilde r_p$$

\end{lemma}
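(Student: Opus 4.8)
\textbf{Proof proposal for \cref{mono_horizon}.} The plan is to sandwich $[V_0^{\pi_{comp}}]_p(s,\mathcal P)$ and $[V_1^{\pi_{comp}}]_p(s,\mathcal P)$ against each other using the \emph{1-Step Displaced Improvement} and \emph{1-Step Contracted Improvement} conditions from the definition of a Consistent Performance Policy in \cref{monotone}, at the cost of a single discounted reward term whose magnitude is at most $\gamma^{c+\eta}\tilde r_p$. Write $\pi = \pi_{comp}$ and let $\pi' = (\pi_0,\dots,\pi_{c+\eta},\pi_{c+\eta+1})$ be the one-step extension guaranteed by the definition (with $\mathcal P$ finer than $Z_{comp}(s)$, so $(s,\mathcal P)\in\mathcal S^{\mathcal C}$ and the conditions apply at $(s,\mathcal P)$ with $p\in\mathcal P$).

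First I would record two elementary ``horizon-truncation'' identities. Since the action at time $t$ in a proper cutoff trajectory depends only on the index-$t$ sub-policy and the current $(s(t),\mathcal P(t))$, the expectations $[V_0^{\pi}]_p(s,\mathcal P)$ and $[V_{0,c+\eta-1}^{\pi}]_p(s,\mathcal P)$ are taken over the \emph{same} trajectory prefix for $t=0,\dots,c+\eta-1$, so they differ only by the single discounted term $\gamma^{c+\eta}\,\mathbb{E}\!\left[r_p^{\mathcal C}((s(c+\eta),\mathcal P(c+\eta)),a(c+\eta))\right]$. Because $r_p^{\mathcal C}((s,\mathcal P),a)=\sum_{p'\in\mathcal P\cap\{p\}} r_{p'}(s_{p'},a_{p'})$ sums pairwise rewards $\overline r_{i,j}$ over pairs $(i,j)$ lying inside subsets of $p$, it satisfies $|r_p^{\mathcal C}((s,\mathcal P),a)|\le\sum_{i,j\in p}\max|\overline r_{i,j}| = \tilde r_p$, hence
\[
\bigl|[V_0^{\pi}]_p(s,\mathcal P) - [V_{0,c+\eta-1}^{\pi}]_p(s,\mathcal P)\bigr|\le \gamma^{c+\eta}\tilde r_p .
\]
Analogously, since $\pi'$ agrees with $\pi$ on indices $1,\dots,c+\eta$, the value $[V_{1,c+\eta+1}^{\pi'}]_p(s,\mathcal P)$ shares its trajectory prefix with $[V_1^{\pi}]_p(s,\mathcal P)$ and differs only by the term at time $c+\eta+1$, giving $\bigl|[V_{1,c+\eta+1}^{\pi'}]_p(s,\mathcal P) - [V_1^{\pi}]_p(s,\mathcal P)\bigr|\le \gamma^{c+\eta}\tilde r_p$.

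Then I would chain these with the two ``1-step'' inequalities. By 1-Step Contracted Improvement, $[V_{0,c+\eta-1}^{\pi}]_p(s,\mathcal P)\le [V_1^{\pi}]_p(s,\mathcal P)$, so combining with the first display yields $[V_0^{\pi}]_p(s,\mathcal P)\le [V_1^{\pi}]_p(s,\mathcal P)+\gamma^{c+\eta}\tilde r_p$. By 1-Step Displaced Improvement, $[V_{1,c+\eta+1}^{\pi'}]_p(s,\mathcal P)\le [V_0^{\pi}]_p(s,\mathcal P)$, so combining with the second display yields $[V_1^{\pi}]_p(s,\mathcal P)\le [V_0^{\pi}]_p(s,\mathcal P)+\gamma^{c+\eta}\tilde r_p$. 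Together these two inequalities give the desired two-sided bound $\bigl|[V_0^{\pi_{comp}}]_p(s,\mathcal P)-[V_1^{\pi_{comp}}]_p(s,\mathcal P)\bigr|\le\gamma^{c+\eta}\tilde r_p$. There is no real obstacle here beyond bookkeeping: the only points to check carefully are that the compared value functions genuinely share their trajectory prefixes (so truncating/appending a horizon step adds or removes exactly one reward term rather than reshaping the distribution), and the crude per-step bound $|r_p^{\mathcal C}|\le\tilde r_p$, which follows from $\overline r_{i,j}$ vanishing whenever $d(s_i,s_j)>\mathcal R$ together with the pieces of $\mathcal P\cap\{p\}$ partitioning $p$.
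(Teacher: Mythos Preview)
Your proposal is correct and follows essentially the same route as the paper: both directions are obtained by peeling off the single terminal reward term (at horizon $c+\eta$ for $[V_0^{\pi}]_p$, and at $c+\eta+1$ for $[V_{1,c+\eta+1}^{\pi'}]_p$) and then invoking 1-Step Contracted Improvement and 1-Step Displaced Improvement respectively, with the per-step bound $|r_p^{\mathcal C}|\le\tilde r_p$ supplying the $\gamma^{c+\eta}\tilde r_p$ error. The paper organizes the two directions as upper and lower bounds on the single difference $[V_0^{\pi_{comp}}]_p-[V_1^{\pi_{comp}}]_p$ rather than as two one-sided inequalities, but the ingredients and logic are identical.
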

\begin{proof}
Using the 1-Step Contracted Improvement property of $\pi_{comp}$, we have
\begin{align*}
[V_0^{\pi_{comp}}]_p&(s,\mathcal P) - [V_1^{\pi_{comp}}]_p(s,\mathcal P) = \mathbb{E}_{\tau_0^{c + \eta}\sim \pi_{comp}\lvert_{s, \mathcal P}}\bigg[\sum_{t = 0}^{c + \eta} \gamma^t r^\mathcal C_p((s(t), \mathcal P(t)), a(t))\bigg] - [V_1^{\pi_{comp}}]_p(s,\mathcal P)\\
&=[V_{0,c + \eta - 1}^{\pi_{comp}}]_p(s, \mathcal P) - [V_1^{\pi_{comp}}]_p(s, \mathcal P) + \mathbb{E}_{\tau_0^{c + \eta}\sim \pi_{comp}\lvert_{s, \mathcal P}}[\gamma^{c + \eta} r^\mathcal C_p((s(c + \eta), \mathcal P(c + \eta)), a(c + \eta))]\\
&\leq \mathbb{E}_{\tau_0^{c + \eta}\sim \pi_{comp}\lvert_{s, \mathcal P}}[\gamma^{c + \eta} r^\mathcal C_p((s(c + \eta), \mathcal P(c + \eta)), a(c + \eta))]\\
&\leq \gamma^{c + \eta} \tilde r_p.
\end{align*}
Now, let $\pi_{c + \eta + 1}$ be the stationary Proper Cutoff Policy from the Consistent Performance Policy assumption. Letting $\pi' = (\pi_h)_{h = 0}^{c + \eta + 1}$ we can show
\begin{align*}
[&V_0^{\pi_{comp}}]_p(s,\mathcal P) - [V_1^{\pi_{comp}}]_p(s,\mathcal P) \geq [V_0^{\pi_{comp}}]_p(s,\mathcal P) - \mathbb{E}_{\tau_1^{c + \eta}\sim \pi_{comp}\lvert_{s, \mathcal P}}\bigg[\sum_{t = 1}^{c + \eta} \gamma^{t - 1} r^\mathcal C_p((s(t), \mathcal P(t)), a(t))\bigg]\\
 &= [V_0^{\pi_{comp}}]_p(s,\mathcal P) - \mathbb{E}_{\tau'^{c + \eta + 1}_1\sim \pi'\lvert_{s, \mathcal P}}\bigg[\sum_{t = 1}^{c + \eta} \gamma^{t - 1} r^\mathcal C_p((s'(t), \mathcal P'(t)), a'(t))\bigg]\\
 &= [V_0^{\pi_{comp}}]_p(s,\mathcal P) - [V_1^{\pi'}]_p(s,\mathcal P) + \mathbb{E}_{\tau'^{c + \eta + 1}_1\sim \pi'\lvert_{s, \mathcal P}}[\gamma^{c + \eta} r^\mathcal C_p((s'(c + \eta + 1), \mathcal P'(c + \eta + 1)), a'(c + \eta + 1))]\\
 &\geq \mathbb{E}_{\tau'^{c + \eta}_1\sim \pi'\lvert_{s, \mathcal P}}[\gamma^{c + \eta} r^\mathcal C_p((s'(c + \eta + 1), \mathcal P'(c + \eta + 1)), a'(c + \eta + 1))]\\
&\geq -\gamma^{c + \eta} \tilde r_p
\end{align*}
Combining the results above, we obtain $\bigg\lvert [V_0^{\pi_{comp}}]_p(s,\mathcal P) - [V_1^{\pi_{comp}}]_p(s,\mathcal P) \bigg\rvert  \leq \gamma^{c + \eta} \tilde r_p$.

\end{proof}

\begin{lemma}
    \label{tele_condition} 
For any $T \in \{0,1,\dots\}$, we have

    \begin{align*}
    &\mathbb{E}_{\tau \sim \pi_{exec}\lvert_s}\bigg[r(s(T), a(T))\bigg]
     = \mathbb{E}_{\tau \sim \pi_{exec}}\bigg[V^{\pi_{comp}}_0(s(T), Z_{exec}(s(T))) \bigg] \\
     &\hspace{10ex}- \gamma \mathbb{E}_{\tau \sim \pi_{exec}}\bigg[V^{\pi_{comp}}_0(s(T + 1), Z_{comp}(s(T + 1)\cap Z_{exec}(s(T)))  \bigg]+ \zeta(4\frac{(1 + \gamma^2)}{1 - \gamma}\gamma^{c - 1} \tilde r) + \zeta(\gamma^{c + \eta} \tilde r)
    \end{align*}
    
\end{lemma}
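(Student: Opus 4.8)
The plan is to establish, for each fixed realization of the trajectory up to time $T$ (hence of $s(T)$ and of the beliefs used at time $T$), a pointwise identity expressing the reward $r(s(T),a(T))$ collected by $\pi_{exec}$ as a one‑step Bellman difference of $\pi_{comp}$‑values, up to controlled slack, and then to take $\mathbb E_{\tau\sim\pi_{exec}\lvert_s}$ of that identity. Decompose $r(s(T),a(T)) = \sum_{z\in Z_{exec}(s(T))} r_z(s_z(T),a_z(T))$ and treat each group $z$ separately. Write $(s_{p_z},\{p_z\}) = \rho(I(\tau^T,z))$ for the extracted belief; validity of the extraction gives $z\in Z_{exec}(s_{p_z})$, $[s_{p_z}]_z = s_z(T)$, and $a_z(T)$ is the $z$‑restriction of an action $\hat a_{p_z}$ drawn by $\pi_0$ at $(s_{p_z},\{p_z\})$.

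First I would rewrite $r_z(s_z(T),a_z(T))$ as a one‑step Cutoff reward at the belief. Since $\overline r_{i,j}\equiv 0$ whenever $d(s_i,s_j)>\mathcal R$ and $z$ is an entire $\mathcal V_{exec}$‑communication group inside $s_{p_z}$ (every agent of $p_z\setminus z$ sits more than $\mathcal V_{exec}>\mathcal R$ from every agent of $z$), every block of $D(s_{p_z})$ meeting $z$ lies inside $z$; together with $[s_{p_z}]_z = s_z(T)$ this identifies $r_z(s_z(T),a_z(T))$ with the $z$‑component of $r^{\mathcal C}((s_{p_z},\{p_z\}),\hat a_{p_z})$. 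Applying the finite‑horizon Bellman consistency equation for $\pi_{comp}$ at $(s_{p_z},\{p_z\})$ and reading off only the $z$‑component then gives
$$ r_z(s_z(T),a_z(T)) = [V_0^{\pi_{comp}}]_z(s_{p_z},\{p_z\}) - \gamma\,\mathbb E_{s'\sim P(\cdot\lvert s_{p_z},\hat a_{p_z})}\Big[[V_1^{\pi_{comp}}]_z\big(s', Z_{comp}(s')\cap\{p_z\}\big)\Big]. $$

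Next I would transfer both terms to the genuine trajectory. For the first term, \cref{extra_properties} together with \cref{z_intersect} and \cref{make_coarser} (applied with $\delta=0$, using exactly $z\in Z_{exec}(s_{p_z})$) let me replace $[V_0^{\pi_{comp}}]_z(s_{p_z},\{p_z\})$ by $[V_0^{\pi_{comp}}]_z(s_z(T),\{z\})$ at a cost of order $\tfrac{\gamma^{c+1}}{1-\gamma}\tilde r$: the belief's phantom agents $p_z\setminus z$ are more than $\mathcal V_{exec}$ from $z$, so by the Dependence Time Lemma (\cref{dtl}) they contribute nothing to $z$'s rewards for the first $c$ steps, and the residual effect on $z$'s discounted return is bounded through the Consistent Performance Policy inequalities. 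Summing over $z\in Z_{exec}(s(T))$ and using the decomposition of proper cutoff policy values over a partition reconstructs $V_0^{\pi_{comp}}(s(T),Z_{exec}(s(T)))$. For the continuation term, transition independence together with $[s_{p_z}]_z = s_z(T)$ and $a_z(T)=[\hat a_{p_z}]_z$ shows that the law of the $z$‑coordinates of $s'$ matches the law of the genuine successor of $z$'s agents under $\pi_{exec}$; summing the continuation terms over $z$ and over the transition randomness therefore assembles a single $\gamma\,\mathbb E_{s(T+1)}$ of a $\pi_{comp}$‑value at the true state $s(T+1)$. Two adjustments remain: (i) swap the belief‑induced next partition $Z_{comp}(s')\cap\{p_z\}$ (restricted to the descendants of $z$) for $Z_{comp}(s(T+1))\cap Z_{exec}(s(T))$ --- since the discrepancy concerns only agents still more than $\mathcal R$ from $z$'s descendants for roughly $c-1$ more steps, \cref{z_intersect} (now with $\delta=1$, $h=1$) bounds the cost by $O\!\big(\tfrac{\gamma^{c-1}}{1-\gamma}\tilde r\big)$, and the intersection with $Z_{exec}(s(T))$ is precisely what forbids phantom merges between descendants of distinct time‑$T$ groups; (ii) replace $V_1^{\pi_{comp}}$ by $V_0^{\pi_{comp}}$ using \cref{mono_horizon}, contributing the $\zeta(\gamma^{c+\eta}\tilde r)$ term. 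Collecting the finitely many $O(\gamma^{c-1}/(1-\gamma))$ slack contributions, whose coefficients add up to $4(1+\gamma^2)$, and taking $\mathbb E_{\tau\sim\pi_{exec}\lvert_s}$ of the identity yields the claim.

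I expect the main obstacle to be adjustment (i): the belief $s_{p_z}$ carries phantom agents that, one step later, may still lie within $\mathcal V_{comp}$ of $z$'s successors and hence distort $Z_{comp}(s')$ relative to the true $Z_{comp}(s(T+1))$. The crux is to argue that this distortion is confined to agents more than $\mathcal R$ from $z$'s successors for another $\sim c-1$ steps, so that the value it affects changes by only $O(\gamma^{c-1}/(1-\gamma)\tilde r)$ --- exactly the order permitted in the statement --- while keeping careful track of which of the three estimates in \cref{z_intersect}, and with which $(\delta,h)$, is invoked at each swap so that the constants collapse to the stated $4(1+\gamma^2)$ and $1$.
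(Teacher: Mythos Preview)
Your proposal is correct and follows essentially the same approach as the paper: group-wise decomposition of $r(s(T),a(T))$, a one-step Bellman unrolling of $\pi_{comp}$ at the belief state, removal of phantom agents via \cref{make_coarser}, the horizon shift $V_1\to V_0$ via \cref{mono_horizon}, and transfer to the true state via \cref{extra_properties}. The only organizational difference is that the paper explicitly routes the Bellman unrolling through the auxiliary partition $\mathcal P_z=\{z,\,p\setminus z\}$ with the policy acting as if the partition were $\{p\}$, which makes the $z$-component isolation via \cref{make_coarser} (rather than \cref{z_intersect} on full values) more transparent---so in your adjustment~(i) the precise tool is \cref{make_coarser}, not \cref{z_intersect}.
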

\begin{proof}
    Suppose we are given some $s \in \mathcal S$. Then we can create the following expression for the reward:
    \begin{align}
    \mathbb{E}_{a \sim \pi_{exec}(\cdot \lvert s)}&[r(s, a)] = \mathbb{E}_{a \sim \pi_{exec}(\cdot \lvert s)}\bigg[\sum_{z \in Z_{exec}(s)}r_z(s_z, a_z)\bigg] \nonumber\\
     &= \sum_{z \in Z_{exec}(s)}\mathbb{E}_{a_z \sim \pi_{exec}(\cdot \lvert s_z)}\bigg[r_z(s_z, a_z)\bigg] \nonumber\\
    & =\sum_{z \in Z_{exec}(s)}\mathbb{E}_{(s_p', \{p\}) \sim \rho(I(\tau^t, z))}\bigg[ \mathbb{E}_{a_p \sim [\pi_{0}]_p(\cdot \lvert (s_p', \{p\}))}\bigg[r^{\mathcal C}_z((s_p', \{p\}), a_p)\bigg]\bigg].\label{using_first_given}
    \end{align}
 Here, we abuse notation and let $\rho(I(\tau^t, z))$ denote both the random variable and its own distribution. Therefore $(s_p', \{p\}) \sim \rho(I(\tau^t, z))$ is the outcome sampled from the distribution of the random variable $\rho(I(\tau^t, z))$.
We also use the notation $[\pi_{0}]_p$ to denote the marginalized policy from the proper cutoff policy definition acting only on the agents in $p$ (a Consistent Performance Policy defined in \cref{monotone} is always proper and see from our proper cutoff policy notation introduced in \cref{cutoff}) . 
 
Now consider the expression $\mathbb{E}_{a_p \sim [\pi_{0}]_p(\cdot \lvert (s_p', \{p\}))}\bigg[r^{\mathcal C}_z((s_p', \{p\}), a_p)\bigg]$ for some given $(s_p', \{p\}) \sim \rho(I(\tau^t, z))$. We will denote $\mathcal P_z = \{z, p \setminus z\}$.
Notice that by definition, %
\begin{align*}
[&V^{\pi_{comp}\lvert^{\{p\}}}_{h}]_z(s'_p, \mathcal P_z) = \mathbb{E}_{\tau_{h} \sim \pi_{0}\lvert_{s'_p,\mathcal P_z}^{\{p\}}} \bigg[\sum_{t = h}^{c + \eta} \gamma^{t - h} r^{\mathcal C}_z((s''_p(t), \mathcal P''(t)), a''(t))\bigg]\\
&= \mathbb{E}_{a_p' \sim [\pi_{0}]_p(\cdot \lvert (s_p', \{p\}))}[r^{\mathcal C}_z((s'_p, \mathcal P_z), a'_p)] \\
&\hspace{5ex}+ \gamma\mathbb{E}_{a_p' \sim [\pi_{0}]_p(\cdot \lvert (s_p', \{p\}))}\bigg[\mathbb{E}_{(s''_p, Z_{comp}(s''_p) \cap \mathcal P_z) \sim P^{\mathcal C}(\cdot \lvert (s_p', \mathcal P_z), a_p')}\bigg[[V_1^{\pi_{comp}\lvert^{\{p\}\cap Z_{comp}(s''_p)}}]_z(s_p'', Z_{comp}(s''_p) \cap \mathcal P_z) \bigg] \bigg]
\end{align*}
 
By our assumption on $\rho(I(\tau^t, z))$ we will have $z \in Z_{exec}(s_p')$ and therefore $r^{\mathcal C}_z((s'_p, \mathcal P_z), a'_p) = r^{\mathcal C}_z((s_p', \{p\}), a_p')$. We can also trivially simplify $\{p\} \cap Z_{comp}(s''_p) = Z_{comp}(s''_p)$ since $Z_{comp}(s''_p)$ is a partition over $p$. Making these replacements and rearranging terms, we may apply our other lemmas as follows:
    \begin{align}
        &\mathbb{E}_{a_p' \sim [\pi_{0}]_p(\cdot \lvert (s_p', \{p\}))}[r^{\mathcal C}_z((s_p', \{p\}), a_p')] \\
        &= [V_0^{\pi_{comp}\lvert^{\{p\}}}]_z(s_p', \mathcal P_z) \nonumber\\
        &\hspace{5ex}- \gamma \mathbb{E}_{a_p' \sim [\pi_{0}]_p(\cdot \lvert (s_p', \{p\}))}\bigg[\mathbb{E}_{(s''_p, Z_{comp}(s''_p) \cap \mathcal P_z) \sim P^{\mathcal C}(\cdot \lvert (s_p', \mathcal P_z), a_p')}\bigg[[V_1^{\pi_{comp}\lvert^{Z_{comp}(s''_p)}}]_z(s_p'', Z_{comp}(s''_p) \cap \mathcal P_z) \bigg] \bigg] \nonumber\\
        &= [V_0^{\pi_{comp}}]_z(s_p',\mathcal P_z)  \nonumber\\
        &\hspace{5ex}- \gamma \mathbb{E}_{a_p' \sim [\pi_{0}]_p(\cdot \lvert (s_p', \{p\}))}\bigg[\mathbb{E}_{(s''_p, Z_{comp}(s''_p) \cap \mathcal P_z) \sim P^{\mathcal C}(\cdot \lvert (s_p', \mathcal P_z), a_p')}\bigg[[V_1^{\pi_{comp}}]_z(s_p'', Z_{comp}(s''_p) \cap \mathcal P_z) \bigg] \bigg]  \label{using_mono_twice}\\
        & \hspace{45ex}+ \zeta(2\frac{\gamma^{c + 1}}{1 - \gamma} \tilde r_{P_z}^c)+ \zeta(2\frac{\gamma^{c - 1}}{1 - \gamma} \tilde r_{P_z}) \nonumber\\
        &= [V_0^{\pi_{comp}}]_z(s_p',\mathcal P_z)  \nonumber\\
        &\hspace{5ex}- \gamma \mathbb{E}_{a_p' \sim [\pi_{0}]_p(\cdot \lvert (s_p', \{p\}))}\bigg[\mathbb{E}_{(s''_p, Z_{comp}(s''_p) \cap \mathcal P_z) \sim P^{\mathcal C}(\cdot \lvert (s_p', \mathcal P_z), a_p')}\bigg[[V_0^{\pi_{comp}}]_z(s_p'', Z_{comp}(s''_p) \cap \mathcal P_z) \bigg] \bigg] \label{using_mono_horizon}\\
        & \hspace{45ex}+ \zeta(2\frac{(1 + \gamma^2)}{1 - \gamma}\gamma^{c - 1} \tilde r_{P_z}^c) + \zeta(\gamma^{c + \eta} \tilde r_z)\nonumber \\
        &= [V_0^{\pi_{comp}}]_z(s_p,Z_{exec}(s_p))  \nonumber\\
        &\hspace{5ex}- \gamma \mathbb{E}_{a\sim \pi_{exec}(\cdot\lvert s) }\bigg[\mathbb{E}_{s_{next}\sim  P(\cdot\lvert s,a )}\bigg[[V_0^{\pi_{comp}}]_z([s_{next}]_p, Z_{comp}([s_{next}]_p) \cap Z_{exec}(s_p)) \bigg]  \label{using_policy_replace}\\
        & \hspace{25ex}+ \zeta(2\frac{(1 + \gamma^2)}{1 - \gamma}\gamma^{c - 1} \tilde r_{P_z}^c) + \zeta(\gamma^{c + \eta} \tilde r_z) \nonumber\\
        &= [V_0^{\pi_{comp}}]_z(s,Z_{exec}(s))  \nonumber\\
        &\hspace{5ex}- \gamma \mathbb{E}_{a\sim \pi_{exec}(\cdot\lvert s) }\bigg[\mathbb{E}_{s_{next}\sim  P(\cdot\lvert s,a )}\bigg[[V_0^{\pi_{comp}}]_z(s_{next}, Z_{comp}(s_{next}) \cap Z_{exec}(s)) \bigg]  \label{using_third prop}\\
        & \hspace{25ex}+ \zeta(2\frac{(1 + \gamma^2)}{1 - \gamma}\gamma^{c - 1} \tilde r_{P_z}^c) + \zeta(\gamma^{c + \eta} \tilde r_z) \nonumber.
    \end{align}
    In \cref{using_mono_twice}  we use \cref{make_coarser}
    and \cref{using_mono_horizon} we use \cref{mono_horizon}. In \cref{using_policy_replace} we use the first and second property of \cref{extra_properties} noticing $[s_p]_z = [s_p']_z$, and $\{z\}\cap \mathcal{P}_z = \{z\} \cap Z_{exec}(s_p)$. For the second term, by the definition of $\pi_{exec}$, both $[s''_p]_z$ and $[[s_{next}]_p]_z$ will be equivalently distributed and also when $[s''_p]_z=[[s_{next}]_p]_z$ we have $Z_{comp}([s_{next}]_p) \cap Z_{exec}(s_p) \cap \{z\} = Z_{comp}(s''_p) \cap \mathcal P_z \cap \{z\}$. In \cref{using_third prop}, we use the third property of \cref{extra_properties}.

    We may now plug this back into \cref{using_first_given} and sum over $z \in Z_{exec}(s)$.
    \begin{align*}
        &\mathbb{E}_{a \sim \pi_{exec}(\cdot \lvert s)}[r(s, a)] \\
        &\hspace{5ex}= V_0^{\pi_{comp}}(s, Z_{exec}(s)) - \gamma \mathbb{E}_{a \sim \pi_{exec}(\cdot \lvert s)}\bigg[\mathbb{E}_{s_{next}\sim P(\cdot \lvert s,a )}\bigg[V_0^{\pi_{comp}}(s_{next}, Z_{comp}(s_{next}) \cap Z_{exec}(s)) \bigg]\bigg]\\
        &\hspace{25ex}+ \zeta(4\frac{(1 + \gamma^2)}{1 - \gamma}\gamma^{c - 1} \tilde r) + \zeta(\gamma^{c + \eta} \tilde r)
    \end{align*}
Notice the subtle introduction of a factor of 2 in the first constant term. This is because by the definition of $\tilde r_{\mathcal P_z}^c$,
\begin{align*}
&\sum_{z \in Z(s)} \tilde r_{\mathcal P_z}^c \leq \sum_{z \in Z(s)} \bigg(\sum_{i \in z, j \in \mathcal N\setminus z} \max_{s_i, a_i, s_j, a_j} \lvert \overline r_{i,j}(s_i, a_i, s_j, a_j) \rvert \\
&\hspace{20ex}+ \sum_{j \in z, i \in \mathcal N\setminus z}\max_{s_i, a_i, s_j, a_j} \lvert \overline r_{i,j}(s_i, a_i, s_j, a_j) \rvert\bigg) \leq 2\tilde r.
\end{align*}

Returning to our objective of this lemma, we may show
\begin{align*}
    & \mathbb{E}_{\tau \sim \pi_{exec}}\bigg[r(s(T), a(T))\bigg]\\
    & = \mathbb{E}_{\tau^T \sim \pi_{exec}}\bigg[r(s(T), a(T))\bigg]\\
    & = \mathbb{E}_{s(T), \tau^{T - 1}}\bigg[\mathbb{E}_{\tau^T \sim \pi_{exec}}\bigg[r(s(T), a(T)) \bigg\lvert s(T), \tau^{T - 1}\bigg] \bigg]\\
    & = \mathbb{E}_{s(T), \tau^{T - 1}}\bigg[\mathbb{E}_{\tau^T \sim \pi_{exec}}\bigg[V^{\pi_{comp}}_0(s(T), Z_{exec}(s(T)))  \\
    &\hspace{8ex}- \gamma \mathbb{E}_{a \sim \pi_{exec}(\cdot \lvert s(T))}\bigg[\mathbb{E}_{s_{next}\sim P(\cdot \lvert s(T),a )}\bigg[V^{\pi_{comp}}_0(s_{next}, Z_{comp}(s_{next})\cap Z_{exec}(s(T)))\bigg] \bigg]\bigg\lvert s(T), \tau^{T - 1}\bigg] \bigg]\\
    &\hspace{35ex}+ \zeta(4\frac{(1 + \gamma^2)}{1 - \gamma}\gamma^{c - 1} \tilde r) + \zeta(\gamma^{c + \eta}\tilde r)\\
    & = \mathbb{E}_{s(T), \tau^{T - 1}}\bigg[\mathbb{E}_{\tau^T \sim \pi_{exec}}\bigg[V^{\pi_{comp}}_0(s(T), Z_{exec}(s(T)))  \\
    &\hspace{1ex}- \gamma \mathbb{E}_{a(T) \sim \pi_{exec}(\cdot \lvert s(T))}\bigg[\mathbb{E}_{s(T + 1)\sim P(\cdot \lvert s(T),a(T) )}\bigg[V^{\pi_{comp}}_0(s(T + 1), Z_{comp}(s(T + 1))\cap Z_{exec}(s(T)))\bigg] \bigg]\bigg\lvert s(T), \tau^{T - 1}\bigg] \bigg]\\
    &\hspace{35ex}+ \zeta(4\frac{(1 + \gamma^2)}{1 - \gamma}\gamma^{c - 1} \tilde r) + \zeta(\gamma^{c + \eta}\tilde r)\\
    & = \mathbb{E}_{\tau \sim \pi_{exec}}\bigg[V^{\pi_{comp}}_0(s(T), Z_{exec}(s(T))) \bigg] \\
    &\hspace{5ex}- \gamma \mathbb{E}_{\tau \sim \pi_{exec}}\bigg[V^{\pi_{comp}}_0(s(T + 1), Z_{comp}(s(T + 1)\cap Z_{exec}(s(T)))  \bigg]\\
    &\hspace{35ex}+ \zeta(4\frac{(1 + \gamma^2)}{1 - \gamma}\gamma^{c - 1} \tilde r) + \zeta(\gamma^{c + \eta} \tilde r).\\
\end{align*}

\end{proof}

\begin{theorem}
\label{same_policy} 
For any policy $\pi$ in an Locally Interdependent Multi-Agent MDP $\mathcal M$, we overload notation and define $\pi(\cdot\lvert s,\mathcal P) = \pi(\cdot\lvert s)$ for all $\mathcal P$, a potentially improper policy in the corresponding Cutoff Multi-Agent MDP $\mathcal C$. Then we have $\lvert V^\pi(s) - V^{\pi}_0(s, Z_{comp}(s)) \rvert \leq \frac{\gamma^{c'}}{1 - \gamma} \tilde r$ where $c' = \lfloor \frac{\mathcal V_{comp} - \mathcal R}{2}\rfloor$.

\end{theorem}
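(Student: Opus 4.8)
The plan is to couple the trajectory generated by $\pi$ in $\mathcal{M}$ with the trajectory generated by the lifted stationary policy in $\mathcal{C}$ started from $(s, Z_{comp}(s))$, so that the two processes share the same state–action sequence $(s(t), a(t))$ and differ only in the reward collected at each step. Since the lifted policy satisfies $\pi(\cdot \lvert s', \mathcal{P}) = \pi(\cdot \lvert s')$, i.e. it ignores the partition coordinate, and $P^{\mathcal C}$ is simply $P$ on the state coordinate, the marginal law of $(s(t), a(t))$ in $\mathcal{C}$ coincides with its law in $\mathcal{M}$, and the partition coordinate then evolves deterministically along the sampled states, $\mathcal{P}(t) = \bigcap_{t' = 0}^{t} Z_{comp}(s(t'))$ by the Cutoff transition rule. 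Taking $V^{\pi}_0(s, Z_{comp}(s))$ to be the value of this stationary policy in the infinite-horizon discounted Cutoff MDP (so that no horizon-truncation term enters, which is what allows the two-sided estimate), the coupling gives $V^{\pi}(s) - V^{\pi}_0(s, Z_{comp}(s)) = \mathbb{E}_{\tau \sim \pi\lvert_s}\bigl[\sum_{t = 0}^{\infty}\gamma^{t}\bigl(r(s(t), a(t)) - r^{\mathcal C}((s(t), \mathcal{P}(t)), a(t))\bigr)\bigr]$.

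The next step is to identify the per-step discrepancy. In the standard setting $r(s(t),a(t)) = \sum_{i,j \in \mathcal{N}} \overline r_{i,j}(s_i(t),a_i(t),s_j(t),a_j(t))$, while expanding $r^{\mathcal C}((s(t),\mathcal{P}(t)),a(t)) = \sum_{p \in \mathcal{P}(t)} r_p(s_p(t),a_p(t))$ through the definitions of $r_p,r_g$ together with the reward-locality assumption $\overline r_{i,j} = 0$ when $d(s_i,s_j) > \mathcal{R}$ collapses it to $\sum_{(i,j) \in E(\mathcal{P}(t))} \overline r_{i,j}(\cdots)$; hence the discrepancy at step $t$ is exactly $\sum_{(i,j) \in E^c(\mathcal{P}(t))} \overline r_{i,j}(\cdots)$, the total pairwise reward between agents in distinct parts of $\mathcal{P}(t)$. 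The key claim is that this vanishes for every $t \le c'$: any sampled trajectory is a local movement trajectory, so the Dependence Time Lemma (\cref{dtl}) applied at visibility $\mathcal{V}_{comp}$ with base time $t'$ and displacement $t - t' \le t \le c'$ shows $D(s(t))$ is finer than $Z_{comp}(s(t'))$ for every $t' \in \{0,\dots,t\}$, hence finer than $\mathcal{P}(t) = \bigcap_{t'=0}^{t} Z_{comp}(s(t'))$; consequently no dependence group straddles two parts of $\mathcal{P}(t)$, so each pair $(i,j) \in E^c(\mathcal{P}(t))$ has $d(s_i(t),s_j(t)) > \mathcal{R}$ and contributes $0$ (equivalently, $\{D(s_p(t)) : p \in \mathcal{P}(t)\}$ partitions $D(s(t))$ and $r^{\mathcal C}((s(t),\mathcal{P}(t)),a(t)) = \sum_{g \in D(s(t))} r_g(s_g(t),a_g(t)) = r(s(t),a(t))$).

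Finally I would bound the tail: for $t \ge c' + 1$ the per-step discrepancy is at most $\sum_{(i,j) \in E^c(\mathcal{P}(t))} \max_{s_i,a_i,s_j,a_j} \lvert \overline r_{i,j}(s_i,a_i,s_j,a_j) \rvert \le \tilde r$, so summing the geometric series yields $\lvert V^{\pi}(s) - V^{\pi}_0(s, Z_{comp}(s)) \rvert \le \sum_{t = c'+1}^{\infty} \gamma^{t} \tilde r = \frac{\gamma^{c'+1}}{1 - \gamma}\tilde r \le \frac{\gamma^{c'}}{1 - \gamma}\tilde r$, as claimed. I expect the only genuinely delicate points to be the first paragraph — making the coupling precise and reading off $\mathcal{P}(t) = \bigcap_{t' = 0}^{t} Z_{comp}(s(t'))$ from the Cutoff transition, and pinning down that $V^{\pi}_0$ is taken infinite-horizon; once past that, the argument is a direct application of the Dependence Time Lemma plus an elementary geometric-series estimate.
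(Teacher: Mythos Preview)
Your coupling argument and use of the Dependence Time Lemma are exactly the paper's approach: share the $(s(t),a(t))$ sequence, read off $\mathcal P(t)=\bigcap_{t'\le t}Z_{comp}(s(t'))$, show $r=r^{\mathcal C}$ for $t\le c'$, and bound the geometric tail.

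The one point where you diverge is the horizon, which you correctly flagged as delicate. In the paper $V^\pi_0(s,Z_{comp}(s))$ is the \emph{finite}-horizon value $\mathbb E\bigl[\sum_{t=0}^{c+\eta}\gamma^t r^{\mathcal C}(\cdots)\bigr]$, not the infinite-horizon one; this is what is needed downstream in the proof of \cref{extract_opt_bound}, where one compares against the horizon-$(c+\eta)$ optimum $\pi^{\xi,\eta}_{comp}$. Under that reading the difference $V^\pi(s)-V^\pi_0(s,Z_{comp}(s))$ picks up, in addition to your residual over $t\in\{c'+1,\dots,c+\eta\}$, a second tail $\sum_{t\ge c+\eta+1}\gamma^t\mathbb E[r(s(t),a(t))]$ from truncating the $\mathcal M$-side. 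The paper handles this by ``moving over'' reward terms between the two equally-distributed trajectories to complete the finite-horizon Cutoff sum, leaving exactly those two pieces to be absorbed into the $\zeta\bigl(\frac{\gamma^{c'}}{1-\gamma}\tilde r\bigr)$ error. Your infinite-horizon reading makes the argument cleaner (a single tail) and gives the same bound; just be aware that to plug into \cref{extract_opt_bound} you need the finite-horizon version, which requires that extra bookkeeping step but no new idea.
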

\begin{proof}
Using the Dependence Time Lemma (\cref{dtl}), we may establish
\begin{align}
\mathbb{E}_{\tau \sim \pi\lvert_s}\bigg[\sum_{t = 0}^{c'} \gamma^t  r(s(t),a(t)) \bigg] = \mathbb{E}_{\tau' \sim \pi\lvert_{s, Z_{comp}(s)}}\bigg[\sum_{t = 0}^{c'} \gamma^t r^{\mathcal C}((s'(t), P'(t)),a'(t)) \bigg]\label{using_up_to_c}
\end{align}
where $c' = \lfloor \frac{\mathcal V_{comp} - \mathcal R}{2}\rfloor$. Consider the trajectories $\tau \sim \pi\lvert_s$ represented by $(s(t), a(t))$ and $\tau' \sim \pi\lvert_{s,Z_{comp}(s)}$ represented by $((s'(t), \mathcal P'(t)), a'(t))$ where by the definition of the transitions in the Cutoff Multi-Agent MDP, $\mathcal P'(t) = \bigcap_{t' = 0}^t Z_{comp}(s'(t'))$. Similarly we will define $\mathcal P(t) = \bigcap_{t' = 0}^t Z_{comp}(s(t'))$.
 
Firstly, notice that the trajectories $(s(t), a(t))$ and $(s'(t), a'(t))$ are equivalently distributed for all time steps by transition independence and the definition of $\pi$.
 
Secondly, we may show an equivalence in the reward using the Dependence Time Lemma for Rewards. That is for any $\delta \in \{0,\ldots,t\}$ we have $r(s(t), a(t)) = \sum_{z\in Z_{comp}(s(t - \delta))}r_z(s_z(t), a_z(t))$ and $Z_{comp}(s(T - \delta))$ is a coarser partition than $D(s)$. 
 
Since this is true for all $\delta \in \{0, \ldots, t\}$, the intersections  of $Z_{comp}(s(T - \delta))$ must also be coarser than $D(s)$. This gives 
\begin{align*}
r(s(t), a(t)) &= \sum_{z\in \bigcap_{t' \in \{0,\ldots,t\}}Z_{comp}(s(t'))}r_z(s_z(t), a_z(t)) \\
&= r^{\mathcal C}((s(t), \mathcal P(t)), a(t))
\end{align*}
which is the reward function for the Cutoff Multi-Agent MDP. This together with the first claim gives the original equality gives \cref{using_up_to_c} as desired.
 
We may substitute this into the value function to obtain
    \begin{align}
        &V^\pi(s) = \mathbb{E}_{\tau \sim \pi\lvert_s}\bigg[\sum_{t = 0}^\infty \gamma^t r(s(t),a(t))\bigg]\nonumber\\
         &= \mathbb{E}_{\tau \sim \pi\lvert_{s,Z_{comp}(s)}}\bigg[\sum_{t = 0}^{c'} \gamma^t  r^{\mathcal C}((s(t), \mathcal P(t)),a(t))\bigg] + \mathbb{E}_{\tau \sim \pi\lvert_s}\bigg[\sum_{t = c' + 1}^{\infty} \gamma^t r(s(t),a(t))\bigg]\nonumber\\
         &= \mathbb{E}_{\tau \sim \pi\lvert_{s,Z_{comp}(s)}}\bigg[\sum_{t = 0}^{c + \eta} \gamma^t  r^{\mathcal C}((s(t), \mathcal P(t)),a(t))\bigg] + \mathbb{E}_{\tau \sim \pi\lvert_s}\bigg[ \sum_{t = c' + 1}^{c + \eta} \gamma^t \sum_{(j,k) \in E^c(Z_{comp}(s))}\overline r_{j,k}(s_j(t),a_j(t),s_k(t),a_k(t))\bigg]\label{using_move_over}\\
         &\hspace{43ex}+ \mathbb{E}_{\tau \sim \pi\lvert_s}\bigg[\sum_{t = c + \eta + 1}^{\infty} \gamma^t r(s(t),a(t))\bigg]\nonumber\\
         &= V^\pi_0(s, Z_{comp}(s)) + \zeta\bigg(\frac{\gamma^{c'}}{1 - \gamma}\tilde r\bigg).\nonumber
    \end{align}
    In \cref{using_move_over}, since both trajectories in the expectations are equivalently distributed, we may bring over rewards from the second term to complete the first trajectory. In the case $c' + 1 > c + \eta$, we assume the second term is 0.
     
    Therefore
    $\lvert V^\pi(s) - V^{\pi}_0(s, Z_{comp}(s)) \rvert \leq \frac{\gamma^{c'}}{1 - \gamma} \tilde r$.
\end{proof}

\subsection{Generalized Locally Interdependent Multi-Agent MDP}
\subsubsection{Main Proof}

\label{dependent_proofs}
Here, we prove \cref{extract_opt_bound} and \cref{extract_bound} for Generalized Locally Interdependent Multi-Agent MDPs. The proof is very similar to the Stardard Locally Interdependent Multi-Agent MDP and the lemmas provided here are one-for-one analogs to the ones used in the standard case. The primary difference is in minor proof technique variations and the constants in the lemmas and final results.
 
For the remainder of the proofs in this section, assume a Generalized Locally Interdependent Multi-Agent MDP $\mathcal{M} = (\mathcal{S}, \mathcal A, P, r, \gamma, \mathcal R)$, with some Consistent Performance Policy $\pi_{comp} = (\pi_0, \pi_1 , ..., \pi_{c + \eta})$ in its corresponding extended Cutoff Multi-Agent MDP $\mathcal{C} = (\mathcal{S}^\mathcal C, \mathcal A^\mathcal C, P^\mathcal C, r^\mathcal C, \gamma, \mathcal R, \mathcal V + \xi)$. We will also assume a corresponding extracted policy $\pi_{exec}(a \lvert s(t)) = \prod_{z\in Z_{exec}(s(t))}[\pi_{0}]_z(a_z \lvert \rho(I(\tau^t,z))$ according to a valid extraction strategy $\rho(I(\tau^t,z))$ for $z \in Z_{exec}(s)$ $s \in \mathcal S$ and $t \in \{0, 1, \dots\}$.
 
We begin by introducing the Dependence Time Lemma for Transitions. The Generalized Locally Interdependent Multi-Agent MDPs will no longer have full transition independence but independence across groups for a certain number of time steps can be established similar to the Dependence Time Lemma for Rewards.
\begin{lemma}
\label{dtl_transition}
(Dependence Time Lemma for Transitions) For any local movement trajectory $(s(t), a(t))$ for $t \in \{0, \ldots, \mathcal T\}$ in a Locally Interdependent Multi-Agent MDP and a time step $T \in \{0, \ldots, \mathcal T - \delta\}$, $D(s(T + \delta))$ is a finer partition than $Z(s(T))$ and
$$ P(s' \lvert s(T + \delta), a(T + \delta)) = \prod_{z \in Z(s(T))} P_z(s'_z \lvert s_z(T + \delta),a_z(T + \delta))$$ for all $s' \in \mathcal S$ for $\delta \in \{0, \ldots, c\}$ where $c = \lfloor \frac{\mathcal V - \mathcal R}{2}\rfloor$.
\end{lemma}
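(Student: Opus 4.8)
The plan is to mirror the proof of the Dependence Time Lemma for Rewards (\cref{dtl}) almost verbatim for the first claim, and then to perform a regrouping of the transition product for the second. \textbf{Step 1 (separation estimate).} I would reuse the distance computation from \cref{dtl}: if agents $j,k \in \mathcal N$ lie in different partitions of $Z(s(T))$, then $d(s_j(T), s_k(T)) > \mathcal V$, and since the trajectory is a local movement trajectory each agent moves at most $\delta \le c$ in $\delta$ steps, so by the triangle inequality $d(s_j(T+\delta), s_k(T+\delta)) \ge d(s_j(T),s_k(T)) - \delta - \delta > \mathcal V - 2\delta \ge \mathcal V - 2c \ge \mathcal R$. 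Hence any pair of agents separated in $Z(s(T))$ is more than $\mathcal R$ apart at time $T + \delta$, which already gives that $D(s(T+\delta))$ is a finer partition than $Z(s(T))$.

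\textbf{Step 2 (no dependence chain crosses a $Z(s(T))$-block).} Because every pair $i \in z$, $j \notin z$ with $z \in Z(s(T))$ satisfies $d(s_i(T+\delta), s_j(T+\delta)) > \mathcal R$ by Step 1, no chain of agents within distance $\mathcal R$ of the form used to define the equivalence relation for $D(s(T+\delta))$ can leave a block $z$ of $Z(s(T))$. Consequently the restriction of $D(s(T+\delta))$ to the agents of $z$ coincides with $D(s_z(T+\delta))$, the dependence partition of the sub-configuration $s_z(T+\delta)$, for every $z \in Z(s(T))$; in particular the collection $\{\, g : g \in D(s_z(T+\delta)),\ z \in Z(s(T))\,\}$ is exactly $D(s(T+\delta))$, and each $g \in D(s(T+\delta))$ is contained in a unique block of $Z(s(T))$.

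\textbf{Step 3 (regroup the product).} Applying the generalized transition definition $P(s'\lvert s,a) = \prod_{g \in D(s)} P_g(s'_g \lvert s_g, a_g)$ at $(s(T+\delta), a(T+\delta))$ and splitting the product over $g \in D(s(T+\delta))$ according to the block $z \in Z(s(T))$ that contains $g$ (valid by Step 2), one obtains
$$P(s'\lvert s(T+\delta), a(T+\delta)) = \prod_{z \in Z(s(T))} \ \prod_{g \in D(s_z(T+\delta))} P_g(s'_g \lvert s_g(T+\delta), a_g(T+\delta)) = \prod_{z \in Z(s(T))} P_z(s'_z \lvert s_z(T+\delta), a_z(T+\delta)),$$
where the last equality is the definition $P_z(s'_z \lvert s_z, a_z) = \prod_{g \in D(s_z)} P_g(s'_g \lvert s_g, a_g)$ from \cref{generalized}. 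This is the claimed identity, and it holds for every $s' \in \mathcal S$.

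The main obstacle — really the only subtlety — is Step 2: one must verify that forming the dependence partition on the sub-configuration $s_z(T+\delta)$, rather than on the whole state $s(T+\delta)$, does not change the grouping, i.e. that a dependence chain cannot ``shortcut'' through an agent outside $z$. This is precisely where the unit-step speed limit, together with $\mathcal V > \mathcal R$ and $\delta \le c = \lfloor (\mathcal V - \mathcal R)/2 \rfloor$, is used, via the buffer $\mathcal V - 2\delta \ge \mathcal R$ established in Step 1. Everything else is bookkeeping with the product decomposition of $P$ and is structurally identical to the argument in \cref{dtl}.
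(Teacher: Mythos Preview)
Your proposal is correct and follows essentially the same approach as the paper: establish that $D(s(T+\delta))$ is finer than $Z(s(T))$ (the paper simply cites \cref{dtl} rather than repeating the distance computation), then regroup the product $\prod_{g \in D(s(T+\delta))} P_g$ along the blocks of $Z(s(T))$ using the definition $P_z(s'_z \mid s_z, a_z) = \prod_{g \in D(s_z)} P_g(s'_g \mid s_g, a_g)$ from \cref{generalized}. Your Step~2 makes explicit a point the paper absorbs into the phrase ``by definition,'' namely that the restriction of $D(s(T+\delta))$ to a block $z$ coincides with $D(s_z(T+\delta))$, but this is the same underlying argument.
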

\begin{proof}
From the Dependence Time Lemma for Rewards (\cref{dtl}), we know $D(s(T + \delta))$ is a finer partition than $Z(s(T))$. Therefore, by definition $ P(s' \lvert s(T + \delta), a(T + \delta)) = \prod_{g \in D(s(T))} P_g(s'_g \lvert s_g(T + \delta),a_g(T + \delta)) = \prod_{z \in Z(s(T))} P_z(s'_z \lvert s_z(T + \delta),a_z(T + \delta))$.
\end{proof}
Similar to the Standard Locally Interdependent Multi-Agent MDP case, we will present the summaries for the proofs of \cref{extract_opt_bound} and \cref{extract_bound} and provide proofs for the lemmas used.

\begin{proof}\textbf{For \cref{extract_opt_bound} (Generalized)}
\label{general_extract_opt_bound_proof}

\begin{align}
V^*(s) -& V^{\pi^{\xi, \eta}_{exec}}(s)\leq V^*(s) - V^{\pi^{\xi, \eta}}_0(s,Z_{comp}(s)) \label{using_general_extract}\\
&\hspace{10ex} + \bigg(2\gamma^2 + \gamma^{\eta + 1} + \frac{2\gamma( 1+ \gamma) + 4n(1 + \gamma^2)}{1 - \gamma}\bigg) \frac{\gamma^{c - 1}}{1 - \gamma}\nonumber\\
&\leq V^*(s) - V^{\pi^*}_0(s,Z_{comp}(s))\label{using_general_opt_sub}\\
&\hspace{10ex} + \bigg(2\gamma^2 + \gamma^{\eta + 1} + \frac{2\gamma( 1+ \gamma) + 4n(1 + \gamma^2)}{1 - \gamma}\bigg) \frac{\gamma^{c - 1}}{1 - \gamma}\nonumber\\
&\leq  \bigg(2\gamma^{c' - c + 1} + 2\gamma^2 + \gamma^{\eta + 1} + \frac{2\gamma( 1+ \gamma) + 4n(1 + \gamma^2)}{1 - \gamma}\bigg) \frac{\gamma^{c - 1}}{1 - \gamma}\label{using_general_same}
\end{align}
In \cref{using_general_extract}, we use \cref{extract_bound} and in  \cref{using_general_same} use \cref{general_same_policy}.
    
\end{proof}
\begin{proof}\textbf{For \cref{extract_bound}   (Generalized)}
\label{general_extract_bound_proof}
Similar to the proof for the Standard Locally Interdependent Multi-Agent MDP in \cref{extract_bound_proof}, the proof will consist of three steps. 1) rewrite $V^{\pi_{exec}}$ and apply the decomposition above 2) decompose and bound the $\Delta$ difference terms 3) perform additional manipulation to match the objective.
 
\noindent\textbf{Step 1:} We may rewrite the expression for $V^{\pi_{exec}}$ to obtain,
{\small
\begin{align}
    V^{\pi_{exec}}(s) =& \mathbb{E}_{\tau \sim \pi_{exec}\lvert_s}\bigg[\sum_{t = 0}^\infty \gamma^t r(s(t), a(t))\bigg]\nonumber = \sum_{t = 0}^\infty \gamma^t \mathbb{E}_{\tau \sim \pi_{exec}\lvert_s}\bigg[r(s(t), a(t))\bigg]\nonumber\\
     = &\sum_{t = 0}^\infty \gamma^t \bigg(\mathbb{E}_{\tau \sim \pi_{exec}\lvert_s}\bigg[V^{\pi_{comp}}_0(s(t), Z_{exec}(s(t))) \bigg] \nonumber\\
     &\hspace{3.5ex}- \gamma \mathbb{E}_{\tau \sim \pi_{exec}\lvert_s}\bigg[V^{\pi_{comp}}_0(s(t + 1), Z_{comp}(s(t + 1)\cap Z_{exec}(s(t)))  \bigg]\bigg)\label{using_general_condition}\\
    &\hspace{3.5ex}+ \zeta\bigg(4n\frac{(1 + \gamma^2)}{(1 - \gamma)^2}\gamma^{c - 1} \tilde r\bigg) + \zeta\bigg(\frac{\gamma^{c + \eta}}{1 - \gamma} \tilde r\bigg)\nonumber
\end{align}
}
\\
In \cref{using_general_condition} we apply \cref{general_tele_condition}. Then using the decomposition from the Standard Locally Interdependent Multi-Agent MDP proof in \cref{extract_bound_proof}, we obtain
$$V^{\pi_{exec}(s)} = V_0^{\pi_{comp}}(s, Z_{exec}(s)) - \mathbb{E}_{\tau \sim \pi_{exec}\lvert_s}\bigg[\sum_{t = 1}^\infty \gamma^t \Delta_t^\tau\bigg]+ \zeta(4n\frac{(1 + \gamma^2)}{(1 - \gamma)^2}\gamma^{c - 1} \tilde r) + \zeta(\frac{\gamma^{c + \eta}}{1 - \gamma} \tilde r)$$
where $
\Delta_t^\tau = V_0^{\pi_{comp}}(s(t), Z_{comp}(s(t))\cap Z_{exec}(s(t - 1))) - V_0^{\pi_{comp}}(s(t), Z_{exec}(s(t)))$.
\\\\
\textbf{Step 2:} Using \cref{general_z_intersect} we obtain
\begin{align}
\lvert\Delta_t^\tau \rvert&= \lvert V_0^{\pi_{comp}}(s(t), Z_{comp}(s(t))\cap Z_{exec}(s(t - 1))) - V_0^{\pi_{comp}}(s(t), Z_{exec}(s(t)))\rvert\nonumber\\
&\leq  \lvert V_0^{\pi_{comp}}(s(t), Z_{comp}(s(t))\cap Z_{exec}(s(t - 1))) - V_0^{\pi_{comp}}(s(t), Z_{exec}(s(t))\cap Z_{exec}(s(t - 1))) \rvert \nonumber\\
&\hspace{15ex}+ \lvert V_0^{\pi_{comp}}(s(t), Z_{exec}(s(t))\cap Z_{exec}(s(t - 1)))- V_0^{\pi_{comp}}(s(t), Z_{exec}(s(t)))\rvert\nonumber\\
&\leq  \frac{2\gamma^{c + 1}}{1 - \gamma}\tilde r + \frac{2\gamma^c}{1 - \gamma}\tilde r = \frac{2(1 + \gamma)\gamma^c}{1 - \gamma}\tilde r\label{using_general_z_intersect}
\end{align}\\
where in \cref{using_general_z_intersect} we apply \cref{general_z_intersect} twice.
 
Therefore, we have shown
$$\lvert V^{\pi_{exec}}(s) - V_0^{\pi_{comp}}(s, Z_{exec}(s)) \rvert \leq \frac{2(1 + \gamma)\gamma^c}{(1 - \gamma)^2}\tilde r +4n\frac{(1 + \gamma^2)}{(1 - \gamma)^2}\gamma^{c - 1} \tilde r + \frac{\gamma^{c + \eta}}{1 - \gamma} \tilde r.$$
\\
\textbf{Step 3:} Lastly, we replace $Z_{exec}(s)$ with $Z_{comp}(s)$ and adjust the bound as follows
\begin{align*}
\lvert V^{\pi_{exec}}(s) - V_0^{\pi_{comp}}(s, Z_{comp}(s)) \rvert &\leq \lvert V^{\pi_{exec}}(s) - V_0^{\pi_{comp}}(s, Z_{exec}(s)) \rvert \\
&\hspace{10ex}+ \lvert V_0^{\pi_{comp}}(s, Z_{exec}(s)) - V_0^{\pi_{comp}}(s, Z_{comp}(s)) \rvert \\
&  \leq \frac{2\gamma^{c + 1}}{1 - \gamma}\tilde r + \frac{2(1 + \gamma)\gamma^c}{(1 - \gamma)^2}\tilde r +4n\frac{(1 + \gamma^2)}{(1 - \gamma)^2}\gamma^{c - 1} \tilde r + \frac{\gamma^{c + \eta}}{1 - \gamma} \tilde r\\
&= \bigg(2\gamma^2 + \gamma^{\eta + 1} + \frac{2\gamma( 1+ \gamma) + 4n(1 + \gamma^2)}{1 - \gamma}\bigg) \frac{\gamma^{c - 1}}{1 - \gamma}\nonumber\\
\end{align*}
using \cref{general_z_intersect}.
\end{proof}
\subsubsection{Lemmas}
\noindent We now present the lemmas used in the proof above.

\begin{lemma}
\label{general_z_intersect}
Assume a local movement trajectory $(s(t), a(t))$ for $t \in \{0, \ldots, \mathcal T\}$ and $\delta \in \{0,\ldots, c\}$, and  some $T$ such that $T, T- \delta \in \{0, \ldots, \mathcal T\}$. For any $\mathcal P$ finer than $Z_{comp}(s(T))$ and a partition $\mathcal P'$ coarser than $Z_{exec}(s(T - \delta))$, for $h \in \{0,1\}$ we have the following:
\begin{align*}
&\bigg\lvert V^{\pi_{comp}}_{h}(s(T), \mathcal P \cap \mathcal P') -V^{\pi_{comp}}_{h}(s(T),\mathcal P) \bigg\rvert \leq \frac{2\gamma^{c - \delta -h + 1}}{1 - \gamma} \tilde r\\
&\bigg\lvert V^{\pi_{comp}\lvert^{\mathcal P}}_{h}(s(T),\mathcal P \cap\mathcal P') -V^{\pi_{comp}}_{h}(s(T),\mathcal P) \bigg\rvert \leq \frac{2\gamma^{c - \delta -h + 1}}{1 - \gamma} \tilde r\\
&\bigg\lvert V^{\pi_{comp}}_{h}(s(T), \mathcal P \cap\mathcal P') -V^{\pi_{comp}\lvert^{\mathcal P\cap\mathcal P'}}_{h}(s(T),\mathcal P) \bigg\rvert \leq \frac{2\gamma^{c - \delta -h + 1}}{1 - \gamma} \tilde r\\
\end{align*}
\end{lemma}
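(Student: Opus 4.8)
The plan is to follow the proof of \cref{z_intersect} in \cref{indep_lemmas} almost verbatim, with the two changes forced by the generalized setting: transitions are now group-dependent, and the pairwise rewards $\overline r_{i,j}$ are no longer required to vanish beyond distance $\mathcal R$. As there, fix a Cutoff trajectory $((s'(t'),\mathcal P'(t')),a'(t'))$ with $\tau'_h\sim\pi_{comp}\lvert_{s(T),\mathcal P}$, prepend the segment $(s(T-\delta),a(T-\delta)),\dots,(s(T-1),a(T-1))$ of the given local movement trajectory, and observe that the concatenation is again a local movement trajectory, so both \cref{dtl} and \cref{dtl_transition} are available at wall-clock times $T-\delta,\dots,T-\delta+c$.

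First I would split the horizon of $V^{\pi_{comp}}_h(s(T),\mathcal P)$ at the index $c-\delta$. For the initial steps, \cref{dtl} shows the reward decomposes along $Z_{exec}(s(T-\delta))$, hence along the coarser $\mathcal P'$, and therefore (together with the $\mathcal P$-decomposition built into $r^{\mathcal C}$) along $\mathcal P\cap\mathcal P'$; and \cref{dtl_transition} shows the transition factorizes along $D(s'(t'))$, a partition finer than $\mathcal P\cap\mathcal P'$. Consequently the trajectory under $\pi_{comp}\lvert_{s(T),\mathcal P}$ and the trajectory under $\pi_{comp}\lvert^{\mathcal P}_{s(T),\mathcal P\cap\mathcal P'}$ can be coupled to coincide for those first $c-\delta$ steps (same policy behavior, same factorized transition, same reward). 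I would then rewrite the coupled initial segment plus the $(\mathcal P\cap\mathcal P')$-internal part of the tail as $V^{\pi_{comp}\lvert^{\mathcal P}}_h(s(T),\mathcal P\cap\mathcal P')$, leaving over $t'>c-\delta$ two remainder contributions: the reward cross-terms on edges in $E(\mathcal P)\cap E^c(\mathcal P\cap\mathcal P')$, bounded by $\tilde r$ per step (not $\tilde r^c_{\mathcal P'}$, since $\overline r_{i,j}$ is unrestricted), and the error from the two trajectories' transitions diverging after step $c-\delta$, bounded by $\tilde r$ per step as well. Summing the geometric tail gives $\frac{2\gamma^{c-\delta-h+1}}{1-\gamma}\tilde r$. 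Passing from $\pi_{comp}\lvert^{\mathcal P}$ to $\pi_{comp}$ via the Deconstructive Improvement property of a Consistent Performance Policy (\cref{monotone}), then running the symmetric argument from $V^{\pi_{comp}}_h(s(T),\mathcal P\cap\mathcal P')$ and invoking the Constructive Improvement property, produces the same four-line inequality chain as in the standard proof; the three pairwise differences read off from that chain are exactly the three claimed bounds.

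The main obstacle is the transition-divergence estimate: in the transition-independent case the two trajectories are identically distributed and the only remainder is a reward cross-term, whereas here I must verify carefully that \cref{dtl_transition} does pin the two trajectories together through step $c-\delta$ and that any disagreement afterwards perturbs the value by at most $\sum_{t'>c-\delta}\gamma^{t'-h}\tilde r$ for each of the two continuations — which is precisely the extra factor of two over \cref{z_intersect}. Beyond that, the argument is the same bookkeeping as in the transition-independent lemma, with $\tilde r^c_{\mathcal P'}$ replaced throughout by $\tilde r$.
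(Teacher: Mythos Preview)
Your proposal is correct and follows the paper's argument essentially verbatim: split at step $c-\delta$, couple the two processes on the initial segment via both Dependence Time Lemmas, bound the post-$c-\delta$ discrepancy, and then run the Constructive/Deconstructive Improvement chain exactly as in \cref{z_intersect}. The one minor remark is that the paper's bookkeeping for the tail is slightly cruder than your ``cross-terms plus transition divergence'' decomposition: it simply writes the remainder as the full tail under $\pi_{comp}\lvert_{s(T),\mathcal P}$ with $\mathcal P$-reward minus the full tail under $\pi_{comp}\lvert^{\mathcal P}_{s(T),\mathcal P\cap\mathcal P'}$ with $(\mathcal P\cap\mathcal P')$-reward, and bounds each tail separately by $\frac{\gamma^{c-\delta-h+1}}{1-\gamma}\tilde r$; this is exactly the ``$\tilde r$ for each of the two continuations'' you describe at the end, and avoids having to track edge sets once the trajectories have diverged.
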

\begin{proof}
Notice that unlike \cref{z_intersect} the lemma statement has $\tilde r$ in place of $\tilde r_{P'}^c$. This will be the difference which will introduce $n$ into the constants in the final bound.  
Similar to \cref{z_intersect}, we will evoke the Dependence Time Lemma for Rewards in \cref{dtl} but also we will evoke the Dependence Time Lemma for Transitions shown in \cref{dtl_transition} to obtain
 
\begin{align} 
 &V^{\pi_{comp}}_h (s(T), \mathcal P) \nonumber\\
 &= \mathbb{E}_{\tau'_h \sim \pi_{comp}\lvert_{s(T), \mathcal P}}\bigg[\sum_{t' = h}^{c + \eta} \gamma^{t' - h} r^{\mathcal C}((s'(t'), \mathcal P'(t')), a'(t'))\bigg]\nonumber\\
 &= \mathbb{E}_{\tau'_h \sim \pi_{comp}\lvert_{s(T), \mathcal P\cap \mathcal P'}^{\mathcal P}}\bigg[\sum_{t' = h}^{c - \delta} \gamma^{t' - h} \sum_{p \in \mathcal P \cap \mathcal P'} r_{p}(s_{p}'(t'), a_{p}'(t'))\bigg] \label{using_general_dtl_val} \\
 & \hspace{15ex}+ \mathbb{E}_{\tau' \sim \pi_{comp}\lvert_{s(T), \mathcal P}}\bigg[\sum_{t' = c - \delta + 1}^{c + \eta} \gamma^{t' - h}\sum_{p \in \mathcal P} r_{p}(s_{p}'(t'), a_{p}'(t'))\bigg]\nonumber\\
 &= \mathbb{E}_{\tau'' \sim \pi_{comp}\lvert^{\mathcal P}_{s(T), \mathcal P\cap \mathcal P'}}\bigg[\sum_{t' = h}^{c + \eta} \gamma^{t' - h} \sum_{p \in \mathcal P \cap \mathcal P'} r_{p}(s_{p}'(t'), a_{p}'(t'))\bigg] \label{using_general_move_over}\\
 & \hspace{15ex}+ \mathbb{E}_{\tau' \sim \pi_{comp}\lvert_{s(T), \mathcal P}}\bigg[\sum_{t' = c - \delta + 1}^{c + \eta} \gamma^{t'-h}\sum_{p \in \mathcal P} r_p(s_p'(t'), a_p'(t'))\bigg]\nonumber\\
 &\hspace{15ex}- \mathbb{E}_{\tau'' \sim \pi\lvert^{\mathcal P}_{s(T), \mathcal P\cap \mathcal P'}}\bigg[\sum_{t' = c - \delta + 1}^{c + \eta} \gamma^{t'-h} \sum_{p \in \mathcal P \cap \mathcal P'} r_p(s_p'(t'), a_p'(t'))\bigg] \nonumber\\
 &\leq V_{h}^{\pi_{comp}\lvert^{\mathcal P}}(s(T), \mathcal P \cap \mathcal P') +\frac{2\gamma^{c - \delta -h + 1}}{1 - \gamma} \tilde r\label{using_general_val_def}\\
 &\leq V_{h}^{\pi_{comp}}(s(T), \mathcal P \cap \mathcal P')  +\frac{2\gamma^{c - \delta -h + 1}}{1 - \gamma} \tilde r\label{using_general_monotone_val}
\end{align}
In \cref{using_general_dtl_val} we apply the Dependence Time Lemma for Rewards and Transition from \cref{dtl} and \cref{dtl_transition}. In \cref{using_general_move_over} notice the difference with the standard case in \cref{z_intersect}. Here, since we have transition dependence and extended reward dependence, we lose the interpretability of the second term and must introduce a third term to complete the value function in the first term. This then requires the use of $\tilde r$ to bound these terms. Lastly, for \cref{using_monotone_val}, we use the definition of the Consistent Performance Policy.
 
We use similar steps for the other direction as follows:
\begin{align*} 
 &V^{\pi_{comp}}_h (s(T), \mathcal P\cap \mathcal P') \\
 &= \mathbb{E}_{\tau' \sim \pi_{comp}\lvert_{s(T), \mathcal P\cap \mathcal P'}}\bigg[\sum_{t' = h}^{c + \eta} \gamma^{t' - h} r((s'(t'), \mathcal P(t')), a'(t'))\bigg]\\
 &= \mathbb{E}_{\tau' \sim \pi_{comp}\lvert_{s(T), \mathcal P}^{\mathcal P\cap \mathcal P'}}\bigg[\sum_{t' = h}^{c - \delta} \gamma^{t' - h} \sum_{p \in \mathcal P} r_p(s_p'(t'), a_p'(t'))\bigg] \\
 & \hspace{15ex}+ \mathbb{E}_{\tau' \sim \pi_{comp}\lvert_{s(T), \mathcal P\cap \mathcal P'}}\bigg[\sum_{t' = c - \delta + 1}^{c + \eta} \gamma^{t' - h}\sum_{p \in P\mathcal \cap \mathcal P'} r_p(s_p'(t'), a_p'(t'))\bigg]\\
 &= \mathbb{E}_{\tau'' \sim \pi_{comp}\lvert^{\mathcal P\cap \mathcal P'}_{s(T), \mathcal P}}\bigg[\sum_{t' = 0}^{c + \eta} \gamma^{t' - h} \sum_{p \in \mathcal P } r_p(s_p'(t'), a_p'(t'))\bigg] \\
  & \hspace{15ex}+ \mathbb{E}_{\tau' \sim \pi_{comp}\lvert_{s(T), \mathcal P}}\bigg[\sum_{t' = c - \delta + 1}^{c + \eta} \gamma^{t'}\sum_{p \in \mathcal P\cap \mathcal P'} r_p(s_p'(t'), a_p'(t'))\bigg]\\
 &\hspace{15ex}- \mathbb{E}_{\tau'' \sim \pi_{comp}\lvert_{s(T), \mathcal P}^{\mathcal P\cap \mathcal P'}}\bigg[\sum_{t' = c - \delta + 1}^{c + \eta} \gamma^{t'} \sum_{p \in \mathcal P } r_p(s_p'(t'), a_p'(t'))\bigg] \\
 &\leq V_{h}^{\pi_{comp}\lvert^{\mathcal P\cap \mathcal P'}}(s(T), \mathcal P) + \frac{2\gamma^{c - \delta - h + 1}}{1 - \gamma} \tilde r\\
 &\leq V_{h}^{\pi_{comp}}(s(T), \mathcal P)  + \frac{2\gamma^{c - \delta - h + 1}}{1 - \gamma} \tilde r.
\end{align*}
To summarize our findings above, we have established
\begin{align*}
V_h^{\pi_{comp}}(s(T), \mathcal P) &\leq V_h^{\pi_{comp}\lvert^\mathcal P}(s(T), \mathcal P\cap \mathcal P') + \frac{2\gamma^{c - \delta -h + 1}}{1 - \gamma} \tilde r \\
&\leq V_h^{\pi_{comp}}(s(T), \mathcal P\cap \mathcal P') + \frac{2\gamma^{c - \delta -h + 1}}{1 - \gamma} \tilde r\\
&\leq V_h^{\pi_{comp}\lvert^{\mathcal P\cap \mathcal P'}}(s(T), \mathcal P) + \frac{4\gamma^{c - \delta -h + 1}}{1 - \gamma} \tilde r\\
&\leq V_h^{\pi_{comp}}(s(T), \mathcal P) + \frac{4\gamma^{c - \delta -h + 1}}{1 - \gamma} \tilde r.
\end{align*}
This gives us our desired results
\begin{align*}
&\bigg\lvert V^{\pi_{comp}}_{h}(s(T), \mathcal P \cap \mathcal P') -V^{\pi_{comp}}_{h}(s(T),\mathcal P) \bigg\rvert \leq \frac{2\gamma^{c - \delta -h + 1}}{1 - \gamma} \tilde r\\
&\bigg\lvert V^{\pi_{comp}\lvert^{\mathcal P}}_{h}(s(T),\mathcal P \cap\mathcal P') -V^{\pi_{comp}}_{h}(s(T),\mathcal P) \bigg\rvert \leq \frac{2\gamma^{c - \delta -h + 1}}{1 - \gamma} \tilde r\\
&\bigg\lvert V^{\pi_{comp}}_{h}(s(T), \mathcal P \cap\mathcal P') -V^{\pi_{comp}\lvert^{\mathcal P\cap\mathcal P'}}_{h}(s(T),\mathcal P) \bigg\rvert \leq \frac{2\gamma^{c - \delta -h + 1}}{1 - \gamma} \tilde r\\
\end{align*}
\end{proof}

\begin{corollary}
\label{general_make_coarser}
Consider a local movement trajectory $(s(t), a(t))$ for $t \in \{0, \ldots, \mathcal T\}$ with some $T \in \{0, \ldots, \mathcal T\}$ and $\delta \in \{0,\ldots, c\}$, such that $T- \delta \in \{0, \ldots, \mathcal T\}$. Let $\mathcal P$ be some partition finer than $Z_{comp}(s(T))$, $z \in Z_{exec}(s(T - \delta))$, 
    and  $h \in \{0,1\}$. Denoting $\mathcal P_z = \{z, \mathcal N \setminus z\}$ we have\\
    $$\bigg\lvert [V_h^{\pi_{comp}}]_z (s(T), \mathcal P\cap \mathcal P_z) - [V_h^{\pi_{comp}\lvert ^{\mathcal P}}]_z(s(T), \mathcal P\cap\mathcal P_z) \bigg\rvert \leq 4\frac{\gamma^{c - \delta -h + 1}}{1 - \gamma} \tilde r.$$
\end{corollary}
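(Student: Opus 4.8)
The plan is to replay the proof of \cref{make_coarser} line by line, with \cref{general_z_intersect} playing the role of \cref{z_intersect}. The only substantive changes are that the partition-restricted reward bound $\tilde r_{\mathcal P_z}$ is replaced everywhere by the global bound $\tilde r$, and that the leading constant doubles from $2$ to $4$, reflecting the extra factor of $2$ already present in \cref{general_z_intersect} (this $\tilde r$ versus $\tilde r_{\mathcal P_z}$ swap is precisely what later introduces the $n$ appearing in the downstream constants).

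First I would bound the \emph{full-partition} difference. Since $z \in Z_{exec}(s(T-\delta))$, the partition $\mathcal P_z = \{z, \mathcal N\setminus z\}$ is coarser than $Z_{exec}(s(T-\delta))$, so \cref{general_z_intersect} is applicable with $\mathcal P' = \mathcal P_z$. By the triangle inequality,
\begin{align*}
\bigl\lvert V_h^{\pi_{comp}}(s(T), \mathcal P\cap \mathcal P_z) - V_h^{\pi_{comp}\lvert^{\mathcal P}}(s(T), \mathcal P\cap \mathcal P_z) \bigr\rvert
&\le \bigl\lvert V_h^{\pi_{comp}}(s(T), \mathcal P\cap \mathcal P_z) - V_h^{\pi_{comp}}(s(T), \mathcal P) \bigr\rvert \\
&\quad + \bigl\lvert V_h^{\pi_{comp}}(s(T), \mathcal P) - V_h^{\pi_{comp}\lvert^{\mathcal P}}(s(T), \mathcal P\cap \mathcal P_z) \bigr\rvert,
\end{align*}
and applying the first and second bullets of \cref{general_z_intersect} to the two terms bounds the right-hand side by $\frac{4\gamma^{c-\delta-h+1}}{1-\gamma}\tilde r$.

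Next I would descend from this full-partition estimate to the $z$-component estimate, exactly as in \cref{make_coarser}. Because $\pi_{comp}$ is a proper cutoff policy and every block of $\mathcal P\cap\mathcal P_z$ sits inside either $z$ or $\mathcal N\setminus z$, the value decomposes as $V_h = [V_h]_z + [V_h]_{\mathcal N\setminus z}$ at $(s(T),\mathcal P\cap\mathcal P_z)$ for both $\pi_{comp}$ and $\pi_{comp}\lvert^{\mathcal P}$. Applying the Deconstructive Improvement clause of the Consistent Performance Policy definition (with finer partition $\mathcal P\cap\mathcal P_z$ and the policy acting as $\mathcal P$) blockwise and summing yields $[V_h^{\pi_{comp}\lvert^{\mathcal P}}]_z \le [V_h^{\pi_{comp}}]_z$ and $[V_h^{\pi_{comp}\lvert^{\mathcal P}}]_{\mathcal N\setminus z} \le [V_h^{\pi_{comp}}]_{\mathcal N\setminus z}$ at $(s(T),\mathcal P\cap\mathcal P_z)$. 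The first inequality is the lower bound $0 \le [V_h^{\pi_{comp}}]_z - [V_h^{\pi_{comp}\lvert^{\mathcal P}}]_z$; the second lets me discard the nonnegative $\mathcal N\setminus z$ part of the full-partition difference, so that $[V_h^{\pi_{comp}}]_z - [V_h^{\pi_{comp}\lvert^{\mathcal P}}]_z$ is at most the full-partition difference, hence at most $\frac{4\gamma^{c-\delta-h+1}}{1-\gamma}\tilde r$. Combining the two bounds gives the claim.

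There is no real obstacle here beyond bookkeeping. The only points that require care are confirming the coarseness hypothesis $\mathcal P_z$ coarser than $Z_{exec}(s(T-\delta))$ so that \cref{general_z_intersect} may be invoked, and checking that the blocks of $\mathcal P\cap\mathcal P_z$ respect the $z$ versus $\mathcal N\setminus z$ split so that the proper-cutoff decomposition and the blockwise use of Deconstructive Improvement are legitimate; both are identical to the standard case. Everything else transfers verbatim, with $\tilde r$ in place of $\tilde r_{\mathcal P_z}$ and $4$ in place of $2$.
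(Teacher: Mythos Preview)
Your proposal is correct and follows essentially the same route as the paper: bound the full-partition difference via triangle inequality and \cref{general_z_intersect}, decompose $V_h$ over $z$ and $\mathcal N\setminus z$, and use Deconstructive Improvement blockwise to extract the $z$-component estimate with lower bound $0$. The paper's own proof is terser but identical in substance, simply citing ``the same steps as \cref{make_coarser}'' with \cref{general_z_intersect} substituted.
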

\begin{proof}
Following the same steps as \cref{make_coarser} but instead using \cref{general_z_intersect}, we obtain
\begin{align*}
  &\bigg\lvert V_h^{\pi_{comp}} (s(T), \mathcal P\cap \mathcal P_z) - V_h^{\pi_{comp}\lvert ^{\mathcal P}}(s(T), \mathcal P\cap \mathcal P_z) \bigg\rvert \\
  &\leq \bigg\lvert V_h^{\pi_{comp}} (s(T), \mathcal P\cap \mathcal P_z) - V_h^{\pi_{comp}}(s(T), \mathcal P) \bigg\rvert+ \bigg\lvert  V_h^{\pi_{comp}}(s(T), \mathcal P)- V_h^{\pi_{comp}\lvert ^{\mathcal P}}(s(T), \mathcal P\cap \mathcal P_z) \bigg\rvert \\
  &\leq 4\frac{\gamma^{c - \delta -h + 1}}{1 - \gamma} \tilde r.
\end{align*}
Using the definition of Consistent Performance Policy, we can also establish in the other direction
\begin{align}
  &V_h^{\pi_{comp}} (s(T), \mathcal P\cap \mathcal P_z) - V_h^{\pi_{comp}\lvert ^{\mathcal P}}(s(T), \mathcal P\cap \mathcal P_z) \nonumber\\
  &= [V_h^{\pi_{comp}}]_{z} (s(T), \mathcal P\cap \mathcal P_z) + [V_h^{\pi_{comp}}]_{\mathcal N\setminus z} (s(T), \mathcal P\cap \mathcal P_z)\nonumber \\
  &\hspace{10ex} - [V_h^{\pi_{comp}\lvert ^{\mathcal P}}]_z(s(T), \mathcal P\cap \mathcal P_z)- [V_h^{\pi_{comp}\lvert ^{\mathcal P}}]_{\mathcal N\setminus z}(s(T), \mathcal P\cap \mathcal P_z)\nonumber\\
  &\geq [V_h^{\pi_{comp}}]_{z} (s(T), \mathcal P\cap \mathcal P_z) - [V_h^{\pi_{comp}\lvert^\mathcal P}]_{z} (s(T), \mathcal P\cap \mathcal P_z).\nonumber
\end{align}
\\
Combining the results above, we have
$0 \leq [V_h^{\pi_{comp}}]_{z} (s(T), \mathcal P\cap \mathcal P_z) - [V_h^{\pi_{comp}\lvert^\mathcal P}]_{z} (s(T), \mathcal P\cap \mathcal P_z) \leq 4\frac{\gamma^{c - \delta -h + 1}}{1 - \gamma} \tilde r$.
\end{proof}

\begin{lemma}
\label{general_mono_horizon}
For any state $s$, partition $\mathcal P$ finer than $Z_{comp}(s)$, and $p \in \mathcal P$, 
$$\bigg\lvert [V_0^{\pi_{comp}}]_p(s,\mathcal P) - [V_1^{\pi_{comp}}]_p(s,\mathcal P)\bigg\rvert \leq \gamma^{c + \eta} \tilde r_p$$
\end{lemma}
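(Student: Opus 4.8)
The plan is to follow the proof of \cref{mono_horizon} essentially verbatim, since that argument never invokes transition independence or the Dependence Time Lemma: it rests solely on the Consistent Performance Policy conditions together with the fact that truncating one step off a horizon-$(c+\eta)$ return for a fixed group $p$ changes that return by at most $\gamma^{c+\eta}\tilde r_p$. The only point requiring a moment's verification in the Generalized setting is the reward bound $\lvert r^{\mathcal C}_p((s,\mathcal P),a)\rvert \le \tilde r_p$. This still holds because, under the definitions in \cref{generalized}, $r^{\mathcal C}_p = \sum_{p'\in\mathcal P\cap\{p\}}\sum_{g\in D(s_{p'})} r_g(s_g,a_g) = \sum_{p'\in\mathcal P\cap\{p\}}\sum_{g\in D(s_{p'})}\sum_{i,j\in g}\overline r_{i,j}(s_i,a_i,s_j,a_j)$, and since $D(\cdot)$ partitions the agents of $p$, each pair $(i,j)$ with $i,j\in p$ contributes $\overline r_{i,j}$ at most once; the removal of the $d(s_i,s_j)>\mathcal R$ restriction on $\overline r_{i,j}$ does not affect this bound.

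For the upper bound I would split $[V_0^{\pi_{comp}}]_p(s,\mathcal P) = [V_{0,c+\eta-1}^{\pi_{comp}}]_p(s,\mathcal P) + \mathbb{E}_{\tau_0^{c+\eta}\sim\pi_{comp}\lvert_{s,\mathcal P}}\big[\gamma^{c+\eta} r^{\mathcal C}_p((s(c+\eta),\mathcal P(c+\eta)),a(c+\eta))\big]$ and then apply the 1-Step Contracted Improvement property $[V_{0,c+\eta-1}^{\pi_{comp}}]_p(s,\mathcal P) \le [V_1^{\pi_{comp}}]_p(s,\mathcal P)$, which gives $[V_0^{\pi_{comp}}]_p(s,\mathcal P) - [V_1^{\pi_{comp}}]_p(s,\mathcal P) \le \gamma^{c+\eta}\tilde r_p$. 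For the lower bound I would bring in the stationary Proper Cutoff Policy $\pi_{c+\eta+1}$ guaranteed by the Consistent Performance Policy assumption, form $\pi' = (\pi_h)_{h=0}^{c+\eta+1}$, and write $[V_1^{\pi_{comp}}]_p(s,\mathcal P)$ as $[V_1^{\pi'}]_p(s,\mathcal P)$ minus its discounted step-$(c+\eta+1)$ reward; the 1-Step Displaced Improvement property $[V_{1,c+\eta+1}^{\pi'}]_p(s,\mathcal P) \le [V_0^{\pi_{comp}}]_p(s,\mathcal P)$ then yields $[V_0^{\pi_{comp}}]_p(s,\mathcal P) - [V_1^{\pi_{comp}}]_p(s,\mathcal P) \ge -\gamma^{c+\eta}\tilde r_p$. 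Combining the two directions gives the stated inequality.

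I do not expect any genuine obstacle here, which is why the constant is unchanged from the standard case: the Consistent Performance Policy conditions are defined identically in the Generalized setting (per \cref{monotone} and \cref{generalized}), and the per-group Cutoff reward bound $\tilde r_p$ survives the relaxation of the reward locality assumption. The only care needed is bookkeeping the one-step truncation correctly, exactly as in \cref{mono_horizon}.
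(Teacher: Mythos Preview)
Your proposal is correct and matches the paper's approach exactly: the paper itself omits the proof, stating it is identical to \cref{mono_horizon}, and your outline reproduces that argument together with the (correct) observation that the per-group reward bound $\lvert r^{\mathcal C}_p\rvert \le \tilde r_p$ survives the generalized reward assumption.
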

This proof is omitted as it is identical to \cref{mono_horizon}.

\begin{lemma}
    \label{general_tele_condition} 
For any $T \in \{0,1,\dots\}$, we have

    \begin{align*}
    &\mathbb{E}_{\tau \sim \pi_{exec}\lvert_s}\bigg[r(s(T), a(T))\bigg]
     = \mathbb{E}_{\tau \sim \pi_{exec}}\bigg[V^{\pi_{comp}}_0(s(T), Z_{exec}(s(T))) \bigg] \\
     &\hspace{10ex}- \gamma \mathbb{E}_{\tau \sim \pi_{exec}}\bigg[V^{\pi_{comp}}_0(s(T + 1), Z_{comp}(s(T + 1)\cap Z_{exec}(s(T)))  \bigg]+ \zeta(4n\frac{(1 + \gamma^2)}{1 - \gamma}\gamma^{c - 1} \tilde r) + \zeta(\gamma^{c + \eta} \tilde r)
    \end{align*}
    
\end{lemma}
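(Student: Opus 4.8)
The plan is to follow the proof of \cref{tele_condition} almost verbatim, substituting the generalized lemmas at the points where transition independence and the refined reward bound $\tilde r_{\mathcal P_z}^c$ were used, and tracking the extra factor of $n$ that this substitution forces. First I would fix a state $s$ and expand $\mathbb{E}_{a \sim \pi_{exec}(\cdot\lvert s)}[r(s,a)] = \sum_{z \in Z_{exec}(s)} \mathbb{E}_{a_z \sim \pi_{exec}(\cdot\lvert s_z)}[r_z(s_z,a_z)]$, then rewrite each summand through the extraction distribution as $\mathbb{E}_{(s_p',\{p\}) \sim \rho(I(\tau^t,z))}\big[\mathbb{E}_{a_p \sim [\pi_0]_p(\cdot\lvert(s_p',\{p\}))}[r^{\mathcal C}_z((s_p',\{p\}),a_p)]\big]$, exactly as in the standard proof. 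The validity constraint $z \in Z_{exec}(s_p')$ and $[s_p']_z = s_z$ lets me identify this with a reward evaluated at the partitioned belief state $(s_p', \mathcal P_z)$, where $\mathcal P_z = \{z, p\setminus z\}$.

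Next, for a fixed belief pair I would apply the one-step Bellman expansion of $[V_0^{\pi_{comp}\lvert^{\{p\}}}]_z(s_p',\mathcal P_z)$ into its immediate reward plus $\gamma$ times an expectation (over the Cutoff transition) of $[V_1^{\pi_{comp}\lvert^{Z_{comp}(s_p'')}}]_z$ at the successor. This identity still holds verbatim in the generalized setting because the Cutoff transition factors as $\prod_{p \in \mathcal P} P(s_p'\lvert s_p, a_p)$ with $P(s_p'\lvert s_p,a_p) = \prod_{g \in D(s_p)} P_g(s_g'\lvert s_g,a_g)$. I would then run the same three-step rewriting as in \cref{tele_condition}: replace the improper policies by $\pi_{comp}$ using \cref{general_make_coarser}; trade $[V_1^{\pi_{comp}}]_z$ for $[V_0^{\pi_{comp}}]_z$ via \cref{general_mono_horizon}; and re-express everything in terms of the true environment transition $P$ and the extraction-induced law of the successor using the three identities of \cref{extra_properties} applied to the generalized value function. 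The only arithmetic change is that \cref{general_make_coarser} contributes $\zeta$-terms proportional to $\gamma^{c+1}\tilde r$ and $\gamma^{c-1}\tilde r$ (and \cref{general_mono_horizon} a $\gamma^{c+\eta}\tilde r_z$ term) rather than the sharper $\tilde r_{\mathcal P_z}^c$, $\tilde r_{\mathcal P_z}$ bounds of the standard case. Summing over $z \in Z_{exec}(s)$ and using $\sum_z \tilde r = |Z_{exec}(s)|\,\tilde r \leq n\tilde r$ together with $\sum_z \tilde r_z \leq \tilde r$ collapses the per-group errors into $\zeta(4n\frac{(1+\gamma^2)}{1-\gamma}\gamma^{c-1}\tilde r)$ and $\zeta(\gamma^{c+\eta}\tilde r)$; this replacement of $\sum_z \tilde r_{\mathcal P_z}^c \leq 2\tilde r$ by $\sum_z \tilde r \leq n\tilde r$ is exactly where the factor of $n$ enters the constant. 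Finally I would lift the single-step identity at $s$ to arbitrary time $T$ by the tower property: condition on $(s(T), \tau^{T-1})$, apply the one-step identity to the inner conditional expectation of $r(s(T), a(T))$, then take the outer expectation, observing that the $\zeta$-bounds are uniform in the conditioning and hence survive — the same manipulation that closes the standard-case proof.

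I expect the main obstacle to be confirming that the belief-state bookkeeping in the third rewriting step survives transition dependence — in particular that the Cutoff successor $[s_p'']_z$ started from the belief state and the true successor $[[s_{next}]_p]_z$ started from $s$ under $\pi_{exec}$ remain equivalently distributed, and that the visibility-group identity $Z_{comp}([s_{next}]_p) \cap Z_{exec}(s_p) \cap \{z\} = Z_{comp}(s_p'') \cap \mathcal P_z \cap \{z\}$ still holds event-wise. This should go through because the one-step transition continues to factorize across $Z_{exec}(s)$ (since $D(s)$ is finer than $Z_{exec}(s)$) and $\pi_{exec}$ acts group-by-group, so marginalizing onto the agents of $p$ is consistent between the two MDPs; the genuinely new ingredient relative to the standard proof is that wherever a multi-step cross-group factorization is needed it must be justified through \cref{dtl_transition} rather than plain transition independence, but that justification has already been absorbed into \cref{general_make_coarser} and \cref{general_z_intersect}.
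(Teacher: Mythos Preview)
Your proposal is correct and follows essentially the same approach as the paper's proof: expand the reward over $Z_{exec}(s)$, push through the extraction distribution, apply the one-step Bellman expansion at the belief state, then successively invoke \cref{general_make_coarser}, \cref{general_mono_horizon}, and \cref{extra_properties}, and finally sum over $z$ and lift to time $T$ by conditioning. You also correctly pinpoint where the factor of $n$ enters (the per-$z$ error is now governed by $\tilde r$ rather than $\tilde r_{\mathcal P_z}^c$, so the sum over groups no longer telescopes to $2\tilde r$) and why the one-step distributional equivalence between $[s_p'']_z$ and $[[s_{next}]_p]_z$ survives transition dependence.
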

\begin{proof}
    Following the same steps as \cref{tele_condition},
    \begin{align}
    \mathbb{E}_{a \sim \pi_{exec}(\cdot \lvert s)}&[r(s, a)] = \mathbb{E}_{a \sim \pi_{exec}(\cdot \lvert s)}\bigg[\sum_{z \in Z_{exec}(s)}r_z(s_z, a_z)\bigg] \nonumber\\
     &= \sum_{z \in Z_{exec}(s)}\mathbb{E}_{a_z \sim \pi_{exec}(\cdot \lvert s_z)}\bigg[r_z(s_z, a_z)\bigg] \nonumber\\
    & =\sum_{z \in Z_{exec}(s)}\mathbb{E}_{(s_p', \{p\}) \sim \rho(I(\tau^t, z))}\bigg[ \mathbb{E}_{a_p \sim [\pi_{0}]_p(\cdot \lvert (s_p', \{p\}))}\bigg[r^{\mathcal C}_z((s_p', \{p\}), a_p)\bigg]\bigg].\label{using_general_first_given}
    \end{align}
 
Unrolling the value function we obtain
\begin{align*}
[V^{\pi_{comp}}_{h}&\lvert^{\{p\}}]_z(s'_p, \mathcal P_z) = \mathbb{E}_{\tau_{h} \sim \pi_{0}\lvert_{s'_p,\mathcal P_z}^{\{p\}}} \bigg[\sum_{t = h}^{c + \eta} \gamma^{t - h} r^{\mathcal C}_z((s''_p(t), \mathcal P''(t)), a''(t))\bigg]\\
&= \mathbb{E}_{a_p' \sim [\pi_{0}]_p(\cdot \lvert (s_p', \{p\}))}[r^{\mathcal C}_z((s'_p, \mathcal P_z), a'_p)] \\
&+ \gamma\mathbb{E}_{a_p' \sim [\pi_{0}]_p(\cdot \lvert (s_p', \{p\}))}\bigg[\mathbb{E}_{(s''_p, Z_{comp}(s''_p) \cap \mathcal P_z) \sim P^{\mathcal C}(\cdot \lvert (s_p', \mathcal P_z), a_p')}\bigg[[V_1^{\pi_{comp}\lvert^{\{p\}\cap Z_{comp}(s''_p)}}]_z(s_p'', Z_{comp}(s''_p) \cap \mathcal P_z) \bigg] \bigg]
\end{align*}
 
By our assumption on $\rho(I(\tau^t, z))$ we will have $z \in Z_{exec}(s_p')$ and therefore $r^{\mathcal C}_z((s'_p, \mathcal P_z), a'_p) = r^{\mathcal C}_z((s_p', \{p\}), a_p')$. Rearranging terms, we may apply our other lemmas as follows:
    \begin{align}
        &\mathbb{E}_{a_p' \sim [\pi_{0}]_p(\cdot \lvert (s_p', \{p\}))}[r^{\mathcal C}_z((s_p', \{p\}), a_p')] \\
        &= [V_0^{\pi_{comp}\lvert^{\{p\}}}]_z(s_p', \mathcal P_z) \nonumber\\
        &\hspace{5ex}- \gamma \mathbb{E}_{a_p' \sim [\pi_{0}]_p(\cdot \lvert (s_p', \{p\}))}\bigg[\mathbb{E}_{(s''_p, Z_{comp}(s''_p) \cap \mathcal P_z) \sim P^{\mathcal C}(\cdot \lvert (s_p', \mathcal P_z), a_p')}\bigg[[V_1^{\pi_{comp}\lvert^{Z_{comp}(s''_p)}}]_z(s_p'', Z_{comp}(s''_p) \cap \mathcal P_z) \bigg] \bigg] \nonumber\\
        &= [V_0^{\pi_{comp}}]_z(s_p',\mathcal P_z)  \nonumber\\
        &\hspace{5ex}- \gamma \mathbb{E}_{a_p' \sim [\pi_{0}]_p(\cdot \lvert (s_p', \{p\}))}\bigg[\mathbb{E}_{(s''_p, Z_{comp}(s''_p) \cap \mathcal P_z) \sim P^{\mathcal C}(\cdot \lvert (s_p', \mathcal P_z), a_p')}\bigg[[V_1^{\pi_{comp}}]_z(s_p'', Z_{comp}(s''_p) \cap \mathcal P_z) \bigg] \bigg]  \label{using_general_mono_twice}\\
        & \hspace{45ex}+ \zeta(4\frac{\gamma^{c + 1}}{1 - \gamma} \tilde r)+ \zeta(4\frac{\gamma^{c - 1}}{1 - \gamma} \tilde r_z) \nonumber\\
        &= [V_0^{\pi_{comp}}]_z(s_p',\mathcal P_z)  \nonumber\\
        &\hspace{5ex}- \gamma \mathbb{E}_{a_p' \sim [\pi_{0}]_p(\cdot \lvert (s_p', \{p\}))}\bigg[\mathbb{E}_{(s''_p, Z_{comp}(s''_p) \cap \mathcal P_z) \sim P^{\mathcal C}(\cdot \lvert (s_p', \mathcal P_z), a_p')}\bigg[[V_0^{\pi_{comp}}]_z(s_p'', Z_{comp}(s''_p) \cap \mathcal P_z) \bigg] \bigg] \label{using_general_mono_horizon}\\
        & \hspace{45ex}+ \zeta(4\frac{(1 + \gamma^2)}{1 - \gamma}\gamma^{c - 1} \tilde r_{P_z}^c) + \zeta(\gamma^{c + \eta} \tilde r_z)\nonumber \\
        &= [V_0^{\pi_{comp}}]_z(s_p,Z_{exec}(s_p))  \nonumber\\
        &\hspace{5ex}- \gamma \mathbb{E}_{a\sim \pi_{exec}(\cdot\lvert s) }\bigg[\mathbb{E}_{s_{next}\sim  P(\cdot\lvert s,a )}\bigg[[V_0^{\pi_{comp}}]_z([s_{next}]_p, Z_{comp}([s_{next}]_p) \cap Z_{exec}(s_p)) \bigg]  \label{using_general_policy_replace}\\
        & \hspace{25ex}+ \zeta(4\frac{(1 + \gamma^2)}{1 - \gamma}\gamma^{c - 1} \tilde r_{P_z}^c) + \zeta(\gamma^{c + \eta} \tilde r_z) \nonumber\\
        &= [V_0^{\pi_{comp}}]_z(s,Z_{exec}(s))  \nonumber\\
        &\hspace{5ex}- \gamma \mathbb{E}_{a\sim \pi_{exec}(\cdot\lvert s) }\bigg[\mathbb{E}_{s_{next}\sim  P(\cdot\lvert s,a )}\bigg[[V_0^{\pi_{comp}}]_z(s_{next}, Z_{comp}(s_{next}) \cap Z_{exec}(s)) \bigg]  \label{using_general_third_prop}\\
        & \hspace{25ex}+ \zeta(4\frac{(1 + \gamma^2)}{1 - \gamma}\gamma^{c - 1} \tilde r_{P_z}^c) + \zeta(\gamma^{c + \eta} \tilde r_z) \nonumber.
    \end{align}
    In \cref{using_general_mono_twice}  we use \cref{general_z_intersect}
    and \cref{using_general_mono_horizon} we use \cref{general_mono_horizon}. In \cref{using_general_policy_replace} and \cref{using_general_third_prop}, we use the properties described in \cref{extra_properties} which still hold in the generalized setting, together with the definition of $\pi_{exec}$.
     
    We may now plug this back into \cref{using_general_first_given} and sum over $z \in Z_{exec}(s)$.
    \begin{align*}
        &\mathbb{E}_{a \sim \pi_{exec}(\cdot \lvert s)}[r(s, a)] \\
        &\hspace{5ex}= V_0^\pi(s, Z_{exec}(s)) - \gamma \mathbb{E}_{a \sim \pi_{exec}(\cdot \lvert s)}\bigg[\mathbb{E}_{s_{next}\sim P(\cdot \lvert s,a )}\bigg[V_0^{\pi}(s_{next}, Z_{comp}(s_{next}) \cap Z_{exec}(s)) \bigg]\bigg]\\
        &\hspace{25ex}+ \zeta(4n\frac{(1 + \gamma^2)}{1 - \gamma}\gamma^{c - 1} \tilde r) + \zeta(\gamma^{c + \eta} \tilde r)
    \end{align*}
Notice that since in \cref{using_general_policy_replace} we have $\tilde r$ rather than $\tilde r^c_{\mathcal P_z}$ like in the standard version \cref{tele_condition}, summing over $z \in Z_{exec}(s)$ introduces an extra factor of $n$. This is the main difference between the standard and generalized proofs.
Returning to our objective,\\
\begin{align*}
    & \mathbb{E}_{\tau \sim \pi_{exec}}\bigg[r(s(T), a(T))\bigg]\\
    & = \mathbb{E}_{\tau^T \sim \pi_{exec}}\bigg[r(s(T), a(T))\bigg]\\
    & = \mathbb{E}_{s(T), \tau^{T - 1}}\bigg[\mathbb{E}_{\tau^T \sim \pi_{exec}}\bigg[r(s(T), a(T)) \bigg\lvert s(T), \tau^{T - 1}\bigg] \bigg]\\
    & = \mathbb{E}_{s(T), \tau^{T - 1}}\bigg[\mathbb{E}_{\tau^T \sim \pi_{exec}}\bigg[V^{\pi_{comp}}_0(s(T), Z_{exec}(s(T)))  \\
    &\hspace{8ex}- \gamma \mathbb{E}_{a \sim \pi_{exec}(\cdot \lvert s(T))}\bigg[\mathbb{E}_{s_{next}\sim P(\cdot \lvert s(T),a )}\bigg[V^{\pi_{comp}}_0(s_{next}, Z_{comp}(s_{next})\cap Z_{exec}(s(T)))\bigg] \bigg]\bigg\lvert s(T), \tau^{T - 1}\bigg] \bigg]\\
    &\hspace{35ex}+ \zeta(4n\frac{(1 + \gamma^2)}{1 - \gamma}\gamma^{c - 1} \tilde r) + \zeta(\gamma^{c + \eta}\tilde r)\\
    & = \mathbb{E}_{s(T), \tau^{T - 1}}\bigg[\mathbb{E}_{\tau^T \sim \pi_{exec}}\bigg[V^{\pi_{comp}}_0(s(T), Z_{exec}(s(T)))  \\
    &\hspace{1ex}- \gamma \mathbb{E}_{a(T) \sim \pi_{exec}(\cdot \lvert s(T))}\bigg[\mathbb{E}_{s(T + 1)\sim P(\cdot \lvert s(T),a(T) )}\bigg[V^{\pi_{comp}}_0(s(T + 1), Z_{comp}(s(T + 1))\cap Z_{exec}(s(T)))\bigg] \bigg]\bigg\lvert s(T), \tau^{T - 1}\bigg] \bigg]\\
    &\hspace{35ex}+ \zeta(4n\frac{(1 + \gamma^2)}{1 - \gamma}\gamma^{c - 1} \tilde r) + \zeta(\gamma^{c + \eta}\tilde r)\\
    & = \mathbb{E}_{\tau \sim \pi_{exec}}\bigg[V^{\pi_{comp}}_0(s(T), Z_{exec}(s(T))) \bigg] \\
    &\hspace{5ex}- \gamma \mathbb{E}_{\tau \sim \pi_{exec}}\bigg[V^{\pi_{comp}}_0(s(T + 1), Z_{comp}(s(T + 1)\cap Z_{exec}(s(T)))  \bigg]\\
    &\hspace{35ex}+ \zeta(4n\frac{(1 + \gamma^2)}{1 - \gamma}\gamma^{c - 1} \tilde r) + \zeta(\gamma^{c + \eta} \tilde r)\\
\end{align*}

\end{proof}

\begin{theorem}
\label{general_same_policy} 
For any policy $\pi$ in an Locally Interdependent Multi-Agent MDP $\mathcal M$, we overload notation and define $\pi(\cdot\lvert s,\mathcal P) = \pi(\cdot\lvert s)$ for all $\mathcal P$, a potentially improper policy in the corresponding Cutoff Multi-Agent MDP $\mathcal C$. Then we have $\lvert V^\pi(s) - V^{\pi}_0(s, Z_{comp}(s)) \rvert \leq \frac{2\gamma^{c'}}{1 - \gamma} \tilde r$ where $c' = \lfloor \frac{\mathcal V_{comp} - \mathcal R}{2}\rfloor$.

\end{theorem}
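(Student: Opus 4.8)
The plan is to mirror the proof of \cref{same_policy} and to pinpoint the single place where transition dependence costs an extra factor of~$2$. Recall that the extension $\pi(\cdot\lvert s,\mathcal{P})=\pi(\cdot\lvert s)$ makes the policy ignore the partition, and write $\mathcal{P}(t)=\bigcap_{t'=0}^{t}Z_{comp}(s(t'))$, the partition reached in $\mathcal{C}$ after observing $s(0),\dots,s(t)$. The core step is to show that a trajectory $\tau=(s(t),a(t))$ generated by $\pi$ in $\mathcal{M}$ from $s$ and a trajectory $\tau'=((s'(t),\mathcal{P}'(t)),a'(t))$ generated by the improper extension of $\pi$ in $\mathcal{C}$ from $(s,Z_{comp}(s))$ are equivalently distributed up to time $c'$, and that the per-step rewards agree on that horizon.

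I would establish this by induction on $t\le c'$. Any such $\tau$ is a local movement trajectory, so by the Dependence Time Lemma for Rewards~(\cref{dtl}) and for Transitions~(\cref{dtl_transition}), for every $t\le c'$ the dependence partition $D(s(t))$ is finer than $Z_{comp}(s(t'))$ for each $t'\le t$, hence finer than $\mathcal{P}(t)$. Whenever $D(s(t))$ is finer than $\mathcal{P}(t)$ one has $D(s_p(t))=\{g\in D(s(t)):g\subseteq p\}$ for each $p\in\mathcal{P}(t)$, so the generalized Cutoff transition $\prod_{p\in\mathcal{P}(t)}\prod_{g\in D(s_p(t))}P_g$ collapses to $\prod_{g\in D(s(t))}P_g=P(\cdot\lvert s(t),a(t))$, and identically $r^{\mathcal{C}}((s(t),\mathcal{P}(t)),a(t))=r(s(t),a(t))$. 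This drives the inductive step and yields $\mathbb{E}_{\tau\sim\pi\lvert_s}\big[\sum_{t=0}^{c'}\gamma^t r(s(t),a(t))\big]=\mathbb{E}_{\tau'\sim\pi\lvert_{s,Z_{comp}(s)}}\big[\sum_{t=0}^{c'}\gamma^t r^{\mathcal{C}}((s'(t),\mathcal{P}'(t)),a'(t))\big]=:A$, the generalized analog of \cref{using_up_to_c}.

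To finish, I would write $\lvert V^\pi(s)-V^\pi_0(s,Z_{comp}(s))\rvert\le\lvert V^\pi(s)-A\rvert+\lvert A-V^\pi_0(s,Z_{comp}(s))\rvert$. The first term is the discounted $\mathcal{M}$-tail beyond time $c'$, bounded by $\sum_{t>c'}\gamma^t\tilde r=\tfrac{\gamma^{c'+1}}{1-\gamma}\tilde r$; the second is the part of $V^\pi_0$ carried by rewards at times beyond $c'$, each of magnitude at most $\tilde r$, bounded the same way. Adding gives $\tfrac{2\gamma^{c'+1}}{1-\gamma}\tilde r\le\tfrac{2\gamma^{c'}}{1-\gamma}\tilde r$. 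The factor of~$2$ is precisely the price of transition dependence: in the standard setting full independence keeps the $\mathcal{M}$- and $\mathcal{C}$-trajectories identical for \emph{all} time, so $r^{\mathcal{C}}$ is a literal sub-sum of $r$ along a single trajectory and the two tails merge into one $\tfrac{\gamma^{c'+1}}{1-\gamma}\tilde r$ term (the ``bring over rewards'' manipulation in the proof of \cref{same_policy}); in the generalized setting the trajectories may diverge once disconnected groups come within $\mathcal{R}$ — which $\mathcal{C}$ keeps permanently apart but $\mathcal{M}$ lets reconnect — so the tails must be estimated separately.

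I expect the main obstacle to be making the coincidence-up-to-$c'$ step airtight: verifying the identity $D(s_p(t))=\{g\in D(s(t)):g\subseteq p\}$ under the ``finer than'' hypothesis (so that grouping the transition and reward products over $\mathcal{P}(t)$ genuinely reassembles $P$ and $r$), carrying the induction cleanly given that $\pi$ is improper in $\mathcal{C}$, and the routine but fiddly bookkeeping of the truncation terms when $c'$ and the horizon $c+\eta$ fall in either order.
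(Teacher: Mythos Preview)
Your proposal is correct and follows essentially the same route as the paper's proof: establish via the Dependence Time Lemmas for Rewards and Transitions that the $\mathcal{M}$- and $\mathcal{C}$-trajectories coincide in distribution (and in per-step reward) up to time $c'$, then bound the two remaining tails separately to pick up the factor of~$2$. The paper presents the final bookkeeping as an add-and-subtract rearrangement rather than your explicit triangle inequality, but the two are equivalent, and your diagnosis of why the generalized case cannot merge the tails (trajectories may genuinely diverge once transition dependence allows reconnections in $\mathcal{M}$ that $\mathcal{C}$ forbids) matches the paper's reasoning exactly.
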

\begin{proof}
Using the Dependence Time Lemma (\cref{dtl}), we may establish
\begin{align}
\mathbb{E}_{\tau \sim \pi\lvert_s}\bigg[\sum_{t = 0}^{c'} \gamma^t  r(s(t),a(t)) \bigg] = \mathbb{E}_{\tau' \sim \pi\lvert_{s, Z(s)}}\bigg[\sum_{t = 0}^{c'} \gamma^t r^{\mathcal C}((s'(t), P'(t)),a'(t)) \bigg]\label{using_up_to_c}
\end{align}
Consider the trajectories $\tau \sim \pi\lvert_s$ represented by $(s(t), a(t))$ and $\tau' \sim \pi\lvert_{s,Z(s)}$ represented by $((s'(t), \mathcal P'(t)), a'(t))$ where by the definition of the transitions in the Cutoff Multi-Agent MDP, $\mathcal P'(t) = \bigcap_{t' = 0}^t Z_{comp}(s'(t'))$. Similarly we will define $\mathcal P(t) = \bigcap_{t' = 0}^t Z_{comp}(s(t'))$.
 
We firstly claim that the trajectories are equivalently distributed up to time $c'$. Using the Dependence Time Lemma for Transitions, for $t \in \{0, \ldots, c'\}$ and for any $\delta \in \{0,\ldots,t\}$ we have $P(s'' \lvert s(t), a(t)) = \prod_{z\in Z_{comp}(s(t - \delta))}P(s''_z \lvert s_z(t), a_z(t))$ and $Z_{comp}(s(T - \delta))$ is a coarser partition than $D(s)$.
 
Since this is true for all $\delta \in \{0, \ldots, t\}$, the intersections  of $Z_{comp}(s(T - \delta))$ must also be coarser than $D(s)$. This gives 
\begin{align*}
P(s'' \lvert s(t), a(t)) &= \prod_{z\in \bigcap_{t' \in \{0,\ldots,t\}}Z_{comp}(s(t'))}P(s''_z \lvert s_z(t), a_z(t)) \\
&= P^{\mathcal C} ((s'', Z_{comp}(s'') \cap \mathcal P(t))\lvert (s(t), \mathcal P(t)), a(t))
\end{align*}
which is exactly the transition probability for the Cutoff Multi-Agent MDP. This together with the definition of $\pi$ gives the claim that the two trajectories are equivalently distributed up to time step $c$.
 
Secondly, we may show an equivalence in the reward repeating a similar analysis.
Using the Dependence Time Lemma for Rewards, for any $\delta \in \{0,\ldots,t\}$ we have $r(s(t), a(t)) = \sum_{z\in Z_{comp}(s(t - \delta))}r_z(s_z(t), a_z(t))$ and $Z_{comp}(s(T - \delta))$ is a coarser partition than $D(s)$. 
 
Again extending this to the intersection gives\\
\begin{align*}
r(s(t), a(t)) &= \sum_{z\in \bigcap_{t' \in \{0,\ldots,t\}}Z_{comp}(s(t'))}r_z(s_z(t), a_z(t)) \\
&= r^{\mathcal C}((s(t), \mathcal P(t)), a(t))
\end{align*}
which is the reward function for the Cutoff Multi-Agent MDP. This together with the first claim gives the original equality gives \cref{using_up_to_c} as desired.
 
Using this result, we may express the value function as 
    \begin{align}
        &V^\pi(s) = \mathbb{E}_{\tau \sim \pi\lvert_s}\bigg[\sum_{t = 0}^\infty \gamma^t r(s(t),a(t))\bigg]\nonumber\\
         &= \mathbb{E}_{\tau \sim \pi\lvert_s}\bigg[\sum_{t = 0}^{c'} \gamma^t  r(s(t),a(t)) \bigg]+ \mathbb{E}_{\tau \sim \pi\lvert_s}\bigg[\sum_{t = c' + 1}^{\infty} \gamma^t r(s(t),a(t))\bigg]\nonumber\\
         &= \mathbb{E}_{\tau' \sim \pi\lvert_{s, Z_{comp}(s)}}\bigg[\sum_{t = 0}^{c'} \gamma^t r^{\mathcal C}((s'(t), \mathcal P'(t)),a'(t)) \bigg]+ \mathbb{E}_{\tau \sim \pi\lvert_s}\bigg[\sum_{t = c' + 1}^{\infty} \gamma^t r(s(t),a(t))\bigg]\nonumber\\
         &= \mathbb{E}_{\tau' \sim \pi\lvert_{s, Z_{comp}(s)}}\bigg[\sum_{t = 0}^{c + \eta} \gamma^t r^{\mathcal C}((s'(t), \mathcal P'(t)),a'(t))  \bigg]+ \mathbb{E}_{\tau \sim \pi\lvert_s}\bigg[\sum_{t = c' + 1}^{\infty} \gamma^t r(s(t),a(t))\bigg] \label{using_extra_term}\\
         &\hspace{30ex}- \mathbb{E}_{\tau' \sim \pi\lvert_{s, Z_{comp}(s)}}\bigg[\sum_{t = c' + 1}^{c + \eta} \gamma^t r^{\mathcal C}((s'(t), \mathcal P'(t)),a'(t))  \bigg]\nonumber\\
         &\leq V^\pi_0(s, Z_{comp}(s)) + \zeta\bigg(\frac{2\gamma^{c'}}{1 - \gamma}\tilde r\bigg).\nonumber
    \end{align}
    Where in \cref{using_extra_term}, we assume that the second term is $0$ when $c' + 1 > c + \eta$.
    \\\\
    Therefore, $\lvert V^\pi(s) - V^{\pi}_0(s, Z_{comp}(s)) \rvert \leq \frac{2\gamma^{c'}}{1 - \gamma} \tilde r$.
\end{proof}

\end{document}